   \newtheorem{lem}{\protect\lemmaname}
   \newtheorem{theorem}{Theorem}
   \newtheorem{remark}{Remark}
  \newtheorem*{septhm}{Theorem \ref{th:sep-proof-idea}}
\newcommand{\boundedGraph}{\ensuremath{G}}
\newcommand{\blank}{\ensuremath{\mathrm{blank}}}
\newcommand{\colors}{\ensuremath{\mathrm{colors}}}
\newcommand{\functionFormula}{\ensuremath{\delta}}
\newcommand{\containmentFormula}{\ensuremath{\epsilon}}
\newcommand{\definitionFormula}{\ensuremath{\zeta}}
\newcommand{\colorFormula}{\ensuremath{\eta}}
\newcommand{\gaifmannFormula}{\ensuremath{\theta}}
\newcommand{\bounded}{\ensuremath{\mathit{bnd}}}
\newcommand{\unbounded}{\ensuremath{\mathit{unb}}}
\newcommand{\tvlc}{\ensuremath{{C}^2}}
\newcommand{\mso}{\mathrm{MSO}}
\newcommand{\gaifmannRelation}{\ensuremath{R}}
\newcommand{\signatureOrientedKTree}{\ensuremath{\mathcal{R}}}
\newcommand{\signatureDefs}{\ensuremath{\mathcal{A}}}
\newcommand{\signatureCTwo}{{\ensuremath{\mathcal{C}_\unbounded}}}
\newcommand{\signatureMSO}{{\ensuremath{\mathcal{C}_\bounded}}}
\newcommand{\signatureCTwoExtended}{{\ensuremath{\mathcal{D}_\unbounded}}}
\newcommand{\signatureMSOextended}{{\ensuremath{\mathcal{D}_\bounded}}}
\newcommand{\signatureMain}{\ensuremath{\mathcal{E}}}
\newcommand{\signatureP}{{\ensuremath{\mathcal{P}}}}
\newcommand{\copySymbol}[1]{\ensuremath{\overline{#1}}}
\newcommand{\signatureIntersection}{\ensuremath{\mathcal{B}}}
\newcommand{\intersectionRelation}{\ensuremath{B}}
\newcommand{\intersectionRelationCopy}{{\ensuremath{\copySymbol{\intersectionRelation}}}}
\newcommand{\formulaCTwo}{\ensuremath{\beta}}
\newcommand{\formulaMSO}{\ensuremath{\alpha}}
\newcommand{\formulaMSOcopy}{\ensuremath{\psi}}
\newcommand{\formulaCTwoExtended}{\ensuremath{\beta'}}
\newcommand{\formulaMSOextended}{\ensuremath{\alpha'}}
\newcommand{\model}{\ensuremath{\mathfrak{M}}}
\newcommand{\modelA}{\ensuremath{\mathfrak{A}}}
\newcommand{\modelUniverse}{\ensuremath{M}}
\newcommand{\modelAlt}{\ensuremath{\mathfrak{L}}}
\newcommand{\modelFinal}{\ensuremath{\mathfrak{N}}}
\newcommand{\modelHat}{{\ensuremath{\widehat{\model}}}}
\newcommand{\modelTilde}{{\ensuremath{\widetilde{\model}}}}
\newcommand{\rank}{\ensuremath{\mathit{rank}}}
\newcommand{\oneType}{\ensuremath{\pi}}
\newcommand{\twoType}{\ensuremath{\lambda}}
\newcommand{\Types}{\ensuremath{\text{-Types}}}
\newcommand{\MsgTypes}[1]{\ensuremath{#1\text{-MsgTypes}}}
\newcommand{\oneTypes}{\ensuremath{\text{1}\Types}}
\newcommand{\twoTypes}{\ensuremath{\text{2}\Types}}
\newcommand{\messageTypeNumber}{\ensuremath{l}}
\newcommand{\messageTypeCTwo}{\ensuremath{S}}
\newcommand{\messageAlphabetCTwo}{\ensuremath{\mathcal{S}}}
\newcommand{\messageType}{\ensuremath{T}}
\newcommand{\messageAlphabet}{\ensuremath{\mathcal{T}}}
\newcommand{\oneTypeOf}{\ensuremath{\text{1-} \mathfrak{tp}}}
\newcommand{\twoTypeOf}{\ensuremath{\text{2-} \mathfrak{tp}}}
\newcommand{\typeFormula}{\ensuremath{\mathfrak{char}}}
\newcommand{\typeFormulaConjunct}{\ensuremath{\iota}}
\newcommand{\adjunctFormula}{\ensuremath{\mu}}
\newcommand{\degree}{\ensuremath{\mathit{deg}}}
\newcommand{\messageGraph}{\ensuremath{G}}
\newcommand{\outDegree}{\ensuremath{\degree^+}}
\newcommand{\ar}{\mathit{arity}}
\newcommand{\hin}{\mathrm{hin}}
\newcommand{\down}{\mathrm{inv}}
\newcommand{\self}{\mathrm{self}}
\newcommand{\transLONG}{\mathsf{\mathbf{interpret}}}
\newcommand{\trans}{\mathsf{\mathbf{i}}}
\newcommand{\tr}{\mathsf{\mathbf{tr}}}
\newcommand{\dpath}{\mathrm{dpath}}
\newcommand{\gaif}{\mathrm{Gaif}}
\newcommand{\MS}{\mathrm{MSO}}
\newcommand{\CMS}{\mathrm{CMSO}}
\newcommand{\reach}{\mathrm{reach}}
\newcommand{\reduLONG}{\mathsf{\mathbf{undecorate}}}
\newcommand{\redu}{\mathsf{\mathbf{u}}}
\newcommand{\rt}{\mathrm{root}}
\newcommand{\qr}{\mathrm{qr}}
\newcommand{\labLONG}{\mathsf{\mathbf{structurize}}}
\newcommand{\lab}{\mathsf{\mathbf{s}}}
\newcommand{\unary}{\mathrm{un}}
\newcommand{\binary}{\mathrm{bin}}
\renewcommand{\part}{\mathrm{part}}
\newcommand{\leaf}{\mathrm{leaf}}
\newcommand{\internal}{\mathrm{int}}
\renewcommand{\t}{\mathsf{\mathbf{t}}}
\newcommand{\Op}{\stackrel{\to}{\circ}}
\newcommand{\allleaves}{\mathrm{leaves}}
\newcommand{\allinternal}{\mathrm{ints}}
\newcommand{\thisver}{\mathrm{this}}
\newcommand{\intdist}{\mathrm{intdist}}
\newcommand{\intsame}{\mathrm{intsame}}
\newcommand{\child}{\mathrm{child}}
\newcommand{\twochildren}{\mathrm{children}}
\newcommand{\tw}{\mathrm{tw}}
\newcommand{\WSone}{\mathrm{WS1S}}
\newcommand{\inj}{\mathrm{inj}}
\newcommand{\dom}{\mathrm{dom}}
\newcommand{\img}{\mathrm{img}}
\newcommand{\suc}{\mathrm{suc}}
\renewcommand{\u}{\mathsf{\mathbf{u}}}
\newcommand{\Pdef}{\mathrm{suc}\mbox{-}\mathrm{rel}}
\newcommand{\card}{\mathrm{card}}
\newcommand{\psihin}{\Theta}
\newcommand{\enc}{\mathrm{enc}}
\newcommand{\D}{\mathrm{Label}}
\newcommand{\arb}{\mathrm{arb}}
\newcommand{\cardLogic}{\mathrm{\MS^{\exists card}}}
\newcommand{\mfA}{\mathfrak{A}}
\newcommand{\mfB}{\mathfrak{B}}
\newcommand{\mfM}{\mathfrak{M}}
\newcommand{\mfN}{\mathfrak{N}}
\newcommand{\mfO}{\mathfrak{O}}
\newcommand{\mfT}{\mathfrak{T}}
\newcommand{\mfP}{\mathfrak{P}}
\newcommand{\mfR}{\mathfrak{R}}
\renewcommand{\P}{\mfP}
\newcommand{\M}{\mfM}
\newcommand{\mcN}{\mathcal{N}}
\newcommand{\mcR}{\mathcal{R}}
\newcommand{\mcC}{\mathcal{C}}
\newcommand{\mcD}{\mathcal{D}}
\newcommand{\mcU}{\mathcal{U}}
\newcommand{\mcV}{\mathcal{V}} 
\renewcommand{\Xi}{\mcN}
\newcommand{\K}{\mathscr{K}}
\renewcommand{\H}{\mathscr{H\scriptstyle{IN}}}
\providecommand{\lemmaname}{Lemma}
\title{Monadic second order finite satisfiability and unbounded tree-width
}
\author{Tomer Kotek, Helmut Veith, Florian Zuleger}
\address{TU Vienna}
\email{\{kotek,zuleger@forsyte.at\}}
      \def\@setcopyright{}
      \def\serieslogo@{}
\begin{document}

\begin{abstract}
The finite satisfiability problem of monadic second order logic is
decidable only on classes of structures of bounded tree-width by the
classic result of Seese~\cite{ar:Seese91}.  We prove that the following problem
is decidable: 
\begin{quote}
\textbf{Input:} (i) A monadic second order logic sentence $\alpha$,
and (ii) a sentence $\beta$ in the two-variable fragment of first
order logic extended with counting quantifiers. The vocabularies of
$\alpha$ and $\beta$ may intersect. {\small \par}

\textbf{Output:} Is there a finite structure which satisfies $\alpha\land\beta$
such that the restriction of the structure to the vocabulary of $\alpha$
has bounded tree-width? (The tree-width of the desired structure is
not bounded.){\small \par}
\end{quote}
As a consequence, we prove the decidability of the satisfiability
problem by a finite structure of bounded tree-width of a logic $\cardLogic$ extending
monadic second order logic with linear cardinality constraints of
the form $|X_{1}|+\cdots+|X_{r}|<|Y_{1}|+\cdots+|Y_{s}|$
on the variables $X_i$, $Y_j$ of the outer-most quantifier block. 
We prove the decidability of a similar extension of WS1S. 
\end{abstract}

\maketitle

 \section{Introduction}

Monadic second order logic ($\MS$) is among the most expressive logics with good algorithmic properties. It has found countless
applications in computer science in diverse areas ranging from verification and automata theory~\cite{bk:mona,ar:Madhusudan05,bk:Thomas97}
to combinatorics~\cite{ar:KotekMakowskyLMCS14,bk:Lovasz12}, and parameterized complexity theory~\cite{bk:GroheFlum06,bk:CourcelleEngelFriet12}. 

The power of $\MS$ is most visible over graphs of bounded tree-width, and with second order quantifiers ranging over sets of edges\footnote{The 
logic we denote by $\MS$ is denoted $\mathrm{MS}_2$ by Courcelle and Engelfriet~\cite{bk:CourcelleEngelFriet12}.}: 
(1) Courcelle's famous theorem shows that $\MS$ model checking is decidable over graphs of bounded tree-width in linear time~\cite{ar:Courcelle90,ar:ArnborgEtAl}. 
(2) Finite satisfiability by graphs of bounded tree-width is decidable~\cite{ar:Courcelle90} (with non-elementary complexity) -- thus contrasting Trakhtenbrot's 
undecidability result of first order logic.
(3) Seese proved~\cite{ar:Seese91} that for each class $\K$ of graphs with \emph{unbounded} tree-width, 
finite satisfiability of $\MS$ by graphs in $\K$ is undecidable. 
Together, (2) and (3) give a fairly clear picture of the decidability of finite satisfiability of $\MS$.
It appeared that (3) gives a natural limit for decidability of $\MS$ on graph classes. 
For instance, finite satisfiability on planar graphs is undecidable because their tree-width is unbounded.

While Courcelle and Seese circumvent Trakhtenbrot's undecidability result by \emph{restricting the classes of graphs (or relational structures)}, 
several other research communities have studied syntactic restrictions of first order logic. Modal logic~\cite{ar:Vardi96}, 
many temporal logics~\cite{ar:Pnueli77},~\cite[Chapter 24]{handbook-AR}, the guarded fragment~\cite{ar:Gradel99}, 
many description logics~\cite{handbook-DL}, and the two-variable fragment~\cite{ar:GO99} are restricted first order logics with 
decidable finite satisfiability, and hundreds of papers on these topics have explored the border between decidability and undecidability.
While many of the earlier papers exploited variations of the tree model property to show decidability, recent research has also focused on logics 
such as the two-variable fragment with counting $C^2$~\cite{ar:GOR97,ar:PH05}, where finite satisfiability is decidable 
despite the absence of the tree model property. In a recent breakthrough result, Charatonik and Witkowski~\cite{pr:CW13} have extended this result to structures 
with built-in binary trees. 
Note that this logic is not a fragment of first order logic, but more naturally understood as a very weak second order logic which can express one specific 
second order property -- the property of being a tree.

Our main result is a powerful generalization of the seminal result on decidability of the satisfiability problem of $\MS$ over bounded tree-width and 
the recent theorem by~\cite{pr:CW13}: We show decidability of finite satisfiability of conjunctions 
$\alpha \wedge \beta$ where $\alpha$ is in $\MS$ and $\beta$ is in $C^2$ by a finite structure $\M$ 
whose restriction to the vocabulary of $\alpha$ has bounded tree-width. (Theorem~\ref{th:main} in Section~\ref{se:overview})

Let us put this result into perspective:

\begin{itemize}
\item The $\MS$ decidability problem is a trivial consequence by setting $\beta$ to true; Charatonik and Witkowski's result follows by choosing $\alpha$ 
to be an $\MS$ formula which axiomatizes a $d$-ary tree, which is a standard construction~\cite{bk:CourcelleEngelFriet12}.
\item The decidability of \emph{model checking} $\alpha \wedge \beta$ over a finite structure is a much simpler problem than ours: 
We just have to model check $\alpha$ and $\beta$ one after the other. In contrast, satisfiability is not obvious because 
$\alpha$ and $\beta$ can share relational variables.
running two finite satisfiability algorithms for the two formulas independently
may tield two models which disagree on the shared vocabulary. Thus, the problem we consider is similar in spirit 
to (but technically very different from) Nelson-Oppen~\cite{ar:NO79} combinations of theories.
\item Our result trivially generalizes to Boolean combinations of sentences in the two logics.
\end{itemize}

\subsection*{Proof Technique.}
We show how to reduce our satisfiability problem for  $\alpha\land\beta$ 
to the finite satisfiability of a $C^2$-sentence with a built-in tree, which is decidable by~\cite{pr:CW13}. 
The most significant technical challenge is to eliminate shared binary relation symbols between
$\alpha'$ and $\beta$. Our Separation Theorem overcomes this challenge by an elegant construction 
based on local types of universe elements and a coloring argument for directed graphs. 
The second technical challenge is
to  replace the $\MS$-sentence $\alpha$ with an equi-satisfiable 
$C^2$-sentence $\alpha'$. To do so, we apply tools including the Feferman-Vaught theorem for $\MS$ and translation schemes. 

\subsection*{Monadic Second Order Logic with Cardinalities.}
Our main theorem imply new decidability results for monadic second order logic with cardinality constraints, i.e., expressions of the form
$|X_1| + \ldots |X_r| < |Y_1| + \ldots |Y_t|$ where the $X_i$ and $Y_i$ are monadic second order variables. 
Klaedtke and Rue\ss~\cite{ar:KR03} showed that the decision problem for the theory of weak monadic second order logic with cardinality constraints of 
one successor ($\WSone^{\card}$) is undecidable; 
they describe a decidable fragment where the second order quantifiers have no alternation and appear \emph{after} the first order quantifiers in the prefix. 
Our main theorem implies decidability of a different fragment of WS1S with cardinalities: The fragment $\cardLogic$ consists of formulas $\exists \bar{X} \psi$ where 
the cardinality constraints in $\psi$ involve only the monadic second order variables from $\bar{X}$, cf.~Theorem~\ref{th:cardinality} in Section~\ref{se:cardinality}. 
Note that in contrast to~\cite{ar:KR03}, our fragment is a strict superset of WS1S.

For WS2S, we are not aware of results about decidable fragments with cardinalities. We describe a strict superset 
of $\MS$ whose satisfiability problem over finite graphs of bounded tree-width is decidable, and which is syntactically similar to the WS1S extension above.


\subsection*{Expressive Power over Structures.} 
Our main result extends the existing body of results on finite satisfiability by structures of bounded tree-width 
to a significantly richer set of structures. The structures we consider are $C^2$-axiomatizable 
extensions of structures of bounded tree-width. For instance, 
we can have interconnected doubly-linked lists as in Fig.~\ref{fig:expressive-graphs}(a), or a tree whose leaves are connected in a chain
and have edges pointing to 
any of the nodes of a cyclic list as in Fig.~\ref{fig:expressive-graphs}(b).
Such structures occur very naturally as shapes of dynamic data structures in programming 
-- where cycles and trees are containers for data, and additional edges express relational information between the data. 
The analysis of semantic relations between data structures served as 
a motivation for us to investigate the logics in the current paper~\cite{pr:iFM14}.

Being a cyclic list or a tree whose leaves are chained can be expressed in $\MS$
and both of these data structures have tree-width at most $3$. 
We can compel the edges between the tree and the cyclic list to obey 
$C^2$-expressible constraints such as: 
\begin{itemize}
\item every leaf of the tree has a single edge to the cyclic list;
\item every node of the cyclic list has an incoming edge from at least one 
leaf of the tree; or
\item any two leaves pointing to the same node of the cyclic list agree on membership in some unary relation. 
\end{itemize}
Note that while the structures we consider may contain grids of unbounded sizes as subgraphs, the logic cannot axiomatize them. 

\begin{figure}
 \centering
 \includegraphics[scale=0.8]{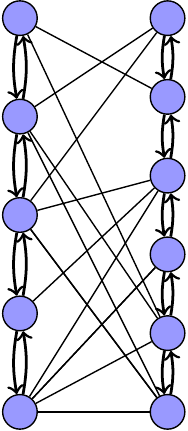}
   \ \ \ \ \ \ \ \ \ \ \ \ 
   \includegraphics[scale=0.8]{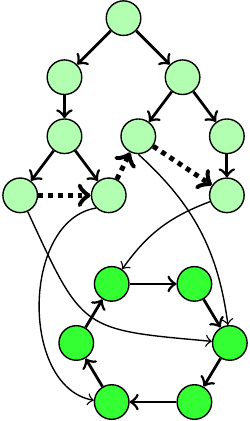}
   \\
   (a)  \ \ \ \ \ \ \  \ \ \ \ \ \ \ \ \ \ \  \ \ \ \ \ \ (b)
   
   \caption{\label{fig:expressive-graphs}}
 \end{figure}
 \begin{figure}
  \includegraphics[angle=90,scale=0.6]{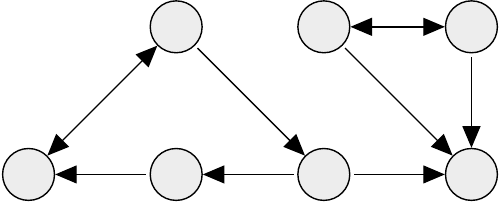}
   \ \ \ \ \ \ \ 
   \includegraphics[angle=90,scale=0.6]{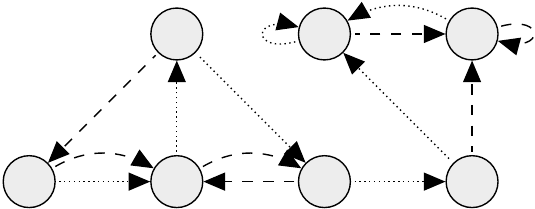}
   \\
   (a)  \ \ \ \ \ \ \  \ \ \ \ \  \ \ \ (b)
   
   \caption{\label{fig-oriented-new}   }
\end{figure}

\section{Background}\label{se:background}

This section introduces basic definitions and results in model theory and graph theory. We follow~\cite{ar:MakowskyTARSKI} and~\cite{bk:CourcelleEngelFriet12}. 

The {\em\bf two-variable fragment with counting} $C^2$ is the extension of the two-variable fragment of first order logic
with first order counting quantifiers $\exists^{\leq n}$, $\exists^{\geq n}$, $\exists^{=  n}$, for every $n\in\mathbb{N}$. 
Note that $C^2$ remains a fragment of first order logic. 
{\em\bf Monadic Second order logic $\MS$} 
is the extension of first order logic with set quantifiers which can range over elements of the universe
or subsets of relations\footnote{On relational structures, $\MS$ is also known as \emph{ Guarded Second Order logic $\mathrm{GSO}$}.
The results of this paper extend to $\CMS$, the extension of $\MS$ with modular counting quantifiers.}. 
Throughout the paper all structures consist of unary and binary relations only.
Structures are finite unless explicitly stated otherwise (in the discussion of $\WSone$). 
Let $\mcC$ be a vocabulary (signature). 
The {\em\bf arity} of a relation symbol $C\in \mcC$ is denoted by $\ar(C)$.
The set of unary (binary) relation symbols in $\mcC$ are {\em\bf $\unary(\mcC)$ ($\binary(\mcC)$)}. 
We write $\MS(\mcC)$ for the set of $\MS$-formulas on the vocabulary $\mcC$. 
The quantifier rank of a formula $\varphi\in \MS$, i.e.\ the maximal depth of nested quantifiers in $\varphi$ is denoted $\qr(\varphi)$. 
We denote by $\mfA_1 \sqcup\mfA_2$ the {\em\bf disjoint union} of two $\mcC$-structures $\mfA_1$ and $\mfA_2$. 
Given vocabularies $\mcC_1\subseteq \mcC_2$, a $\mcC_2$-structure $\mfA_2$ is an {\em\bf expansion} of a $\mcC_1$-structure $\mfA_1$ if $\mfA_1$ and $\mfA_2$ 
agree on the symbols in $\mcC_1$; in this case $\mfA_1$ is the {\em\bf reduct} of $\mfA_2$ to $\mcC_1$, i.e.\ $\mfA_1$ is 
the $\mcC_1$-reduct of $\mfA_2$. We denote the reduct of $\mfA_2$ to $\mcC_1$ by $\mfA_2|_{\mcC_1}$.
A $\mcC$-structure $\mfA_0$ with universe $A_0$ is a {\em\bf substructure} of a $\mcC$-structure $\mfA_1$ with universe $A_1$ 
if $A_0\subseteq A_1$ and for every $R\in\mcC$, $R^{\mfA_0} = R^{\mfA_1}\cap A_0^{\ar(R)}$. 
We say that $\mfA_0$ is the substructure of $\mfA_1$ generated by $A_0$.

{\em\bf Graphs} are structures of the vocabulary\footnote{Since we explicitly allowed quantification over subsets of relations for $\MS$, 
we do not view graphs and structures as incidence structures, in contrast to~\cite[Sections 1.8.1 and 1.9.1]{bk:CourcelleEngelFriet12}.  } 
$\mcC_G=\left\langle s \right\rangle$ consisting of a single binary relation symbol $s$. Graphs are simple and undirected
unless explicitly stated otherwise. 
{\em\bf Tree-width} $\tw(G)$ is a graph parameter indicating how close a simple undirected graph $G$ is to being a tree, cf.~\cite{bk:CourcelleEngelFriet12}. 
It is well-known that a graph has tree-width at most $k$ iff it is a partial $k$-tree. 
A {\em\bf partial $k$-tree} is a subgraph  of a $k$-tree.
{\em\bf $k$-trees} are built inductively from the $(k+1)$-clique by repeated addition of vertices, 
each of which is connected with $k$ edges to a $k$-clique. 
The {\em\bf  Gaifman graph $\gaif(\mfA)$} of a $\mcC$-structure $\mfA$ is
the graph whose vertex set is the universe of $\mfA$ and whose edge set is the union of 
the symmetric closures of $C^\mfA$ for every $C\in \binary(\mcC)$. 
Note the unary relations of $\mfA$ play no role in $\gaif(\mfA)$.
The {\em\bf tree-width $\tw(\mfA)$ of a $\mcC$-structure} $\mfA$ is the tree-width of its Gaifman graph. 
In this paper, tree-width is a parameter of \emph{finite} structures only. 
Fix $k\in\mathbb{N}$ for the rest of the paper. $k$ will denote the tree-width bound we consider. 

We introduce the notion of {\em\bf oriented $k$-trees} which refines the notion of $k$-trees. 
Let $\mcR = \{R_1,\ldots,R_k\}$ be a vocabulary consisting of binary relation symbols. 
An oriented $k$-tree is an $\mcR$-structure $\mfR$ in which all $R_i^\mfR$ are total functions
and whose Gaifman graph $\gaif(\mfR)$ is a partial $k$-tree. 
\begin{lem}\label{lem:oriented}
 Every $\mcC$-structure $\M$ of tree-width $k$ can be expanded into a $(\mcC\cup\mcR)$-structure $\mfN$
 such that:
 \begin{inparaenum}[(i)]
  \item $\mfN|_{\mcR}$ is an oriented $k$-tree,
  \item $\gaif(\mfN)$ is a subgraph of $\gaif(\mfN|_\mcR)$, and
  \item the tree-width of $\mfN$ is $k$. 
 \end{inparaenum}
\end{lem}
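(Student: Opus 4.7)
The plan is to embed $\gaif(\M)$ into a $k$-tree $H$ on the same vertex set $M$ and then use the recursive construction of $H$ to define total functions $R_1^{\mfN},\ldots,R_k^{\mfN}$ whose induced symmetric edge set coincides with that of $H$. Since $\M$ has tree-width at most $k$, $\gaif(\M)$ is a partial $k$-tree, hence a subgraph of some $k$-tree $H$ on $M$. Fix such an $H$ together with a construction order $v_1,\ldots,v_n$ witnessing that $H$ is a $k$-tree: $v_1,\ldots,v_{k+1}$ form the initial $(k+1)$-clique, and for each $i\ge k+2$, $v_i$ is joined by exactly $k$ edges to a $k$-clique $\{u_1^i,\ldots,u_k^i\}\subseteq\{v_1,\ldots,v_{i-1}\}$.

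I then define the functions as follows. For $i\ge k+2$, set $R_j^{\mfN}(v_i):=u_j^i$. For $1\le i\le k+1$, enumerate $\{v_1,\ldots,v_{k+1}\}\setminus\{v_i\}$ in any fixed order as $w_1^i,\ldots,w_k^i$ and set $R_j^{\mfN}(v_i):=w_j^i$. Each $R_j^{\mfN}$ is then a total function $M\to M$, and after symmetrization $\bigcup_j R_j^{\mfN}$ covers every edge of $H$: edges within the initial clique are witnessed by the $w_j^i$'s, while for $i\ge k+2$ each of the $k$ edges from $v_i$ to its clique is witnessed by some $R_j^{\mfN}$.

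The three conditions are then immediate. For (i), $\gaif(\mfN|_\mcR)$ sits inside $\gaif(H)$ by construction and contains every edge of $H$ by the previous paragraph, so $\gaif(\mfN|_\mcR)=\gaif(H)$ is itself a $k$-tree and in particular a partial $k$-tree; combined with the totality of the $R_j^{\mfN}$, this makes $\mfN|_\mcR$ an oriented $k$-tree. For (ii), $\gaif(\mfN)=\gaif(\M)\cup\gaif(\mfN|_\mcR)\subseteq\gaif(H)=\gaif(\mfN|_\mcR)$. For (iii), $\gaif(\mfN)\subseteq\gaif(H)$ gives $\tw(\mfN)\le k$. No serious obstacle arises; the only delicate point is defining the $R_j^{\mfN}$ on the initial $(k+1)$-clique, which works precisely because each vertex of the clique has exactly $k$ other clique-members available to absorb its $k$ functional pointers.
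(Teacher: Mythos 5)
Your proof is correct and follows essentially the same route as the paper: both define $R_1,\ldots,R_k$ by following the inductive construction order of a $k$-tree containing $\gaif(\M)$, pointing each newly added vertex at the $k$-clique it attaches to and handling the initial $(k+1)$-clique separately. The only difference is cosmetic: the paper allows the ambient $k$-tree to have extra (eliminated) vertices and patches the resulting undefined values with self-loops $R_i(u)=u$, whereas you take a spanning $k$-tree on the universe itself --- which is legitimate since $\tw(\M)=k$ forces $|M|\ge k+1$, and every graph of tree-width at most $k$ on at least $k+1$ vertices is a spanning subgraph of a $k$-tree on the same vertex set.
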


The oriented $2$-tree in Fig.~\ref{fig-oriented-new}(b) is an expansion 
 of the directed graph in Fig.~\ref{fig-oriented-new}(a)  as guaranteed in Lemma~\ref{lem:oriented}. 
 In Fig.~\ref{fig-oriented-new}(b), $R_1$ and $R_2$
 are denoted by the dashed arrows and the dotted arrows, respectively. 
 There are several other oriented $k$-trees which expand Fig.~\ref{fig-oriented-new}(a) and fulfill the requirements in Lemma~\ref{lem:oriented},

To see that Lemma~\ref{lem:oriented} holds, we describe a construction of $\mfN$ echoing the process of constructing $k$-trees above. 
For each vertex $u$ of the initial $(k+1)$-clique, we can set the values of $R_1^\mfN(u),\ldots,R_k^\mfN(u)$
to be the other $k$ vertices of the clique. When a new vertex $u$ is added to the $k$-tree, $k$ edges incident to it are added. 
We set $R_1^\mfN(u),\ldots,R_k^\mfN(u)$ to be the set of vertices incident to $u$. 
For oriented $k$-trees whose Gaifman graph is not a $k$-tree 
the construction of an oriented $k$-tree is augmented by changing the value of $R_i^\mfN(u)$ 
to $R_i^\mfN(u)=u$ whenever $R_i^\mfN(u)$ is not well-defined. This can happen when the target of $u$ under 
$R_i^\mfN$ is a vertex which was eliminated by taking the subgraph of a $k$-tree to obtain the partial $k$-tree. 

\section{Overview of the Main Theorem and its Proof}\label{se:overview}

The precise statement of the main theorem is as follows:
\begin{theorem}[Main Theorem]\label{th:main} Let $\mcC_{\bounded}$ and $\mcC_{\unbounded}$ be
vocabularies. Let $s$ be a binary relation symbol not in $\mcC_\bounded\cup \mcC_\unbounded$.
Let $\alpha\in\MS(\mcC_{\bounded})$ and $\beta\in C^{2}(\mcC_{\unbounded})$.
There is an effectively computable
 sentence $\delta\in C^{2}(\mcD)$ over a  vocabulary $\mcD\supseteq\{s\}$
such that the following are equivalent:
\begin{enumerate}[(i)]
 \item There is a $(\mcC_{\bounded}\cup\mcC_{\unbounded})$-structure
$\M$ such that $\M\models\alpha\land\beta$ and $\tw(\M|_{\mcC_{\bounded}})\leq k$.
\item There is a $\mcD$-structure
$\mfN$ such that $\mfN\models\delta$ and $s^{\mfN}$ is a binary tree.
\end{enumerate}
\end{theorem}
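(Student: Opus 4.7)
My plan is to reduce the problem to the decidable finite satisfiability problem for $C^2$-sentences interpreted over structures with a built-in binary tree, as established by Charatonik and Witkowski~\cite{pr:CW13}. The overall strategy has two stages. In the first stage I replace the $\MS$-sentence $\alpha$ by an equisatisfiable $C^2$-sentence $\alpha'$ that talks only about the tree-width-bounded part together with the auxiliary tree relation $\binaryTree$. In the second stage I combine $\alpha'$ with $\beta$ into a single $C^2$-sentence $\delta$; here the shared binary symbols from $\mcC_\bounded\cap\mcC_\unbounded$ are the crux of the argument and the place where I expect to invest most of the work.

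For the first stage, I apply Lemma~\ref{lem:oriented} to any candidate model $\M$ of $\alpha\land\beta$ to expand it so that its $\mcC_\bounded$-reduct is contained in an oriented $k$-tree $\mfN|_{\mcR}$ presented by $k$ total functions $R_1,\ldots,R_k$. Since oriented $k$-trees admit an $\MS$-interpretation in the class of binary trees, $\alpha$ can be restated as an $\MS$-sentence on a binary tree $\binaryTree$ decorated by unary labels representing its universe elements. The classical equivalence between $\MS$ on trees and tree automata then lets me replace this $\MS$-sentence by a guess of a regular run, represented by fresh unary predicates $P_1,\ldots,P_m$, whose consistency is checked by local first-order constraints referring only to parent, child, and the $R_i$-neighbours; the Feferman--Vaught theorem together with translation schemes handle the composition of local types along the decomposition. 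The output is a $C^2$-sentence $\alpha'$ over a vocabulary containing $\binaryTree$, the $R_i$, the $P_i$, and the original symbols of $\mcC_\bounded$.

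The main obstacle is the second stage: naively forming $\alpha'\land\beta$ is unsound when the two sentences share binary symbols, since two independently constructed models may disagree on them, in a situation analogous to Nelson--Oppen combinations of theories. To overcome this I invoke what the introduction calls the Separation Theorem. The idea is to give each shared binary symbol two private copies, one controlled by $\alpha'$ and one by $\beta$, and to enforce their equality by demanding that along every edge the endpoints expose a sufficiently fine local coloring that uniquely determines the truth of every shared atom. The coloring is guided by the oriented-$k$-tree functions $R_i$, which orient and bound the binary atoms coming from the $\alpha'$-side; a pigeonhole/coloring argument on the induced directed message-passing graph ensures that only boundedly many colors suffice, so that the whole agreement requirement is expressible in $C^2$.

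Once the separation step is carried out, $\delta$ is assembled as the conjunction of $\alpha'$, of $\beta$ rewritten to use the private copy, of the $C^2$-axioms describing the colorings and their agreement, and of the axiom that $\binaryTree$ denotes a binary tree. Equivalence of (i) and (ii) then follows by a standard back-and-forth: from (i) use Lemma~\ref{lem:oriented}, fix a tree decomposition to obtain $\binaryTree$, guess the tree-automaton run to obtain the $P_i$, and install the separated copies and their colorings to obtain a model of $\delta$; conversely, from (ii) obtain $\M$ by forgetting the auxiliary vocabulary and identifying the two copies of each shared symbol, with well-definedness of the identification guaranteed by the color-agreement constraints. The step I expect to be most delicate is the Separation Theorem itself, since $C^2$ has only two-variable access to the structure while the property being enforced is global agreement of two relations across a graph whose overall tree-width is not bounded.
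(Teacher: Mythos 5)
Your overall architecture (reduce to Charatonik--Witkowski via a tree interpretation of the bounded-tree-width part, a Hintikka/automaton conversion of $\MS$ to $C^2$ on labeled trees, and a separation device for shared binary symbols) matches the paper's, but there are two genuine gaps. First, the order of your two stages is the reverse of the paper's, and the reversal is not innocuous. The tree interpretation must treat the binary relations of $\mcC_\bounded$ and of $\mcC_\unbounded$ asymmetrically: the former are absorbed into the tree via $\MS$-formulas (recovering the orientation functions $R_j$ from the tree requires a reachability formula, which is not in $C^2$), while the latter must be translated quantifier-freely so that $\beta$ stays in $C^2$. A binary symbol shared by $\alpha$ and $\beta$ would have to be translated both ways at once, which is impossible; this is exactly why the paper applies the Separation Theorem \emph{before} the interpretation, reducing to the case $\mcC_\bounded\cap\mcC_\unbounded$ unary-only, which is a hypothesis of Lemma~\ref{lem:MSO-tree-interp-idea}. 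In your ordering, after stage~1 the $\alpha$-side incarnation of a shared symbol lives only implicitly in the tree encoding (or requires the $R_i$ as explicit relations whose consistency with $s$ is not $C^2$-axiomatizable), so the object you would need to ``separate'' from $\beta$'s copy in stage~2 is no longer a binary relation at all.

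Second, and more fundamentally, your account of the separation step is not a proof. You claim the local colorings ``uniquely determine the truth of every shared atom'' so that equality of the two copies can be \emph{enforced} by $C^2$-axioms over a shared unary vocabulary. This is false: knowing the $1$-types of $x$ and $y$ and that $x$ sends some edge of a given $2$-type to an element of $y$'s color does not determine whether that edge goes to $y$. Indeed, if equality of two binary relations were enforceable through shared unary predicates, the Separation Theorem would be trivial. The paper's proof instead \emph{allows} the two copies to disagree in a model $\modelFinal$ of the separated conjunction and then performs model surgery: it defines a rank counting the disagreements between the orientation functions in $\modelAlt=\modelFinal|_{\signatureCTwoExtended}$ and $\modelAlt'$ (the renamed copy), and repeatedly swaps edges --- preserving the set of realized $2$-types (hence the universal part of $\beta$), $\messageAlphabet$-functionality, and chromaticity --- until the rank reaches zero (Lemmas~\ref{lem:sequence}, \ref{lem:type-agreement}, \ref{lem:no-deviation}). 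The chromaticity and the unary $2$-type annotations $P_\twoType$ are what make each swap well-defined and rank-decreasing; they do not by themselves force agreement. Your proposal contains no counterpart of this construction, and the direction (ii)$\Rightarrow$(i) of the theorem cannot be completed without it.
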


The first step towards proving Theorem~\ref{th:main} is the Separation Theorem:

\begin{theorem}[Separation Theorem]
\label{th:sep-proof-idea}
\label{th:sep}
Let $\mcC_{\bounded}$ and $\mcC_{\unbounded}$ be
vocabularies.
 Let $\formulaMSO \in \mso(\signatureMSO)$ and
$\formulaCTwo \in \tvlc(\signatureCTwo)$.
There are effectively computable sentences $\formulaMSOextended \in \mso(\signatureMSOextended)$ and
$\formulaCTwoExtended \in \tvlc(\signatureCTwoExtended)$
over vocabularies  $\signatureMSOextended$ and $\signatureCTwoExtended$ such that $\signatureMSOextended \cap \signatureCTwoExtended$ only contains unary relation symbols and
the following are equivalent:
\begin{enumerate}[(i)]
 \item There is a $(\signatureMSO \cup \signatureCTwo)$-structure $\model$ with $\model \models \formulaMSO \land \formulaCTwo$ and $\tw(\model|_\signatureMSO)\le k$
 \item There is a $(\signatureMSOextended \cup \signatureCTwoExtended)$-structure $\modelFinal$ with $\modelFinal \models \formulaMSOextended\land \formulaCTwoExtended$
 and $\tw(\modelFinal|_\signatureMSOextended)\le k$.
\end{enumerate}

\end{theorem}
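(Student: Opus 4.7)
The plan is to eliminate shared binary symbols in $\signatureMSO\cap\signatureCTwo$ one at a time. For each such $B$ I will keep the name $B$ in $\signatureMSOextended$ but rename it to a fresh binary symbol $\overline{B}$ in $\signatureCTwoExtended$, sharing only unary predicates between the two sides. As a preparatory step, I invoke Lemma~\ref{lem:oriented} to expand the $\signatureMSO$-reduct by an oriented $k$-tree $R_1,\ldots,R_k$ placed into $\signatureMSOextended$ and absent from $\signatureCTwoExtended$; every $B$-edge is then witnessed by some $R_i(u)=v$, giving a canonical way for $\mso$ to name each shared edge as an out-edge of a unique endpoint.

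For each shared $B$, each $i\in\{1,\ldots,k\}$, and each 1-type $\pi$ over the shared unary vocabulary, I introduce shared unary predicates $P_{B,i,+,\pi}$ and $P_{B,i,-,\pi}$ with intended meaning ``$B(v,R_i(v))$ (resp.\ $B(R_i(v),v)$) holds and $R_i(v)$ has 1-type $\pi$''. To handle \emph{in}-edges at $v$, which are unbounded in number, I additionally introduce shared unary predicates $Q_{B,\pi,n}(v)$ recording, for $n$ up to the largest counting threshold appearing in $\formulaCTwo$, the number of vertices $u$ with some $R_j(u)=v$, $B(u,v)$, and 1-type $\pi$; these counts are $\mso$-definable. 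I then set $\formulaMSOextended:=\formulaMSO$ conjoined with axioms that $R_1,\ldots,R_k$ form an oriented $k$-tree and with the $\mso$-definitions of the $P$'s and $Q$'s, and set $\formulaCTwoExtended:=\formulaCTwo[B/\overline{B}]\wedge\chi$, where the $\tvlc$-sentence $\chi$ forces, for every vertex $v$ and every 1-type $\pi$, the count of $\overline{B}$-neighbors of $v$ in each direction and of type $\pi$ to match the value dictated by the shared $P$'s and $Q$'s. By construction $\signatureMSOextended\cap\signatureCTwoExtended$ consists only of unary symbols.

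The forward direction (i)$\Rightarrow$(ii) is direct: expand $\model$ by Lemma~\ref{lem:oriented}, interpret the $P$'s and $Q$'s by their defining formulas, and set $\overline{B}^{\modelFinal}:=B^{\model}$. For the converse I take $\model:=\modelFinal|_{\signatureMSO\cup\signatureCTwo}$, interpreting shared $B$'s as $\modelFinal$'s $B$ and discarding $\overline{B}$; the tree-width bound and $\model\models\formulaMSO$ are inherited immediately. The nontrivial part is $\model\models\formulaCTwo$: in $\modelFinal$ the relations $B$ and $\overline{B}$ need not coincide, but the shared $P$'s and $Q$'s force them to present the same (direction, neighbor-1-type) profile at every vertex, so the multisets of $B$- and $\overline{B}$-neighbors agree per 1-type --- which is exactly what $\tvlc$ sees. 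The coloring argument for directed graphs makes this matching explicit by producing, at each vertex, a 1-type-preserving bijection between $B$-neighbors and $\overline{B}$-neighbors, yielding $\modelFinal\models\formulaCTwo[B/\overline{B}]$ iff $\model\models\formulaCTwo$. The main obstacle I anticipate is a mild circularity in refining the $P$'s and $Q$'s by the 1-types of neighbors, since those 1-types themselves involve the new predicates; I expect to resolve it by iterating the refinement to the Scott-normal-form depth of $\formulaCTwo$ and exploiting the $k$-bound on out-edges per vertex to keep the inductive refinement finite.
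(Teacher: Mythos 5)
There is a genuine gap in the converse direction, and it is exactly the point where the paper has to work hardest. Your shared predicates record, for each shared binary symbol $B$ \emph{separately}, the per-vertex profile of $B$-neighbors broken down by direction and by the neighbor's $1$-type. But a $\tvlc$-sentence in Scott normal form, $\forall x,y.\,\chi \wedge \bigwedge_i \forall x.\exists^{=1}y.\,S_i(x,y)$, constrains the \emph{joint} $2$-type of each pair $(u,v)$ across all binary relations at once --- the shared $B$'s together with the unshared $\tvlc$-relations such as the Skolem functions $S_i$. Take $\formulaCTwo = \forall x.\exists^{=1}y.\,S(x,y) \;\wedge\; \forall x,y.\,(S(x,y)\to B(x,y))$ with $S$ unshared and $B$ shared. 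In your $\modelFinal$ the relation $\overline{B}$ correlates correctly with $S$, and $B$ has the same neighbor-count profile as $\overline{B}$ at every vertex and every $1$-type; yet the unique $S$-successor of $x$ may be an $\overline{B}$-neighbor without being a $B$-neighbor, so $\modelFinal|_{\signatureMSO\cup\signatureCTwo}\not\models\formulaCTwo$. Matching multisets of neighbor $1$-types is not "exactly what $\tvlc$ sees"; what $\forall x,y.\,\chi$ sees is the set of realized $2$-types over the full combined vocabulary, and what the $\exists^{=1}$ conjuncts see is actual functionality.

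This is why the paper's annotations $P_\twoType$ range over full $2$-types over $\signatureMain=\signatureMSO\cup\signatureCTwo\cup\signatureOrientedKTree\cup\colors_\messageAlphabet$ (so a single annotation ties together the shared edge, the Skolem edges, and the orientation edge on the same pair), and why it then needs the rank-decreasing edge-swapping argument: one must actually \emph{rewire} the $\tvlc$-side structure so that the orientation functions $\gaifmannRelation_i$ literally coincide with the $\MS$-side ones, while preserving the set of realized $2$-types (for $\forall x,y.\,\chi$) and $\messageAlphabet$-functionality (for the $\exists^{=1}$ conjuncts); chromaticity is what guarantees a swap never lands two message edges on the same target. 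Your closing appeal to "a $1$-type-preserving bijection between $B$-neighbors and $\overline{B}$-neighbors" is the right instinct but stops short of this rewiring step, and your proposed fix for the circularity of the annotations ("iterate the refinement to Scott-normal-form depth") has no termination or correctness argument; the paper avoids the circularity altogether by typing over the fixed vocabulary $\signatureMain$, which excludes the annotation predicates themselves.
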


In conjunction with Theorem~\ref{th:sep-proof-idea}, we only need to prove Theorem~\ref{th:main}
in the case that
the $\MS$-formula $\alpha$ and the $C^2$-formula $\beta$ only share unary relation symbols.
The significance of Theorem~\ref{th:sep-proof-idea} is that it allows us to use
tools designed for $\MS$ in our more involved setting. The proof
of Theorem~\ref{th:sep-proof-idea} uses notions of types for $C^2$-sentences
in Scott normal form, oriented $k$-trees, coloring arguments, and an induction
on ranks of structures. Theorem~\ref{th:sep-proof-idea} is discussed in Section~\ref{se:separation}.
The next step is to move from structures whose reducts have bounded tree-width
to structures which contain a binary tree.

\begin{lem}\label{lem:MSO-tree-interp-idea}
Let $\mcC_\bounded$ and $\mcC_\unbounded$ be vocabularies such that $\mcC_\bounded\cap\mcC_\unbounded$ contains only
unary relation symbols.
Let $s$ be a binary relation symbol.
There is a vocabulary $\mcD_\bounded$ consisting of $s$ and unary relation symbols only,
and,
for every $\alpha \in \MS(\mcC_\bounded)$ and $\beta\in C^2(\mcC_\unbounded)$,
effectively computable
sentences $\alpha' \in \MS(\mcD_\bounded)$ and $\beta'\in C^2(\mcD_\bounded\cup\mcC_\unbounded)$
such that
 the following are equivalent:
\begin{enumerate}[(i)]
\item There is a $(\mcC_\bounded\cup \mcC_\unbounded)$-structure
$\M$ such that $\M\models\alpha\land \beta$ and such that $\tw(\M|_{\mcC_\bounded})\leq k$.
\item There is a $(\mcD_\bounded\cup \mcC_\unbounded)$-structure
$\mfN$ such that $\mfN\models\alpha'\land\beta'$ and $s^{\mfN}$ is a binary tree.
\end{enumerate}
\end{lem}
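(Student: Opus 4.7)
The plan is to encode $\M|_{\mcC_\bounded}$ as a unary-labelled binary tree on the \emph{same} universe as $\M$, so that the $\mcC_\unbounded$-part transfers verbatim and $\beta$ need not be rewritten. First, I apply Lemma~\ref{lem:oriented} to expand $\M|_{\mcC_\bounded}$ by binary relations $R_1,\ldots,R_k$ forming an oriented $k$-tree whose Gaifman graph contains that of $\M|_{\mcC_\bounded}$. The inductive construction of oriented $k$-trees (start with a $(k+1)$-clique; repeatedly add a vertex connected by $R_1,\ldots,R_k$ to a $k$-clique of earlier vertices) yields a natural rooted-forest on the universe: pick, e.g., $v\mapsto R_1(v)$ as the parent function, with elements of the initial clique as roots. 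Convert this forest into a binary tree $s^\mfN$ on the same universe via a standard left-child/right-sibling encoding. Augment $\mfN$ with unary symbols that (a) mark the initial $(k+1)$-clique and the root of $s^\mfN$; (b) for each pair $(i,j)$, flag the node whose controlled $s$-ancestor plays the role of $R_j$; (c) mark which edges among the $\binom{k+1}{2}$ candidate Gaifman edges at each node are actually present in $\M|_{\mcC_\bounded}$; and (d) copy the unary relations of $\mcC_\bounded$. Let $\mcD_\bounded$ consist of $s$ together with these unary symbols.

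This encoding yields a \emph{one-dimensional} $\MS$-interpretation $I\colon \str(\mcD_\bounded)\rightharpoonup \str(\mcC_\bounded)$: since the universe is shared, each $\mcC_\bounded$-relation is defined by an $\MS(\mcD_\bounded)$-formula over $s$ and the unary labels ($R_j$ is definable from $s$ and the position-labels, and the $\mcC_\bounded$-edges are then definable from the $R_j$'s together with the edge-flags). Using the standard backwards translation of $\MS$ along such interpretations, compute $\alpha_0\in\MS(\mcD_\bounded)$ such that $\mfN\models\alpha_0$ iff $I(\mfN)\models\alpha$, and take
\[
\alpha' \;:=\; \alpha_0 \;\land\; \psi_{\mathrm{enc}},
\]
where $\psi_{\mathrm{enc}}\in\MS(\mcD_\bounded)$ axiomatises that the unary labels on $\mfN$ genuinely encode an oriented $k$-tree in the sense of Lemma~\ref{lem:oriented} (consistency of the $R_j$-labels, exactly $k$ parent-pointers per non-initial node, the initial clique closed under the encoding, etc.). Set $\beta' := \beta$: the $\mcC_\unbounded$-relations of $\mfN$ are those of $\M$ unchanged, and shared unary symbols are identical on both sides. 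Equivalence is then direct: for (i)$\Rightarrow$(ii), build $\mfN$ from $\M$ as above; for (ii)$\Rightarrow$(i), decode $\M$ on the universe of $\mfN$ via $I$ (legitimate thanks to $\psi_{\mathrm{enc}}$), copy the $\mcC_\unbounded$-relations, and invoke Lemma~\ref{lem:oriented}(ii)--(iii) to conclude $\tw(\M|_{\mcC_\bounded})\le k$.

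The main obstacle I expect is preserving \emph{one-dimensionality} of the interpretation. The textbook $\MS$-encodings of bounded-treewidth structures into trees typically represent each element of $\M$ as a $(k+1)$-tuple of bag-positions, and then the backwards translation would turn a binary atom from $\mcC_\unbounded$ into a multi-variable formula — pushing $\beta$ out of $\tvlc$. Lemma~\ref{lem:oriented} is precisely the tool that avoids this: the oriented $k$-tree assigns one canonical representative per element of $\M$, so that the universe of $\mfN$ can be identified with that of $\M$ and $\beta$ transfers untouched. The remaining work — writing $\psi_{\mathrm{enc}}$ and performing the backwards $\MS$ translation — is routine, if tedious, once the encoding scheme is fixed.
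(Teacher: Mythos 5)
Your reduction hinges on the claim that, after choosing $v\mapsto R_1(v)$ as the parent function and passing to a left-child/right-sibling binary tree on the same universe, the remaining functions $R_2,\ldots,R_k$ of the oriented $k$-tree become $\MS$-definable from $s$ and constantly many unary labels. This step fails. Take $k=2$ with initial clique $\{a_0,a_1,a_2\}$; add $v_1$ attached to $\{a_0,a_1\}$ with $R_1(v_1)=a_0$, $R_2(v_1)=a_1$; then add $v_2$ attached to $\{v_1,a_1\}$ with $R_1(v_2)=v_1$, $R_2(v_2)=a_1$. In the $R_1$-parent forest the ancestors of $v_2$ are $v_1$ and $a_0$, so $R_2(v_2)=a_1$ is neither an ancestor of $v_2$ nor at bounded tree-distance from it, and a unary flag at $v_2$ cannot single out an arbitrary unrelated vertex; your item (b) has nothing to point to. This is precisely the difficulty the paper's proof is organized around: it does not reuse the oriented $k$-tree directly but constructs an \emph{aligned $k$-tree encoding} (Lemma~\ref{lem:translations-main}, proved via Lemma~\ref{lem:o-k-tree}), a binary tree arranged so that every $R_j(v)$ is the \emph{nearest ancestor} of $v$ carrying the class label $\D_j$ --- a global rearrangement that also requires auxiliary $\D_\blank$-vertices to respect binary branching. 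Establishing that such an arrangement always exists (every attachment clique lies on the root-to-$v$ path, each member being the last of its class on that path) is the real content of the lemma, not routine bookkeeping.

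A secondary consequence: once auxiliary vertices are admitted, the universe of $\mfN$ is no longer that of $\M$, so $\beta':=\beta$ is not available. The paper instead sets $\beta'=\tr^\sharp(\beta)$ and keeps it in $C^2$ by making the domain formula and the $\psi_C$ for $C\in\mcC_\unbounded$ quantifier-free (Lemma~\ref{lem:fundamentalC2}); your instinct that one-dimensionality of the interpretation is the crux is correct, but the mechanism that preserves $C^2$ is quantifier-freeness of the relevant parts of the translation scheme, not identity of universes. Finally, for the direction (ii)$\Rightarrow$(i) your $\psi_{\mathrm{enc}}$ would have to force every decoded structure to have tree-width at most $k$; in the paper this is Lemma~\ref{lem:o-k-tree}(1), which again rests on the aligned arrangement and has no analogue for the $R_1$-forest encoding.
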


Technically, Lemma~\ref{lem:MSO-tree-interp-idea} is proved using
a translation scheme which maps structures with a binary tree into
structures whose $\mcC_\bounded$-reducts have tree-width at most $k$, and conversely,
each of the latter structures is the image of a structure with a binary tree under the translation scheme.
Translation schemes capturing the graphs of tree-width at most $k$ as the image of
labeled trees were studied in the context of
decidability and model checking of $\MS$~\cite{ar:ArnborgEtAl}.
We need a more refined construction to ensure that the translation scheme
also behaves correctly on $C^2$-sentences, i.e.\ that it maps $C^2$-sentences to $C^2$-sentences,
see Lemma~\ref{lem:MSO-tree-interp-idea} in Section~\ref{se:translations}.

Now that we have reduced our attention to the case that
our structures contain a binary tree,
we can replace $\MS$-sentences with equi-satisfiable $C^2$-sentences.

\begin{lem}\label{lem:HintikkaTree-idea}
Let $\mcC$
be a vocabulary which consists only of a binary relation symbol $s$
and unary relation symbols.
Let $\alpha$ be an $\MS(\mcC)$-sentence.
There is an effectively computable $C^2(\mcD)$-sentence $\gamma$
over a vocabulary $\mcD \supseteq \mcC$
such that for every $\mcC$-structure $\M$ in which $s^\M$ is a binary tree
the following are equivalent:\\[5pt]
\begin{inparaenum}[(i)]
 \item $\M\models\alpha$
\item There is a $\mcD$-structure
$\mfN$ expanding $\M$ such that $\mfN\models\gamma$.
\end{inparaenum}
\end{lem}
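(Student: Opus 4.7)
The plan is to invoke the classical correspondence between $\MS$ on binary trees and finite tree automata (the Thatcher--Doner theorem), and to express an accepting run of such an automaton as an expansion of $\M$ by fresh unary predicates. The run conditions are local -- they speak only of a node and its children -- which is what makes them expressible in $\tvlc$.

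Concretely, I would first translate $\alpha$ into an equivalent nondeterministic bottom-up tree automaton $\mathcal{A} = (Q, \Sigma, \delta, F)$ whose alphabet $\Sigma = 2^{\unary(\mcC)}$ records the unary labels coming from $\mcC$, so that $\M \models \alpha$ iff $\mathcal{A}$ has an accepting run on the binary tree $s^\M$. The expanded vocabulary $\mcD$ is obtained from $\mcC$ by adding one fresh unary predicate $Q_q$ for every state $q \in Q$, plus (if needed to match the ordered-tree presentation of $\mathcal{A}$) two further unary predicates $L$ and $R$ that designate left and right children. The sentence $\gamma$ is then a conjunction of the following local clauses, each clearly in $\tvlc$:
\begin{enumerate}[(1)]
\item every element satisfies exactly one of the $Q_q$;
\item every element has at most one $L$-child and at most one $R$-child, expressed via $\exists^{\leq 1} y\,(s(x,y) \land L(y))$ and its analogue for $R$, and every non-root element (where the root is identified by $\neg \exists y\, s(y,x)$) is either an $L$- or an $R$-child but not both;
\item for every $(q_1, q_2, a, q) \notin \delta$, the pattern $Q_q(x) \land a(x) \land \exists y\,(s(x,y) \land L(y) \land Q_{q_1}(y)) \land \exists y\,(s(x,y) \land R(y) \land Q_{q_2}(y))$ is globally forbidden, and an analogous forbidden-pattern clause captures the leaf transitions at elements with no $s$-successors;
\item the root carries a state in $F$.
\end{enumerate}

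The main obstacle is that a tree-automaton transition naturally speaks about three elements -- a node and its two children -- whereas $\tvlc$ has only two variables. The key device is the predicates $L$ and $R$: once the two children are already distinguished by unary labels on themselves, the three-way consistency (3) decomposes into two independent existential statements about $x$ and a single child $y$, and the two $\exists y$ quantifiers may reuse the name $y$ because their scopes are disjoint. Counting quantifiers are needed only to enforce the shape constraints in (2). Both directions of the equivalence are then routine: given $\M \models \alpha$, an accepting run of $\mathcal{A}$ determines the interpretations of the $Q_q$, and any admissible left/right orientation of the children determines $L, R$, yielding an expansion $\mfN \models \gamma$; conversely, any expansion $\mfN \models \gamma$ has its $Q_q$-predicates encode an accepting run of $\mathcal{A}$ on $s^\M$, whence $\M \models \alpha$.
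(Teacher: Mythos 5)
Your proposal is correct in its main thrust, and it reaches the same structural idea as the paper --- annotate every node of the tree with a fresh unary predicate recording a bottom-up ``state'' of the subtree below it, and axiomatize the local consistency of this annotation in $C^2$ --- but it gets there through a different key external result. The paper does not go through tree automata at all: it labels each node $u$ with the Hintikka sentence (the rank-$q$ $\MS$ type) of the subtree rooted at $u$, and justifies the locality of this labeling by the smoothness (Feferman--Vaught compositionality) of the operation that glues a root onto its child subtrees; the root's type then decides $\alpha$ via the disjunction $\bigvee_{\epsilon \models \alpha} C_\epsilon$. Your route substitutes the Thatcher--Doner translation of $\alpha$ into a nondeterministic bottom-up tree automaton for the Hintikka/smoothness machinery. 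These are two presentations of the same underlying fact, so neither buys extra generality here, but the devices for squeezing a three-element transition constraint into two variables differ in an instructive way: you order the children with fresh predicates $L$ and $R$, which forces you to argue (correctly, but it should be said) that acceptance is independent of the chosen orientation because $\alpha$ is order-invariant; the paper instead keeps the tree unordered and splits the two-children case into ``children of distinct types'' versus ``children of equal type,'' using the counting quantifier $\exists^{=2} y\,(s(x,y)\land C_{\epsilon}(y))$ for the latter. Your $L/R$ trick uses counting only for the shape constraints, which is arguably cleaner.

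One concrete omission: in this paper a ``binary tree'' may contain nodes with exactly one child (the aligned $k$-tree encodings have out-degree \emph{at most} $2$, and the paper's sentence $\psihin^{\hin}_{\mcC,q}$ accordingly has a separate conjunct $\allinternal_1$ for unary-branching nodes). Your clause (3) only constrains leaves and nodes possessing both an $L$-child and an $R$-child, so the state at a one-child node is left entirely unconstrained and the backward direction of the equivalence fails there. This is routine to repair --- add unary transitions to $\mathcal{A}$, or treat the absent child as carrying a distinguished padding state --- but as written the sentence $\gamma$ does not verify the run on such nodes.
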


For the proof of Lemma~\ref{lem:HintikkaTree-idea}
we use a Feferman-Vaught type theorem which states that the Hintikka type (i.e. $\MS$ types)
of a binary tree labeled with unary relation symbols depends only
on the Hintikka types of its children. We can therefore axiomatize in $C^2$
that the Hintikka type of the labeled binary tree implies a given $\MS$-sentence.

Having replaced the $\MS$-sentence in statement (ii) of Lemma~\ref{lem:MSO-tree-interp-idea} with a $C^2$-sentence, we are left with the problem
of deciding whether a $C^2$-sentence is satisfiable by a structure in which a specified relation is a binary tree,
which has recently been shown to be decidable:

\begin{theorem}[Charatonik and Witkowski~\cite{pr:CW13}]
Let $\mcC$ be a vocabulary which contains a binary relation symbol $s$.
Given a $C^2(\mcC)$-sentence $\varphi$, it is decidable whether $\varphi$
is satisfiable by a structure $\M$ in which $s^\M$ is a binary tree.
\end{theorem}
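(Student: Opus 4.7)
The plan is to adapt the Pratt--Hartmann integer-programming approach for pure $C^2$-satisfiability to the setting where one binary symbol $s$ is required to form a binary tree. First, normalize: convert $\varphi$ to Scott normal form $\forall x \forall y\,\psi(x,y) \wedge \bigwedge_{i=1}^{m} \forall x \exists^{= k_i} y\, \chi_i(x,y)$, and enumerate the finite set $\Pi$ of $1$-types and the finite set $\Lambda$ of $2$-types over $\mcC$ that are locally consistent with $\psi$. A model is then fully described by: (i) a $\Pi$-labeling of the nodes of a binary tree, (ii) a choice of $2$-type from $\Lambda$ on every parent--child edge, one that forces $s$ to hold, and (iii) a choice of $2$-type from $\Lambda$ on every non-tree pair of distinct elements.

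Second, build a bottom-up finite tree automaton $\mathcal{A}$ over $\Pi$-labeled binary trees that checks all tree-local commitments (edge-$2$-types lie in $\Lambda$ and are compatible with the endpoint $1$-types) and records, in its state, the $1$-type of the subtree root together with the residual witness demand $(d_1,\dots,d_m)$ still owed by the root after its subtree has been processed. Each $d_i$ is bounded by $k_i$, so the state space is finite. By Parikh's theorem, the set $P \subseteq \mathbb{N}^{|\Pi|}$ of $1$-type multiplicity vectors $(n_\pi)_{\pi \in \Pi}$ realizable by trees accepted by $\mathcal{A}$ is an effectively computable semilinear set. Third, encode the non-tree interactions as an existential Presburger formula $\Phi(\vec{n})$: introduce variables $z_{\pi,\pi',\lambda}$ counting how many ordered pairs of distinct elements with $1$-types $(\pi,\pi')$ realize a non-tree $2$-type $\lambda \in \Lambda$, and impose the standard $C^2$ counting equations so that each element of $1$-type $\pi$ receives exactly $k_i$ $\chi_i$-witnesses (with contributions split between tree-neighbors recorded by $\mathcal{A}$ and off-tree pairs recorded by the $z_{\pi,\pi',\lambda}$), together with pairwise consistency constraints linking $\sum_\lambda z_{\pi,\pi',\lambda}$ to $n_\pi \cdot n_{\pi'}$. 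Decide $\varphi$ by testing whether $P \cap \{\vec{n} : \Phi(\vec{n})\}$ is nonempty; this is decidable since both sets are effectively semilinear and Presburger arithmetic is decidable.

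The main obstacle will be the interface between the tree automaton and the Presburger encoding: each tree edge simultaneously has to respect $\psi$, contribute the correct number of $\chi_i$-witnesses to \emph{both} endpoints, and leave behind a residual demand that the ILP can later clear via off-tree pairs. The delicate point is to propagate residual demands upward through $\mathcal{A}$ so that every \emph{internal} descendant's demand is fully discharged within its own subtree (using either a tree edge or a future off-tree witness attributable to that element through the variables $z_{\pi,\pi',\lambda}$), while only the current root's leftover demand is visible in the automaton state. Keeping this state space bounded, and carefully accounting at the root so that all outstanding demands are absorbed by $\Phi$, is what makes the combination of tree automaton and Presburger arithmetic go through cleanly; once done, finite model existence for $\varphi$ reduces to the decidable intersection nonemptiness problem described above.
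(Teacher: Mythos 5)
The paper does not prove this theorem: it is imported as a black box from Charatonik and Witkowski~\cite{pr:CW13} and sits at the end of the reduction chain, so there is no in-paper proof to compare against. Your outline does track the general strategy of the actual external proof (Scott normal form, a tree automaton for the tree-local commitments, linear arithmetic for the off-tree part), so the question is whether your sketch closes the argument on its own. It does not quite.

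The genuine gap is in your third step. You introduce variables $z_{\pi,\pi',\lambda}$ counting ordered pairs of distinct elements by the $1$-types of their endpoints and the off-tree $2$-type they realize, and assert that ``standard counting equations'' then force each element of $1$-type $\pi$ to receive exactly $k_i$ witnesses for $\chi_i$. Aggregate pair counts indexed by $1$-types cannot express a per-element constraint: two elements with the same $1$-type and the same automaton state may legitimately receive different splits between tree-supplied and off-tree witnesses, and conversely a solution of your linear system may distribute the $z_{\pi,\pi',\lambda}$ pairs so that one element of type $\pi$ absorbs too many witnesses and another too few. The crux of the Pratt--Hartmann/Charatonik--Witkowski machinery is precisely the refinement of $1$-types into star-types (certificates) recording, per element, the exact multiset of message $2$-types it emits and absorbs off the tree --- finite because each element emits at most $\sum_i k_i$ messages --- together with a realization lemma showing that every solution of the resulting system is witnessed by an actual structure (this needs a chromaticity/coloring device to break symmetric conflicts and the availability of a silent $2$-type for the remaining pairs). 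Without that refinement the reduction is unsound. A second concrete failure: your proposed constraint linking $\sum_\lambda z_{\pi,\pi',\lambda}$ to $n_\pi\cdot n_{\pi'}$ multiplies two unknowns and is therefore not a Presburger formula at all; the standard proofs avoid the product by constraining only message pairs (linearly many per element) and handling silent pairs existentially. Finally, your bottom-up automaton must also guess and carry the $2$-type of the edge to the as-yet-unseen parent, since a node's witnesses may be supplied from above; the ``residual demand of the root'' alone does not determine acceptance.
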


%
%

\section{Separation Theorem}\label{se:separation}

\subsection{Basic Definitions and Results}

\subsection*{$1$-types and $2$-types}
We begin with some notation and definitions in the spirit of the literature
on decidability of $\tvlc$, cf.\ e.g.~\cite{ar:PH05,pr:CW13}.
Let $\signatureDefs$ be a vocabulary of unary and binary relations.

A \emph{\bf 1-type} $\oneType$ is a maximal consistent set of atomic $\signatureDefs$-formulas or negations of atomic $\signatureDefs$-formulas with free variable $x$, i.e.,
exactly one of  $A(x)$ and $\neg A(x)$ belongs to $\oneType$ for every unary relation symbol $A \in \signatureDefs$,
and exactly one of  $B(x,x)$ and $\neg B(x,x)$ belongs to $\oneType$
for every binary relation symbol $B \in \signatureDefs$.
We denote by $\typeFormula_\oneType(x) = \bigwedge_{\typeFormulaConjunct \in \oneType} \typeFormulaConjunct$ the formula that characterizes the 1-type $\oneType$.
We denote by $\oneTypes(\signatureDefs)$ the set of 1-types over $\signatureDefs$.

A \emph{\bf 2-type} $\twoType$ is a maximal consistent set of atomic $\signatureDefs$-formulas or
negations of atomic $\signatureDefs$-formulas with free variables $x$ and $y$,
i.e., for every $z \in \{x,y\}$ and unary relation symbol $A \in \signatureDefs$, exactly one of $A(z)$ and $\neg A(z)$ belongs to $\twoType$, and for every $z_1,z_2 \in \{x,y\}$ and binary relation symbol $B \in \signatureDefs$, exactly one of $B(z_1,z_2)$ and $\neg B(z_1,z_2)$ belongs to $\twoType$.
We note that the equality relation $\approx$ is also part of a $2$-type.
We write $\twoType^{-1}$ for the 2-type obtained from $\twoType$ by substituting all occurrences of $x$ resp. $y$ with $y$ resp. $x$.
We write $\twoType_x$ for the 1-type obtained from $\twoType$ by restricting $\twoType$ to formulas with free variable $x$.
We write $\twoType_y$ for the 1-type obtained from $\twoType$ by restricting $\twoType$ to formulas with free variable $y$ {\em and substituting $y$ with $x$}.
We denote by $\typeFormula_\twoType(x,y) = \bigwedge_{\typeFormulaConjunct \in \twoType} \typeFormulaConjunct$ the formula that characterizes the 2-type $\twoType$.
We denote by $\twoTypes(\signatureDefs)$ the set of 2-types over $\signatureDefs$.

Let $\model$ be a $\signatureDefs$-structure.
We denote by $\oneTypeOf^\model(u)$ the unique 1-type $\oneType$ such that $\model \models \typeFormula_\oneType(u)$.
For elements $u,v$ of $\model$, we denote by $\twoTypeOf^\model(u,v)$ the unique 2-type $\twoType$ such that $\model \models \typeFormula_\twoType(u,v)$.
We denote by $\twoTypeOf(\model) = \{ \twoTypeOf^\model(u,v) \mid u,v \text{ elements of } \model \}$ the set of 2-types \emph{realized} by $\model$.
The following lemma is easy to see:

\begin{lem}
\label{lem:star-type-equivalence}
Let $\model_1, \model_2$ be two $\signatureDefs$-structures over the same universe $\modelUniverse$ and let $\phi = \forall x,y.\, \chi \in \tvlc(\signatureDefs)$ with $\chi$ quantifier-free.
If $\twoTypeOf(\model_1) = \twoTypeOf(\model_2)$, then $\model_1 \models \phi$ iff $\model_2 \models \phi$.\\
(See proof in Appendix~\ref{ap:lem:star-type-equivalence}.)
\end{lem}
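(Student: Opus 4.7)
The plan is to reduce the question of satisfaction of the universal sentence $\phi = \forall x, y.\, \chi$ to a condition that depends only on the set of realized 2-types, and then invoke the hypothesis.

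\medskip

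First, I observe that for any $\signatureDefs$-structure $\model$ and any pair of elements $u, v \in \modelUniverse$, the truth value of $\chi(u,v)$ in $\model$ is completely determined by $\twoTypeOf^\model(u,v)$. This is because $\chi$ is quantifier-free, hence a Boolean combination of atomic formulas over the free variables $x,y$. Each such atom has one of the forms $A(x)$, $A(y)$, $B(x,x)$, $B(x,y)$, $B(y,x)$, $B(y,y)$, or $x \approx y$, and all of these (or their negations) are recorded in $\twoTypeOf^\model(u,v)$ by the very definition of a 2-type. In particular, the potentially delicate case $u=v$ is handled because the equality atom $x \approx y$ appears (positively) in the 2-type exactly when $u=v$. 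Thus whether $\model \models \chi(u,v)$ holds depends only on the syntactic 2-type, not on $\model$ itself.

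\medskip

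Consequently, $\model \models \forall x, y.\, \chi$ holds iff for every pair $(u,v)$ of elements of $\modelUniverse$, $\chi$ holds on $(u,v)$, iff for every 2-type $\twoType \in \twoTypeOf(\model)$ the implication $\typeFormula_\twoType(x,y) \models \chi(x,y)$ is valid (in the logical sense; equivalently, $\chi$ lies in the deductive closure of $\twoType$). This latter condition is purely syntactic and refers only to the set $\twoTypeOf(\model)$ and the fixed formula $\chi$. Applying this characterization to both $\model_1$ and $\model_2$ and using the hypothesis $\twoTypeOf(\model_1) = \twoTypeOf(\model_2)$ immediately yields $\model_1 \models \phi$ iff $\model_2 \models \phi$.

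\medskip

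There is no real obstacle here; the argument is essentially a bookkeeping exercise that spells out how universal quantifier-free sentences refer to structures only through their 2-types. The one point that requires mild care is the diagonal case $u=v$, which is correctly subsumed by 2-types precisely because the definition includes the equality atom and the self-loop atoms $B(x,x)$ as part of a 2-type.
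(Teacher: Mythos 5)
Your proof is correct and follows essentially the same route as the paper's: both arguments rest on the observation that the truth of the quantifier-free $\chi$ at a pair $(u,v)$ is determined by $\twoTypeOf^\model(u,v)$ alone, so that satisfaction of $\forall x,y.\,\chi$ depends only on the set of realized 2-types. The paper phrases this as a transfer of witnesses between $\model_1$ and $\model_2$, while you phrase it as a purely syntactic condition on $\twoTypeOf(\model)$; the two formulations are interchangeable, and your explicit treatment of the diagonal case $u=v$ is a welcome bit of care.
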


\subsection*{Scott Normal Form and $\messageAlphabet$-functionality}

$\tvlc$-sentences have a Scott-Normal Form, cf.~\cite{conf/stacs/GradelOR97}, which can be obtained by iteratively applying Skolemization and introducing new predicates for subformulas, together with predicates ensuring the soundness of this transformation:

\begin{lem}[Scott Normal Form,~\cite{conf/stacs/GradelOR97}]
For every $\tvlc$-sentence $\formulaCTwo$ there is a $\tvlc$-sentence $\formulaCTwo'$ of the form
\begin{equation}\label{eq:normal-form}
\forall x,y.\,\chi \wedge \bigwedge_{i \in [\messageTypeNumber]} \forall x.\, \exists^{=1} y.\, \messageTypeCTwo_i(x,y),
\end{equation}
with $\chi$ quantifier-free, over an expanded vocabulary
such that $\formulaCTwo$ and $\formulaCTwo'$ are equi-satisfiable.
Moreover, $\formulaCTwo'$ is computable.
The expanded vocabulary contains in particular
the fresh binary relation symbols in $\messageAlphabetCTwo = \{\messageTypeCTwo_1, \ldots, \messageTypeCTwo_\messageTypeNumber\}$

\end{lem}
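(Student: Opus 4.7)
The plan is to prove the lemma by iteratively reducing the quantifier nesting of $\formulaCTwo$, naming each innermost counting quantifier by a fresh unary symbol and then Skolemizing the resulting defining axioms with fresh functional binary relations. This is the classical Gr\"adel--Otto--Rosen argument.

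First I would rewrite $\formulaCTwo$ into negation normal form and express $\exists^{\leq n}$ as $\neg\exists^{\geq n+1}$ and $\exists^{=n}$ as $\exists^{\geq n}\wedge\neg\exists^{\geq n+1}$, so that every counting operator is of the form $\exists^{\geq n}$, possibly under a negation. Next I would pick an innermost quantified subformula $\varphi(x)\equiv Qy.\,\psi(x,y)$ with $\psi$ quantifier-free, introduce a fresh unary symbol $P_\varphi$, replace every occurrence of $\varphi(x)$ in $\formulaCTwo$ by $P_\varphi(x)$, and conjoin the \emph{defining axiom}
\[
\forall x.\bigl(P_\varphi(x)\leftrightarrow Qy.\,\psi(x,y)\bigr).
\]
Iterating until no nested quantifiers remain reduces the problem to a conjunction of such defining axioms together with a single $\forall x,y.\,\chi_0$ conjunct in which $\chi_0$ is quantifier-free.

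Each defining axiom is then converted into Scott normal form by Skolemization with fresh binary relations. For the forward implication $\forall x.(P(x)\to\exists^{\geq n}y.\,\psi(x,y))$ I would introduce $n$ relations $G_1,\ldots,G_n$, conjoin the functional axioms $\forall x.\,\exists^{=1}y.\,G_i(x,y)$, and add the quantifier-free conjuncts $P(x)\wedge G_i(x,y)\to\psi(x,y)$ together with $P(x)\to\neg(G_i(x,y)\wedge G_j(x,y))$ for $i\neq j$, encoding that when $P(x)$ holds the $G_i$'s provide $n$ distinct $\psi$-witnesses. For the converse implication $\forall x.(\neg P(x)\to\exists^{\leq n-1}y.\,\psi(x,y))$ I would introduce $n-1$ functions $F_1,\ldots,F_{n-1}$ with the analogous functional axioms and the quantifier-free conjunct $\neg P(x)\wedge\psi(x,y)\to\bigvee_i F_i(x,y)$, using the classical fact that at-most-$m$ witnesses can always be enumerated by $m$ total functions. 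Bundling the $G_i$ and $F_i$ collected across all defining axioms into $\messageAlphabetCTwo$ and absorbing all guarded implications into $\chi$ yields a sentence of the required shape.

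The main obstacle is verifying equi-satisfiability, chiefly ensuring that the ``dummy'' behaviour of the Skolem functions on the inactive side of each defining axiom (where $P(x)$, resp.\ $\neg P(x)$, fails) does not spuriously violate $\chi$. This is handled uniformly by guarding every conjunct introduced for $P$ by the appropriate literal $P(x)$ or $\neg P(x)$, so that in any model the two sets of Skolem witnesses can be chosen independently; functionality is always discharged by letting the irrelevant $G_i(x,\cdot)$ or $F_i(x,\cdot)$ be the self-loop at $x$. Both directions of equi-satisfiability are then routine: a model of $\formulaCTwo$ is expanded to one of $\formulaCTwo'$ by selecting the promised witnesses for each $P_\varphi$, and a model of $\formulaCTwo'$ directly certifies the counting bounds encoded by the $P_\varphi$'s via its built-in Skolem functions. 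Computability is immediate since every rewriting step is syntactic and the quantifier depth of unprocessed subformulas strictly decreases.
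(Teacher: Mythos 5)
Your proposal is correct and follows exactly the route the paper takes: the paper does not prove this lemma but cites Gr\"adel--Otto--Rosen and sketches it in one line as ``iteratively applying Skolemization and introducing new predicates for subformulas, together with predicates ensuring the soundness of this transformation,'' which is precisely the renaming-plus-functional-witnesses argument you spell out. The only detail worth tightening is the final iteration step, where the remaining subformulas are closed (zero free variables); these are handled by the same machinery using a fresh unary predicate forced to be constant across the universe, and your guarded-conjunct and self-loop devices carry over unchanged.
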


Let $\messageAlphabet$ be a set of binary relation symbols.
We say a structure $\model$ is \emph{\bf $\messageAlphabet$-functional}, if for every $\messageType \in \messageAlphabet$,
$\messageType^\model$ is a total function on the universe of $\model$.
Observe the following are equivalent for every structure $\model$:\\[5pt]
\begin{inparaenum}[(i)]
 \item $\model$ satisfies Eq. (\ref{eq:normal-form}), and 
 \item $\model \models \forall x,y.\,\chi$
and $\model$ is $\messageAlphabetCTwo$-functional.

\end{inparaenum}

\subsection*{Message Types and Chromaticity}

Let $\messageAlphabet \subseteq \binary(\signatureDefs)$ be a subset of the binary relation symbols of $\signatureDefs$.
We write $\twoType \in \MsgTypes{\messageAlphabet}(\signatureDefs)$ and say $\twoType$ is a \emph{\bf $\messageAlphabet$-message type}, if $\twoType \in \twoTypes(\signatureDefs)$ and $\messageType(x,y) \in \twoType$ for some $\messageType \in \messageAlphabet$.
Let $\model$ be a $\signatureDefs$-structure with universe $\modelUniverse$.
We define $E^\model_\messageAlphabet = \{ (u,v) \in \modelUniverse^2 \mid \text{there is a } \messageType \in \messageAlphabet \text{ with } \model \models \messageType(u,v) \}$.
The \emph{\bf $\messageAlphabet$-message-graph} is the directed graph $\messageGraph^\model_\messageAlphabet = (\modelUniverse,E)$,
where $E = \{ (u,v) \in \modelUniverse^2 \mid u \neq v \text{ and } (u,v) \in E^\model_\messageAlphabet \circ E^\model_\messageAlphabet \}$,
where $R \circ S = \{ (a,b) \mid \text{ there is a } c \text{ with } (a,c) \in R \text{ and } (c,b) \in S \}$ denotes the usual \emph{composition} of relations.
We say $\model$ is \emph{\bf $\messageAlphabet$-chromatic}, if $\oneTypeOf^\model(u) \neq \oneTypeOf^\model(v)$ for all $(u,v) \in E$.


We note that if $\model$ is $\messageAlphabet$-functional, then $\messageGraph^\model_\messageAlphabet$ has out-degree $\outDegree(u) \le |\messageAlphabet|^2$ for all $u \in \modelUniverse$.
This allows us to prove Lemma~\ref{lem:chromatic-models} based on Lemma~\ref{lem:coloring}; the proofs can be found in Appendices~\ref{ap:lem:coloring} and~\ref{ap:lem:chromatic-models}.

\begin{lem}
\label{lem:chromatic-models}
There is a finite set of unary relations symbols $\colors_\messageAlphabet$ such that every $\messageAlphabet$-functional 
$\signatureDefs$-structure can be expanded to a $\messageAlphabet$-chromatic $(\signatureDefs \cup \colors_\messageAlphabet)$-structure.
\end{lem}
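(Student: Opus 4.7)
The plan is to reduce the claim to Lemma~\ref{lem:coloring} via the out-degree bound already observed in the text. Since $\model$ is $\messageAlphabet$-functional, every vertex has at most $|\messageAlphabet|$ successors under $E^\model_\messageAlphabet$, and hence at most $|\messageAlphabet|^2$ successors at distance two. Thus the out-degree of the message-graph $\messageGraph^\model_\messageAlphabet$ is bounded by $d := |\messageAlphabet|^2$, \emph{uniformly} in $\model$. Lemma~\ref{lem:coloring} will provide a constant $c = c(d)$ such that every directed graph of out-degree at most $d$ admits a proper coloring of its underlying undirected graph with $c$ colors. (The standard way to see this is that such a graph is $2d$-degenerate: in any induced subgraph on $n$ vertices there are at most $dn$ directed edges, so the average undirected degree is at most $2d$, and a greedy colouring on a degeneracy ordering uses $\le 2d+1$ colours.)

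\textbf{Construction of the expansion.} Fix $\colors_\messageAlphabet = \{C_1,\ldots,C_c\}$, a set of fresh unary relation symbols depending only on $|\messageAlphabet|$. Given an $\messageAlphabet$-functional $\signatureDefs$-structure $\model$ with universe $\modelUniverse$, apply Lemma~\ref{lem:coloring} to $\messageGraph^\model_\messageAlphabet$ to obtain a map $\sigma \colon \modelUniverse \to [c]$ satisfying $\sigma(u) \neq \sigma(v)$ whenever $(u,v)$ or $(v,u)$ is an edge of $\messageGraph^\model_\messageAlphabet$. Expand $\model$ to a $(\signatureDefs \cup \colors_\messageAlphabet)$-structure $\modelHat$ by setting $C_i^{\modelHat} = \sigma^{-1}(i)$ for each $i \in [c]$, so that every element belongs to exactly one $C_i$.

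\textbf{Verification of chromaticity.} For any edge $(u,v)$ of $\messageGraph^\model_\messageAlphabet$ we have $\sigma(u)\neq\sigma(v)$, so $u$ and $v$ lie in different color classes. Since $\modelHat$ extends $\model$ by the unary symbols $C_i$, the atomic formulas $C_{\sigma(u)}(x)$ and $C_{\sigma(v)}(x)$ distinguish the 1-types of $u$ and $v$, giving $\oneTypeOf^{\modelHat}(u)\neq \oneTypeOf^{\modelHat}(v)$. Hence $\modelHat$ is $\messageAlphabet$-chromatic, as required.

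\textbf{Main obstacle.} All the non-trivial work is concentrated in Lemma~\ref{lem:coloring}; the rest is bookkeeping. The crucial point is that $c$ must depend only on $|\messageAlphabet|$ (not on $\model$), so that a \emph{single} finite set $\colors_\messageAlphabet$ of unary symbols works uniformly for every $\messageAlphabet$-functional $\signatureDefs$-structure; this uniformity is exactly what the out-degree bound $|\messageAlphabet|^2$ delivers.
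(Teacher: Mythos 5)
Your proof is correct and follows the same route as the paper's: bound the out-degree of the message graph by $|\messageAlphabet|^2$ using $\messageAlphabet$-functionality, invoke Lemma~\ref{lem:coloring} to get a proper coloring with a number of colors depending only on $|\messageAlphabet|$, and record the color classes as fresh unary predicates so that adjacent vertices get distinct 1-types. The only difference is that you also sketch the degeneracy argument behind Lemma~\ref{lem:coloring}, which the paper proves separately.
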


\begin{lem}
\label{lem:coloring}
Let $G = (V,E)$ be a directed graph with out-degree $\outDegree(v) \le k$ for all $v \in V$.
Then, the underlying undirected graph has a proper $2k+1$-coloring.
\end{lem}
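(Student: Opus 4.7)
The key observation is that the out-degree bound transfers to every induced subgraph, which makes the underlying undirected graph $2k$-degenerate. I would then conclude by the standard greedy coloring of degenerate graphs.

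\textbf{Step 1: Degeneracy.} Let $H$ be any nonempty induced subgraph of the underlying undirected graph of $G$, on vertex set $V' \subseteq V$. Every directed edge contributing to $H$ comes from some $v \in V'$ and, since $\outDegree(v) \le k$ in $G$, contributes to at most $k$ edges of $H$ incident to $v$ via the ``outgoing'' direction. Summing, the number of edges of $H$ is at most $k \cdot |V'|$, so the sum of undirected degrees in $H$ is at most $2k \cdot |V'|$. Hence $H$ contains a vertex of degree at most $2k$.

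\textbf{Step 2: Vertex ordering.} I would iteratively peel off such low-degree vertices: repeatedly pick a vertex $v$ of minimum degree in the remaining induced subgraph (which has degree at most $2k$ by Step 1), remove it, and continue. Reading the removed vertices in reverse yields an ordering $v_1, v_2, \ldots, v_n$ with the property that each $v_i$ has at most $2k$ neighbors among $v_{i+1}, \ldots, v_n$.

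\textbf{Step 3: Greedy coloring.} Colour the vertices in the reverse order $v_n, v_{n-1}, \ldots, v_1$ using the palette $\{1, \ldots, 2k+1\}$. When colouring $v_i$, its already-coloured neighbours are among $v_{i+1}, \ldots, v_n$, and there are at most $2k$ of them, so at least one of the $2k+1$ colours is available. The resulting colouring is proper, establishing the lemma.

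\textbf{Main obstacle.} There is no serious obstacle: the only subtle point is that even though the in-degree of $G$ may be unbounded, the out-degree bound survives when passing to induced subgraphs and this is exactly what yields degeneracy. Once that is noticed, the argument is routine.
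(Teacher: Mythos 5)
Your proof is correct and is essentially the paper's own argument: the paper proves the lemma by induction on $|V|$, using the same counting ($\sum_v \degree(v) = 2\sum_v \outDegree(v) \le 2k|V|$) to find a vertex of degree at most $2k$, deleting it, coloring the rest, and extending — which is exactly your degeneracy-plus-greedy argument phrased as an induction. No substantive difference.
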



\subsection{Separation Theorem}

Let $\boundedGraph = (V,E)$ be an (undirected) graph.
We say $\boundedGraph$ is \emph{\bf $k$-bounded}, if the edges of $\boundedGraph$ can be oriented such that every node of $\boundedGraph$ has out-degree less than $k$.
We say a structure $\model$ is $k$-bounded if its Gaifman graph is $k$-bounded.

\begin{septhm}[Separation Theorem] 
Let $k$ be a natural number.
Let $\mcC_{\bounded}$ and $\mcC_{\unbounded}$ be vocabularies.
Let $\formulaMSO \in \mso(\signatureMSO)$ and\footnote{The Separation Theorem remains correct if we replace 
$\tvlc$ with any logic containing $\tvlc$ which is closed under conjunction. 
}
$\formulaCTwo \in \tvlc(\signatureCTwo)$.
There are effectively computable sentences $\formulaMSOextended \in \mso(\signatureMSOextended)$ and
$\formulaCTwoExtended \in \tvlc(\signatureCTwoExtended)$
over vocabularies  $\signatureMSOextended$ and $\signatureCTwoExtended$ such that
$\signatureMSOextended \cap \signatureCTwoExtended$ only contains unary relation symbols such that
for every $k$-bounded graph $\boundedGraph$ the following are equivalent:
\begin{enumerate}[(i)]
 \item There is a $(\signatureMSO \cup \signatureCTwo)$-structure $\model$ with $\model \models \formulaMSO \land \formulaCTwo$ and $\gaif(\model|_\signatureMSO) = \boundedGraph$.
 \item There is a $(\signatureMSOextended \cup \signatureCTwoExtended)$-structure $\modelFinal$ with $\modelFinal \models \formulaMSOextended\land \formulaCTwoExtended$
 and $\gaif(\modelFinal|_\signatureMSOextended) = \boundedGraph$.
\end{enumerate}

\end{septhm}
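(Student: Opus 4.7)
The plan is to eliminate each shared binary relation symbol by renaming it on the $C^2$-side and passing its value across the vocabulary boundary using shared \emph{unary} predicates. Let $\Sigma = \binary(\signatureMSO \cap \signatureCTwo)$. The goal is to keep each $B \in \Sigma$ in $\signatureMSOextended$, introduce a renamed disjoint copy $\copySymbol{B}$ in $\signatureCTwoExtended$, and add a set $\mcU$ of shared unary predicates encoding the ``local type'' of each element strongly enough to force $B^{\modelFinal} = \copySymbol{B}^{\modelFinal}$. This matches the paper's hint that the proof is ``based on local types of universe elements and a coloring argument for directed graphs.''

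First I would normalize both sides. Putting $\formulaCTwo$ in Scott normal form yields a set $\messageAlphabetCTwo$ of fresh functional binary symbols, and Lemma~\ref{lem:chromatic-models} then adds shared unary color predicates making the structure $\messageAlphabetCTwo$-chromatic, so that no two $\messageAlphabetCTwo \circ \messageAlphabetCTwo$-related elements share a 1-type. In parallel, Lemma~\ref{lem:oriented} expands $\model|_{\signatureMSO}$ with an oriented $k$-tree $\signatureOrientedKTree = \{R_1, \ldots, R_k\}$ whose $k$ total functions cover $\boundedGraph$. With additional colors one can further arrange that for every $u$ and every $i \neq j$ the elements $R_i(u)$ and $R_j(u)$ receive different 1-types, so the channel $i$ becomes recoverable from the 1-type of $R_i(u)$ alone.

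I would then set $\signatureMSOextended = \signatureMSO \cup \signatureOrientedKTree \cup \mcU$ and $\signatureCTwoExtended = (\signatureCTwo \setminus \Sigma) \cup \{\copySymbol{B} : B \in \Sigma\} \cup \messageAlphabetCTwo \cup \colors_{\messageAlphabetCTwo} \cup \mcU$, where $\mcU$ consists of unary predicates $Q_{B,\oneType}^{\mathrm{fwd}}(x)$ and $Q_{B,\oneType}^{\mathrm{bwd}}(x)$ meant to express $B(x, y)$ and $B(y, x)$ respectively for the (unique, by the channel-identifying coloring) $\signatureOrientedKTree$-neighbor $y$ of $x$ with 1-type $\oneType$, together with $Q_{B,\self}(x)$ for $B(x, x)$. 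Define $\formulaMSOextended$ as $\formulaMSO$ conjoined with axioms requiring $\signatureOrientedKTree$ to form an oriented $k$-tree and axioms defining each $Q$ from $B$ and the $R_i$. Define $\formulaCTwoExtended$ as $\formulaCTwo$ with each $B \in \Sigma$ renamed to $\copySymbol{B}$, conjoined with Scott normal form axioms for $\messageAlphabetCTwo$, the $\messageAlphabetCTwo$-chromaticity axioms, and link axioms forcing $\copySymbol{B}(x, y)$ to equal the value asserted by $Q_{B, \oneType}(x)$ where $\oneType$ is the 1-type of $y$.

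The main obstacle is that $\formulaCTwoExtended$ cannot refer to $\signatureOrientedKTree$, so its link axioms must identify the right channel and edge-value purely from 1-types. Correctness relies on the combined chromatic colorings: the 1-type of $y$ uniquely picks out the channel $i$ with $R_i(x) = y$ on the $\signatureMSOextended$-side, and the same 1-type information is visible in the shared predicates $\mcU$ on the $\signatureCTwoExtended$-side. For the equivalence, (i)$\Rightarrow$(ii) is a direct construction: expand $\model$ with the orientation, Scott witnesses and color predicates, and set $\copySymbol{B}^{\modelFinal} = B^{\model}$ and the $Q$'s to match $B^{\model}$. For (ii)$\Rightarrow$(i), let $\model$ be the $(\signatureMSO \cup \signatureCTwo)$-reduct of $\modelFinal$ with $B^{\model} = B^{\modelFinal}$ for each $B \in \Sigma$; then $\model \models \formulaMSO$ is immediate, and $\model \models \formulaCTwo$ follows from Lemma~\ref{lem:star-type-equivalence} once one verifies that the set of 2-types realized by $\model$ over $\signatureCTwo$ equals that realized by $\modelFinal$ over $\signatureCTwoExtended$ under the renaming $\copySymbol{B} \leftrightarrow B$. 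This last verification is naturally organized by the paper's ``induction on ranks of structures''---layer by layer on the oriented $k$-tree---so that each newly added tree edge witnesses the matching between $B$ and $\copySymbol{B}$.
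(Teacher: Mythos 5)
There is a genuine gap, and it is located exactly at the sentence where you claim the shared unary predicates are ``strong enough to force $B^{\modelFinal} = \copySymbol{B}^{\modelFinal}$.'' They are not, and the paper does not claim they are; the entire technical core of the paper's proof exists to compensate for this failure. Concretely: your predicate $Q_{B,\oneType}^{\mathrm{fwd}}(x)$ records that $x$ has a $B$-edge to \emph{the} $\signatureOrientedKTree$-neighbor of $x$ with $1$-type $\oneType$, but on the $C^2$ side, which by your own setup has no access to $\signatureOrientedKTree$, a two-variable link axiom can only say ``if $Q_{B,\oneType}^{\mathrm{fwd}}(x)$ and $y$ has $1$-type $\oneType$ then $\copySymbol{B}(x,y)$'' (or an iff-variant). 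This cannot single out the intended witness among all elements of $1$-type $\oneType$: chromaticity only separates the $1$-types of elements joined by a path of length two in the message graph, not globally, so distinct elements $v \neq w$ with $\oneTypeOf(v)=\oneTypeOf(w)$ exist in general, and a model $\modelFinal$ of your axioms may have $B^{\modelFinal}(u,v)$ on one side while $\copySymbol{B}^{\modelFinal}(u,w)$ on the other. Your (ii)$\Rightarrow$(i) step then breaks: setting $B^\model = B^{\modelFinal}$ re-pairs the shared relations with the $C^2$-private relations (e.g.\ the Scott witnesses $\messageTypeCTwo_i$), so the set of realized $2$-types of $\model$ need not coincide with that of $\modelFinal$ under renaming, and Lemma~\ref{lem:star-type-equivalence} does not apply. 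The ``layer by layer'' verification you invoke at the end is not an argument for this; it is the place where an argument is missing.

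The paper's proof accepts that in $\modelFinal$ the two sides genuinely disagree and repairs this by \emph{model surgery}: it extracts the two $\signatureCTwoExtended$-structures $\modelAlt$ and $\modelAlt'$ (agreeing on all $1$-types, including the unary $2$-type annotations $P_\twoType$ and colors, but possibly not on binary relations), defines a rank counting the positions where the orientation functions $\gaifmannRelation_i$ of $\modelAlt$ and $\modelAlt'$ differ, and shows (Lemma~\ref{lem:sequence}) that whenever the rank is positive one can \emph{swap edges} in $\modelAlt$ --- exchanging the $2$-types of at most four pairs --- so as to strictly decrease the rank while preserving the realized $2$-type set (hence $\forall x,y.\,\chi$), $\messageAlphabet$-functionality, chromaticity, and the $P_\twoType$ annotations. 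Only in the final structure $\modelAlt_p$, where the $\gaifmannRelation_i$ agree with those of $\modelAlt'$, do the annotations force the full $\signatureMSO$-relations to agree (Lemmas~\ref{lem:type-agreement} and~\ref{lem:no-deviation}). This swapping argument, together with the case analysis on whether $\twoType^{-1}$ is a message type and the use of chromaticity to certify that the swapped-in edges were previously non-message edges, is the key idea your proposal is missing; without it, or some substitute for it, the hard direction of the theorem does not go through.
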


We assume that $\formulaCTwo$ is in the form given in Eq.~(\ref{eq:normal-form}) for some set of binary relation symbols $\messageAlphabetCTwo = \{\messageTypeCTwo_1, \ldots, \messageTypeCTwo_\messageTypeNumber\} \subseteq \signatureCTwo$ and quantifier-free $\tvlc$-formula $\chi$.
Let $\signatureOrientedKTree = \{\gaifmannRelation_1, \ldots, \gaifmannRelation_k\}$ be a set of fresh binary relation symbols.
We set $\messageAlphabet = \messageAlphabetCTwo \cup \signatureOrientedKTree$.
We begin by giving an intuition for the proof of the Separation Theorem in three stages.

\subsection*{1. Syntactic separation coupled with semantic constraints}

For a binary relation symbol $\intersectionRelation$, we define its \emph{copy} as the relation symbol $\intersectionRelationCopy$.
For every vocabulary $\signatureDefs$, we define its \emph{copy} $\copySymbol{\signatureDefs} = \unary(\signatureDefs) \cup \{ \intersectionRelationCopy \mid \intersectionRelation \in \signatureDefs \}$ to be the unary relation symbols of $\signatureDefs$ plus the copies of its binary relations symbols.
We assume that copied relation symbols are distinct from non-copied symbols, i.e., $\binary(\signatureDefs) \cap \binary(\copySymbol{\signatureDefs}) = \emptyset$.
For a formula $\varphi$ over vocabulary $\signatureDefs$, we define its \emph{copy} $\copySymbol{\varphi}$ over vocabulary $\copySymbol{\signatureDefs}$ as the formula
obtained from $\varphi$ by substituting
every occurrence of a binary relation symbol $\intersectionRelation \in \signatureDefs$ with $\intersectionRelationCopy$.

The sentences $\copySymbol{\formulaMSO}$ (the copy of $\formulaMSO$) and $\formulaCTwo$  do not share any binary relation symbols.
Clearly, \textit{(i)} from Theorem~\ref{th:sep-proof-idea} holds iff
\begin{enumerate}[\textit{(I)}]
\item\label{enum:intutionI} $\copySymbol{\formulaMSO} \land \formulaCTwo$ is satisfied by a
$((\copySymbol{\signatureMSO} \cup \copySymbol{\signatureCTwo}) \cup (\signatureMSO \cup \signatureCTwo))$-structure $\modelFinal$ with
$\intersectionRelation^\modelFinal = \intersectionRelationCopy^\modelFinal$
for all $\intersectionRelation \in \binary(\signatureMSO)$ and $\gaif(\model|_\signatureMSO) = \boundedGraph$.
\end{enumerate}

\noindent
In the next two stages we will construct $\formulaMSOextended$ and $\formulaCTwoExtended$
so
that \textit{(I)} is equivalent to
 \textit{(ii)} from Theorem~\ref{th:sep-proof-idea}.
 More precisely, we will
construct sentences
$\adjunctFormula_\bounded,\adjunctFormula_\unbounded \in \tvlc(\signatureCTwoExtended)$ with
$\signatureCTwoExtended \supseteq \signatureMSO \cup \signatureCTwo$ and
$\signatureMSOextended = \copySymbol{\signatureCTwoExtended}$ such that \textit{(I)} is equivalent
\textit{(II)}:

\begin{enumerate}[\textit{(II)}]
\item\label{enum:intutionII} $(\copySymbol{\formulaMSO} \land \copySymbol{\adjunctFormula_\bounded}) \land (\formulaCTwo \land \adjunctFormula_\unbounded)$ is satisfied by a
$(\signatureMSOextended \cup \signatureCTwoExtended)$-structure $\modelFinal$ with  $\gaif(\modelFinal|_\signatureMSOextended) = \boundedGraph$.
\end{enumerate}


\subsection*{2. Representation of $k$-bounded structures using functions and unary relations.}

Theorem~\ref{th:sep-proof-idea} as well as \textit{(I)} and \textit{(II)} involve reducts which are $k$-bounded structures.
$k$-bounded $\signatureDefs$-structures $\modelA$
can be represented by introducing new binary relation symbols interpreted as functions and new unary relation symbols as follows.
\begin{enumerate}[(a)]
 \item We add $k$ fresh relation symbols $\signatureOrientedKTree = \{\gaifmannRelation_1, \ldots, \gaifmannRelation_k\}$
and axiomatize that these relations are interpreted as total functions.
\item We add fresh unary relations $\{P_\twoType \mid \twoType \in  \MsgTypes{\signatureOrientedKTree}(\signatureDefs)\}$
and axiomatize that every element labeled by $P_\twoType$ has an outgoing edge with 2-type $\twoType$.
The symbols $P_\twoType$ are called \emph{unary $2$-type annotations}.
\item We axiomatize that $\gaif(\modelA|_{\binary(\signatureDefs)})=\gaif(\modelA|_\signatureOrientedKTree)$.
\end{enumerate}

In other words, the functions interpreting $\gaifmannRelation_1, \ldots, \gaifmannRelation_k$ witness that $\modelA$
can be oriented so that every node in the Gaifman graph of $\modelA$ has outdegree at most $k$. The $2$-type of each edge $(u,v)$ in
$\modelA$ is encode by putting the unary relation symbol $P_\twoType$ of the $2$-type of $(u,v)$ on the source $u$
in the orientation. 

Given a $((\copySymbol{\signatureMSO} \cup \copySymbol{\signatureCTwo}) \cup (\signatureMSO \cup \signatureCTwo))$-structure $\modelFinal$, we will use the above representation twice, on $\modelFinal|_\signatureMSO$ and $\modelFinal|_{\copySymbol{\signatureMSO}}$, 
by axiomatizing that every element labeled by $P_\twoType$ has an outgoing edge with 2-type $\twoType$ and an outgoing edges with 2-type $\copySymbol{\twoType}$.
This will allow us to replace the condition from \textit{(I)} that $\intersectionRelation^\modelFinal = \intersectionRelationCopy^\modelFinal$ for all
$\intersectionRelation \in \binary(\signatureMSO)$ with the condition that
$\gaifmannRelation_i^\modelFinal = \copySymbol{\gaifmannRelation_i}^\modelFinal$ for all $\gaifmannRelation_i \in \signatureOrientedKTree$.
We will define a vocabulary 
$\signatureMain \supseteq \signatureMSO \cup \signatureCTwo \cup \signatureOrientedKTree $.
Let $\signatureP_{\signatureOrientedKTree} = \{P_\twoType \mid \twoType \in  \MsgTypes{\signatureOrientedKTree}(\signatureMain)\}$. 
We will define two vocabularies 
$\signatureCTwoExtended \supseteq \signatureMSO \cup \signatureCTwo
\cup \signatureOrientedKTree \cup \signatureP_{\signatureOrientedKTree}$ and $\signatureMSOextended = \copySymbol{\signatureCTwoExtended}$. 
According to (a), (b), and (c), we will construct
$\nu_\bounded,\nu_\unbounded \in \tvlc(\signatureCTwoExtended)$ such that
\textit{(I)} is equivalent to the following:
\begin{enumerate}[\textit{(I')} ]
\item $(\copySymbol{\formulaMSO}\land \copySymbol{\nu_\bounded} ) \land (\formulaCTwo \land \nu_\unbounded)$ is satisfied by a
$\signatureMSOextended\cup\signatureCTwoExtended$-structure $\model$ with
$\gaifmannRelation_i^\model = \copySymbol{\gaifmannRelation_i}^\model$
for all $\gaifmannRelation_i \in \signatureOrientedKTree$
 and $\gaif(\model|_\signatureOrientedKTree) = \boundedGraph$.
\end{enumerate}

%

%

\subsection*{3. Establishing the semantic condition of \textit{(I')} by swapping edges.}
Here we discuss how to show the implication from \textit{(II)} to \textit{(I')} and make the vocabularies
$\signatureMSOextended$ and $\signatureCTwoExtended$ precise.
Let $\modelFinal$ be a $(\signatureMSOextended \cup \signatureCTwoExtended )$-structure
with $\modelFinal \models (\copySymbol{\formulaMSO} \land \copySymbol{\adjunctFormula_\bounded}) \land (\formulaCTwo \land \adjunctFormula_\unbounded)$.
It simplifies the discussion to split a
$(\signatureMSOextended \cup \signatureCTwoExtended )$-structure $\modelFinal$ into two
$\signatureCTwoExtended$-structures.
The $\signatureCTwoExtended$-structure $\modelAlt$ is $ \modelFinal|_\signatureCTwoExtended$.
The $\signatureCTwoExtended$-structure $\modelAlt'$ is obtained from $\modelFinal|_\signatureMSOextended$ by renaming
copies of relation symbols $\copySymbol{B}$ to $B$  --- i.e., we define the $\signatureCTwoExtended$-structure $\modelAlt'$ by setting $\oneTypeOf^{\modelAlt'}(u) = \oneTypeOf^\modelFinal(u)$ for
all $u \in \modelUniverse$ and setting $\modelAlt' \models \intersectionRelation(u,v)$ iff $\modelFinal \models \intersectionRelationCopy(u,v)$ for all $u,v \in \modelUniverse$ and $\intersectionRelation \in \binary(\signatureMSOextended)$.
The interpretations of the relations $\gaifmannRelation_i$ might differ in $\modelAlt$ and $\modelAlt'$.
Observe that we have $\modelAlt'\models \formulaMSO$ and $\modelAlt \models \formulaCTwo$.
The \emph{key idea of the proof} is prove the existence of a sequence of structures $\modelAlt = \modelAlt_0, \ldots, \modelAlt_p$,
where each $\modelFinal_{i+1}$ is obtained from $\modelFinal_i$  by \emph{swapping edges},
until the interpretations of the relations $\gaifmannRelation_i$ agree in $\modelAlt_p$ and $\modelAlt'$.
The edge swapping operation is a local operation which involves changing the $2$-types of at most $4$ edges.

The edge swapping operation satisfies two crucial preservation requirements: edge swapping preserves (PR-1) the truth value of $\formulaCTwo$, i.e.
$\modelAlt_p \models \formulaCTwo$, and (PR-2) $\signatureOrientedKTree$-functionality.
The universal constraint $\forall x,y.\,\chi$ in $\formulaCTwo$
is maintained under edge swapping because this operation does not change the set of $2$-types
(see Lemma~\ref{lem:star-type-equivalence}). To satisfy the preservation requirements (PR-1) and (PR-2), all that remains is to guarantee
the existence of a sequence of edge swapping preserving
$\messageAlphabet$-functionality. 
Note that $\messageAlphabet$-functionality amounts to $\signatureOrientedKTree$-functionality and  $\messageAlphabetCTwo$-functionality.
We use two main techqniues for ensuring the  preservation of $\messageAlphabet$-functionality:
chromaticity and unary $2$-type annotations.
We will axiomatize that the structures $\modelAlt$ and $\modelAlt'$ are chromatic and we will take care that chromaticity is maintained during edge swaps.
We will add fresh unary relation symbols
 $\signatureP_{\messageAlphabetCTwo} = \{P_\twoType \mid \twoType \in  \MsgTypes{\messageAlphabetCTwo}(\signatureMain)\}$
 and axiomatize that every element of $\modelFinal$ labeled by $P_\twoType$ has an outgoing edge with 2-type $\twoType$ and an outgoing edge with 2-type $\copySymbol{\twoType}$.

\subsection*{Proof of the Separation Theorem}

\vspace{4pt}
We now start the formal proof of the Separation Theorem.
Let $\colors_\messageAlphabet$ be the vocabulary from Lemma~\ref{lem:chromatic-models}.
We set $\signatureMain = \signatureMSO \cup \signatureCTwo \cup \signatureOrientedKTree \cup \colors_\messageAlphabet$.
We set $\signatureMSOextended = \signatureMain \cup \signatureP_{\messageAlphabetCTwo} \cup \signatureP_{\signatureOrientedKTree}$ 
and $\signatureCTwoExtended = \copySymbol{\signatureMSOextended}$.
Next we will define formulas $\formulaMSOcopy \in \mso(\signatureMSOextended)$ and $\formulaCTwoExtended \in \tvlc(\signatureMSOextended)$, and set $\formulaMSOextended = \copySymbol{\formulaMSOcopy} \in \mso(\signatureCTwoExtended)$.
We set $\nu_\bounded = \functionFormula \land \gaifmannFormula \land \containmentFormula \land \definitionFormula_\bounded$, 
$\nu_\unbounded = \functionFormula \land \gaifmannFormula \land \definitionFormula_\unbounded$, 
$\adjunctFormula_\bounded = \nu_\bounded \land \colorFormula$, and 
$\adjunctFormula_\unbounded  = \nu_\unbounded \land \colorFormula$, 
$\formulaMSOcopy = \formulaMSO \land \adjunctFormula_\bounded$, 
and $\formulaCTwoExtended = \formulaCTwo \land \adjunctFormula_\unbounded$, where:

\[
\begin{array}{lll}
 \functionFormula & = & \bigwedge_{i \in [k]} \forall x.\, \exists^{= 1} y.\, \gaifmannRelation_i(x,y) \\[7pt]
 \gaifmannFormula & = & \forall x,y.\,
    x \not\approx y \rightarrow \left(
    \bigvee_{i \in [k]} \gaifmannRelation_i(x,y) \vee \gaifmannRelation_i(y,x) 
    \leftrightarrow \bigvee_{\intersectionRelation \in \binary(\signatureMSO)} \intersectionRelation(x,y) \vee \intersectionRelation(y,x) \right)\\
 \containmentFormula & = & \bigwedge_{\intersectionRelation \in \binary(\signatureCTwo)} \forall x,y.\,
    x \not\approx y \rightarrow \left(
    \intersectionRelation(x,y) \rightarrow \bigvee_{i \in [k]} \gaifmannRelation_i(x,y) \vee \gaifmannRelation_i(y,x) \right)\\
 \end{array}
 \]
 
 \[
 \begin{array}{lll}
 \definitionFormula_\bounded & = & \displaystyle
 \bigwedge_{
           \scriptsize
           \begin{aligned}
            &\,\,\twoType \in \MsgTypes{\signatureOrientedKTree}(\signatureMain) \mbox{ or } \\[-4pt]
            &(\twoType \in \MsgTypes{\messageAlphabet}(\signatureMain) \mbox{ and }\twoType^{-1} \in \MsgTypes{\signatureOrientedKTree}(\signatureMain))\\[-5pt]
            &
            \end{aligned}
           }
 \forall x.\, P_\twoType(x) \leftrightarrow \exists y. \typeFormula_\twoType(x,y)\\[7pt]
 \definitionFormula_\unbounded & = & \displaystyle \bigwedge_{
           \scriptsize
           \begin{aligned}
            &\twoType \in \MsgTypes{\messageAlphabet}(\signatureMain)\\[-5pt]
            &
            \end{aligned}
            } \forall x.\, P_\twoType(x) \leftrightarrow \exists y. \typeFormula_\twoType(x,y)\\[7pt]
 \colorFormula & = &\displaystyle \bigwedge_{\scriptsize
                                              \begin{aligned}
                                                 & \twoType \in \MsgTypes{\messageAlphabet}(\signatureMain) \\[-4pt] 
                                                 & \messageType, \messageType' \in \messageAlphabet:\,\messageType'(x,y) \in \twoType
                                               \end{aligned} 
                                            }
 \forall x,y.\, P_\twoType(x) \land \messageType(y,x) \land \neg \messageType'(x,y) \rightarrow \neg \typeFormula_{\twoType_y}(y)
 \end{array}
\]

$\functionFormula$ expresses that each $\gaifmannRelation_i$ is interpreted as a total function,
$\gaifmannFormula$ expresses that for every $\signatureCTwoExtended$-structure $\modelAlt$ with 
$\modelAlt \models \gaifmannFormula$ we have that $\gaif(\modelAlt|_\signatureOrientedKTree) = \gaif(\modelAlt|_\signatureMSO)$,
$\containmentFormula$ expresses that for every $\signatureCTwoExtended$-structure $\modelAlt$ with 
$\modelAlt \models \containmentFormula$ we have that $\gaif(\modelAlt|_\signatureCTwoExtended)$ is a subgraph of $\gaif(\modelAlt|_\signatureOrientedKTree)$,
$\definitionFormula$ expresses that for every $\twoType \in \MsgTypes{\messageAlphabet}(\signatureMain)$ with 
$\twoType \in \MsgTypes{\signatureOrientedKTree}(\signatureMain)$ or 
$\twoType \in \MsgTypes{\signatureOrientedKTree}(\signatureMain)$, 
$\signatureCTwoExtended$-structure $\modelAlt$ and element $u$ of $\modelAlt$ we have $\modelAlt \models P_\twoType(u)$ 
iff there is an element $v$ of $\modelAlt$ such that $\twoTypeOf^{\modelAlt|_\signatureMain}(u,v) = \twoType$,
$\definitionFormula$ expresses that for every $\twoType \in \MsgTypes{\messageAlphabet}(\signatureMain)$, $\signatureCTwoExtended$-structure 
$\modelAlt$ and element $u$ of $\modelAlt$ we have $\modelAlt \models P_\twoType(u)$ iff there is an element $v$ of $\modelAlt$ such that $\twoTypeOf^{\modelAlt|_\signatureMain}(u,v) = \twoType$,
$\colorFormula$ expresses that for every $\signatureCTwoExtended$-structure $\modelAlt$ with $\modelAlt \models \definitionFormula_\unbounded$, $\modelAlt$ is chromatic iff $\modelAlt \models \colorFormula$.

The direction ``(i) implies (ii)'' of the Separation Theorem is not hard:

\begin{lem}\label{lem:expansion}
Let $\boundedGraph$ be a $k$-bounded graph.
Let $\model$ be a $(\signatureMSO \cup \signatureCTwo)$-structure with $\model \models \formulaMSO \wedge \formulaCTwo$ and $\gaif(\model|_\signatureMSO) = \boundedGraph$.
$\model$ can be expanded to a $(\signatureMSOextended \cup \signatureCTwoExtended)$-structure $\modelFinal$ with
$\modelFinal \models \formulaMSOextended \wedge \formulaCTwoExtended$ and $\gaif(\modelFinal|_\signatureMSOextended) = \boundedGraph$.
\end{lem}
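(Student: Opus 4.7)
The plan is to explicitly construct $\modelFinal$ by successive expansion of $\model$, using the $k$-boundedness of $\boundedGraph$ to supply the orientation functions, Lemma~\ref{lem:chromatic-models} to supply the coloring, and the semantic axioms to define the unary 2-type annotations.

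First, since $\boundedGraph$ is $k$-bounded, fix an orientation of its edges in which every vertex has out-degree less than $k$. Using this orientation, interpret the fresh symbols $\gaifmannRelation_1,\ldots,\gaifmannRelation_k$ on the universe of $\model$ as follows: for each element $u$, enumerate its out-neighbors $v_1,\ldots,v_m$ (with $m<k$), set $\gaifmannRelation_i^\modelFinal(u)=v_i$ for $i\le m$, and complete with self-loops $\gaifmannRelation_i^\modelFinal(u)=u$ for $m<i\le k$. This makes each $\gaifmannRelation_i$ a total function (yielding $\functionFormula$) and, since self-loops do not contribute to the Gaifman graph, yields $\gaif(\modelFinal|_\signatureOrientedKTree)=\boundedGraph=\gaif(\model|_\signatureMSO)$, which gives $\gaifmannFormula$. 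Because $\formulaCTwo$ is in Scott normal form, $\model$ is $\messageAlphabetCTwo$-functional, and together with the total functions $\gaifmannRelation_i$ the current expansion is $\messageAlphabet$-functional.

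Second, invoke Lemma~\ref{lem:chromatic-models} to expand the structure further by the unary relations in $\colors_\messageAlphabet$ so that it becomes $\messageAlphabet$-chromatic. By the shape of the formula $\colorFormula$ (which encodes precisely the failure of $\oneTypeOf$-agreement between endpoints of message-graph edges), chromaticity implies $\modelFinal\models\colorFormula$. Next, for each $\twoType\in\MsgTypes{\messageAlphabet}(\signatureMain)$, interpret $P_\twoType^\modelFinal$ as the set of elements realizing $\twoType$ as the $x$-element, i.e.\ $P_\twoType^\modelFinal(u)$ iff there exists $v$ with $\twoTypeOf^{\modelFinal|_\signatureMain}(u,v)=\twoType$. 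By construction, this choice of interpretation makes both $\definitionFormula_\bounded$ and $\definitionFormula_\unbounded$ hold, while having no effect on the Gaifman graph (since $\signatureP_{\messageAlphabetCTwo}\cup\signatureP_{\signatureOrientedKTree}$ are unary).

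Third, populate the copy vocabulary by setting $\copySymbol{B}^\modelFinal=B^\modelFinal$ for every binary relation $B$ in the uncopied part of the extended vocabulary (and analogously agreeing on unary relations, which are shared by construction). Since the copy operator acts as a pure renaming of binary symbols, $\modelFinal$ satisfies $\copySymbol{\varphi}$ iff it satisfies $\varphi$ for every formula $\varphi$; in particular $\copySymbol{\formulaMSO}$ holds because $\model\models\formulaMSO$, and each copied conjunct of $\adjunctFormula_\bounded$ inherits from its uncopied counterpart verified above. The remaining conjunct $\containmentFormula$ holds on those $B\in\binary(\signatureCTwo)$ for which $B^\modelFinal$ is contained in $\boundedGraph$; on the binary symbols in $\binary(\signatureCTwo)\setminus\binary(\signatureMSO)$ we have freedom to interpret the copies $\copySymbol{B}$ so that the copied $\containmentFormula$ is vacuously true, thereby decoupling the copy side from edges that live outside $\boundedGraph$ in $\model$.

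The main technical care is in paragraph three: keeping the copy and non-copy sides consistent with respect to (i) the Gaifman graph constraint $\gaif(\modelFinal|_\signatureMSOextended)=\boundedGraph$, and (ii) the shared unary predicates $P_\twoType$ that must encode realized $2$-types on both sides simultaneously. Both are handled by defining the copies of the $\signatureMSO$-binaries and of the $\gaifmannRelation_i$ to be literal copies of the originals (so that 2-types agree) while the fresh $P_\twoType$ and $\colors_\messageAlphabet$ symbols are fixed once and for all by their semantic characterization; the orientation produced by $k$-boundedness and the chromaticity produced by Lemma~\ref{lem:chromatic-models} are what make everything consistent at the end.
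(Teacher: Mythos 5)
Your first two steps (orienting $\boundedGraph$ to interpret $\gaifmannRelation_1,\ldots,\gaifmannRelation_k$ as total functions padded with self-loops, invoking Lemma~\ref{lem:chromatic-models} for chromaticity, and then reading off the $P_\twoType$ from the realized $2$-types) coincide with the paper's proof and are fine. The gap is in your third step, the definition of the copied binary relations, and it is exactly the point where the construction is delicate. You first set $\copySymbol{B}^\modelFinal = B^\modelFinal$ for \emph{every} binary $B$, and then, noticing that this breaks $\copySymbol{\containmentFormula}$ for $B \in \binary(\signatureCTwo)\setminus\binary(\signatureMSO)$ whose interpretation leaves $\boundedGraph$, you propose to reinterpret those copies so that $\copySymbol{\containmentFormula}$ becomes ``vacuously true'', i.e.\ essentially $\copySymbol{B}^\modelFinal = \emptyset$. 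This destroys the other conjuncts you must verify on the copy side. The predicates $P_\twoType$ are \emph{shared} unary symbols, and $\copySymbol{\definitionFormula_\bounded}$ demands that $P_\twoType(u)$ hold iff the \emph{copied} structure realizes the copied $2$-type at $u$ --- where a $2$-type over $\signatureMain$ records membership in \emph{all} binary relations, including those of $\binary(\signatureCTwo)$ and $\messageAlphabetCTwo$. If you empty (or otherwise alter) $\copySymbol{B}$ on a pair $(u,v)$ that is $\signatureOrientedKTree$-Gaifman-adjacent and satisfies $B(u,v)$, the copied $2$-type of $(u,v)$ no longer matches the uncopied one, the $P_\twoType$ computed from the uncopied side falsify $\copySymbol{\definitionFormula_\bounded}$ (and potentially $\copySymbol{\colorFormula}$), and your parenthetical claim that ``2-types agree'' is false under your own construction. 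Conversely, keeping literal copies everywhere falsifies $\copySymbol{\containmentFormula}$ and makes the Gaifman graph of the copied reduct a strict supergraph of $\boundedGraph$. So neither of the two options you offer works.

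The resolution in the paper is a third option you did not consider: for every binary $B$, set $\copySymbol{B}^\modelFinal(u,v)$ iff $B^\modelFinal(u,v)$ \emph{and} $\gaifmannRelation_i^\modelFinal(u,v)$ or $\gaifmannRelation_i^\modelFinal(v,u)$ for some $i\in[k]$. This leaves the relations of $\binary(\signatureMSO)$ and of $\signatureOrientedKTree$ unchanged (since $\gaif(\model|_\signatureMSO)=\boundedGraph=\gaif$ of the $\signatureOrientedKTree$-reduct, and every vertex carries a padding self-loop), so $\copySymbol{\formulaMSO}$, $\copySymbol{\functionFormula}$, $\copySymbol{\gaifmannFormula}$ and the Gaifman-graph condition hold; it makes $\copySymbol{\containmentFormula}$ true by construction; and, crucially, it preserves the full $\signatureMain$-$2$-type on precisely the pairs that are $\signatureOrientedKTree$-adjacent or equal, which is exactly where the $2$-types quantified over in $\definitionFormula_\bounded$ (those in, or with inverse in, $\MsgTypes{\signatureOrientedKTree}(\signatureMain)$) can be realized. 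That is what lets the shared $P_\twoType$ simultaneously satisfy $\definitionFormula_\unbounded$ and $\copySymbol{\definitionFormula_\bounded}$. Without this (or an equivalent) definition, your proof does not go through.
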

\begin{proof}
Because of $\gaif(\model|_\signatureMSO) = \boundedGraph$ and $G$ is $k$-bounded, we can expand $\model$ to a $\signatureMSO \cup \signatureCTwo \cup \signatureOrientedKTree$-structure $\modelHat$ such that $\gaif(\modelAlt|_\signatureOrientedKTree) = \gaif(\modelAlt|_\signatureMSO)$ and
$\gaifmannRelation^\modelHat_i$ is a total function for all $i \in [k]$ (possibly adding self-loops for the relations $\gaifmannRelation_i$).
Thus, $\modelHat \models \functionFormula \wedge \gaifmannFormula$.
According to Lemma~\ref{lem:chromatic-models}, $\modelHat$ can be expanded to a chromatic structure $\modelTilde$ over vocabulary $\signatureMain$ with $\modelTilde \models \colorFormula$.
We expand $\modelTilde$ to a $\signatureMSOextended$-structure $\modelAlt$ such that for all $u \in \modelUniverse$ and $\twoType \in \MsgTypes{\messageAlphabet}(\signatureMain)$ we have $\modelAlt \models P_\twoType(u)$ iff  there is an element $v$ of $\modelAlt$ such that $\twoTypeOf^{\modelAlt|_\signatureMain}(u,v) = \twoType$.
This definition gives us $\modelAlt \models \definitionFormula_\unbounded$, and thus $\modelAlt \models \formulaCTwoExtended$.
We expand $\modelAlt$ to a $(\signatureMSOextended \cup \signatureCTwoExtended)$-structure $\modelFinal$ such that for all $u,v \in \modelUniverse$ and $\intersectionRelation \in \signatureIntersection$ we have
$\modelFinal \models \intersectionRelationCopy(u,v)$ iff
$\modelFinal \models \intersectionRelation(u,v)$ and $\modelFinal \models \gaifmannRelation_i(u,v)$ or $\modelFinal \models \gaifmannRelation_i(v,u)$ for some $i \in [k]$.
We note that $\modelFinal \models \formulaMSOextended$.
\end{proof}

Now we turn to the direction ``(ii) implies (i)''.
Let $\boundedGraph$ be a $k$-bounded graph.
Let $\modelFinal$ be a $(\signatureMSOextended \cup \signatureCTwoExtended )$-structure with $\modelFinal \models \formulaMSOextended \wedge \formulaCTwoExtended$ and
$\gaif(\modelFinal|_\signatureMSOextended) = \boundedGraph$.
Let $\modelUniverse$ be the universe of $\modelFinal$.
We define the $\signatureCTwoExtended$-structure $\modelAlt'$ by setting $\oneTypeOf^{\modelAlt'}(u) = \oneTypeOf^\modelFinal(u)$ for all $u \in \modelUniverse$ and setting $\modelAlt' \models \intersectionRelation(u,v)$ iff $\modelFinal \models \intersectionRelationCopy(u,v)$ for all $u,v \in \modelUniverse$ and $\intersectionRelation \in \binary(\signatureMSOextended)$.
We note that $\modelAlt' \models \formulaMSOcopy$ and $\gaif(\modelAlt') = G$.
We define the $\signatureCTwoExtended$-structure $\modelAlt$ by setting $\modelAlt = \modelFinal|_\signatureCTwoExtended$.
We note that $\modelAlt \models \formulaCTwoExtended$.

We make the following definition:
For $u \in \modelUniverse$ and $i \in [k]$ we set $\rank^i_u(\modelAlt,\modelAlt') = 1$, if there are $v,w \in \modelUniverse$ with $\modelAlt \models \gaifmannRelation_i(u,v)$, $\modelAlt' \models \gaifmannRelation_i(u,w)$ and $v \neq w$; we set $\rank^i_u(\modelAlt,\modelAlt') = 0$, otherwise.
We set $\rank_u(\modelAlt,\modelAlt')  = \sum_{i \in [k]} \rank^i_u(\modelAlt,\modelAlt')$ and $\rank(\modelAlt,\modelAlt') = \sum_{u \in \modelUniverse} \rank_u(\modelAlt,\modelAlt')$.
$\rank$ measures the deviation of the relations $\signatureOrientedKTree$ in $\modelAlt$ and $\modelAlt'$ (we note that there always are unique $v,w \in \modelUniverse$ for $u \in \modelUniverse$ with $\modelAlt \models \gaifmannRelation_i(u,v)$, $\modelAlt' \models \gaifmannRelation_i(u,w)$ because of $\modelAlt \models \functionFormula$ and $\modelAlt' \models \functionFormula$) and has the following important property:

\begin{lem}\label{lem:no-deviation}
If $\rank(\modelAlt,\modelAlt') = 0$, then $\modelAlt|_{\signatureMSO \cup \signatureCTwo} \models \formulaMSO \wedge \formulaCTwo$ and $\gaif(\modelAlt_\signatureMSO) = \boundedGraph$.
\\
(See Appendix~\ref{ap:lem:no-deviation} for the proof.)
\end{lem}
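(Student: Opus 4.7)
The plan is to reduce both conclusions of the lemma to the single statement that, under $\rank(\modelAlt,\modelAlt')=0$, the $\signatureCTwoExtended$-structures $\modelAlt$ and $\modelAlt'$ agree on every symbol of $\signatureMSO$. Granted this, $\modelAlt \models \formulaCTwo$ is immediate because $\formulaCTwo$ is a conjunct of $\formulaCTwoExtended$ and $\modelAlt \models \formulaCTwoExtended$; the residual claim $\modelAlt|_\signatureMSO \models \formulaMSO$ transfers from $\modelAlt' \models \formulaMSO$ (itself a conjunct of $\formulaMSOcopy$); and $\gaif(\modelAlt|_\signatureMSO) = \boundedGraph$ follows from $\gaif(\modelFinal|_\signatureMSOextended) = \boundedGraph$ together with $\gaifmannFormula$ and $\containmentFormula$ in $\modelAlt'$, which collapse the Gaifman graphs of $\signatureOrientedKTree$, $\binary(\signatureCTwo)$ and $\binary(\signatureMSO)$ to a common graph.

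By construction $\modelAlt$ and $\modelAlt'$ share the universe $\modelUniverse$ of $\modelFinal$ and agree on every unary relation symbol, so it suffices to prove $B^\modelAlt(u,v) \Leftrightarrow B^{\modelAlt'}(u,v)$ for every $B \in \binary(\signatureMSO)$ and every pair $(u,v) \in \modelUniverse^2$. We case-split on whether $u$ and $v$ are joined by a $\signatureOrientedKTree$-edge in either direction in $\modelAlt$; by $\rank=0$ the same edges are present in $\modelAlt'$. For $u \neq v$ without such an edge, $\gaifmannFormula$ (which holds in both structures) forbids any $\binary(\signatureMSO)$-edge between $u$ and $v$ in either structure, so the required agreement is trivial.

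The core case is $\gaifmannRelation_i^\modelAlt(u) = v$ for some $i \in [k]$. Set $\twoType = \twoTypeOf^{\modelAlt|_\signatureMain}(u,v)$; since $\gaifmannRelation_i(x,y) \in \twoType$, we have $\twoType \in \MsgTypes{\signatureOrientedKTree}(\signatureMain)$. Applying $\definitionFormula_\unbounded$ in $\modelAlt$ yields $\modelAlt \models P_\twoType(u)$, hence $\modelAlt' \models P_\twoType(u)$ because $P_\twoType$ is unary. The range of $\definitionFormula_\bounded$ covers $\twoType$, so $\definitionFormula_\bounded$ in $\modelAlt'$ supplies a $w$ with $\twoTypeOf^{\modelAlt'|_\signatureMain}(u,w) = \twoType$. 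Since $\gaifmannRelation_i(x,y) \in \twoType$ and $\functionFormula$ forces $\gaifmannRelation_i$ to be a total function in $\modelAlt'$, $w$ must equal $\gaifmannRelation_i^{\modelAlt'}(u)$, which by $\rank=0$ equals $\gaifmannRelation_i^\modelAlt(u) = v$. Hence $\twoTypeOf^{\modelAlt'|_\signatureMain}(u,v) = \twoType$, and in particular every $\binary(\signatureMSO)$-literal at $(u,v)$ agrees in $\modelAlt$ and $\modelAlt'$. The reverse-edge subcase $\gaifmannRelation_i^\modelAlt(v) = u$ reduces to this by applying the argument at $(v,u)$ and using that $\twoType \mapsto \twoType^{-1}$ is an involution, and a $\signatureOrientedKTree$-self-loop case $u=v$ is treated identically by taking a $\twoType$ containing $x \approx y$ so that the witness is forced to equal $u$.

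The main obstacle we anticipate is making the case analysis exhaustive at pairs $(u,u)$ where $u$ carries no $\signatureOrientedKTree$-self-loop, since $\gaifmannFormula$ is then silent; the remedy is to invoke the $P_\twoType$ annotations for self-loop $2$-types (those containing $x \approx y$) at vertices with an $\messageAlphabet$-self-loop, and to appeal to the ambient constraint $\gaif(\modelFinal|_\signatureMSOextended) = \boundedGraph$ to exclude any rogue $\binary(\signatureMSO)$-self-loops in the remaining case. Once this last piece is secured, $\modelAlt|_\signatureMSO = \modelAlt'|_\signatureMSO$ is established, and both conclusions of the lemma follow as outlined in the first paragraph.
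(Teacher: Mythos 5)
Your proposal is correct and follows essentially the same route as the paper: your ``core case'' is precisely the paper's Lemma~\ref{lem:type-agreement} (transfer the $2$-type of a $\gaifmannRelation_i$-edge via $P_\twoType$, $\definitionFormula_\unbounded$/$\definitionFormula_\bounded$, $\functionFormula$ and $\rank=0$), which the paper then combines with $\gaifmannFormula$ and $\containmentFormula$ exactly as you do to get $\modelAlt|_\signatureMSO=\modelAlt'|_\signatureMSO$ and the Gaifman-graph identity. The only detail where you diverge is the self-loop case, which the paper silently ignores; your proposed remedy via $\gaif(\modelFinal|_\signatureMSOextended)=\boundedGraph$ does not quite work (Gaifman graphs are simple, so they do not record self-loops), but the cleaner fix is that atoms $B(x,x)$ belong to the $1$-type, and $\modelAlt$ and $\modelAlt'$ agree on $1$-types by construction.
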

The proof of Lemma~\ref{lem:no-deviation} uses the following simple but useful property of the rank function:
\begin{lem}
\label{lem:type-agreement}
Let $u \in \modelUniverse$ be an element with $\rank^i_u(\modelAlt,\modelAlt') = 0$ and let $\twoType \in \twoTypes(\signatureMain)$ with $\gaifmannRelation_i \in \twoType$.
For all $v \in \modelUniverse$ we have $\twoTypeOf^{\modelAlt|_\signatureMain}(u,v) = \twoType$ iff
$\twoTypeOf^{\modelAlt'|_\signatureMain}(u,v) = \twoType$.\\
(See Appendix~\ref{ap:lem:type-agreement} for the proof.)
\end{lem}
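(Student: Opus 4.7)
The plan is to route through the unary $2$-type annotations $P_\twoType$, which (being unary) have the same interpretation in $\modelAlt$ and $\modelAlt'$, and to use the definition formulas $\definitionFormula_\bounded$, $\definitionFormula_\unbounded$ to read off realization of $\twoType$ at $u$ from the presence of $P_\twoType(u)$. I expect the only nontrivial step to be checking that $\definitionFormula_\bounded$ covers the particular $2$-type under consideration.

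First I would fix the unique element $v_i \in \modelUniverse$ with $\modelAlt \models \gaifmannRelation_i(u, v_i)$ and $\modelAlt' \models \gaifmannRelation_i(u, v_i)$. It exists and is the same in both structures because $\modelAlt \models \functionFormula$, $\modelAlt' \models \functionFormula$ and $\rank^i_u(\modelAlt, \modelAlt') = 0$. Since $\gaifmannRelation_i \in \twoType$, functionality implies that $\twoTypeOf^{\modelAlt|_\signatureMain}(u,v) = \twoType$ forces $v = v_i$, and analogously for $\modelAlt'$; so it is enough to verify the equivalence at $v = v_i$ (both sides being vacuously false otherwise).

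Next, observe that $\twoType \in \MsgTypes{\signatureOrientedKTree}(\signatureMain)$, because $\gaifmannRelation_i \in \signatureOrientedKTree$ and $\gaifmannRelation_i \in \twoType$. This is precisely the hypothesis under which both $\definitionFormula_\unbounded$ and $\definitionFormula_\bounded$ contain the conjunct $\forall x.\,P_\twoType(x) \leftrightarrow \exists y.\,\typeFormula_\twoType(x,y)$. From $\modelAlt \models \definitionFormula_\unbounded$ (which follows from $\modelAlt \models \formulaCTwoExtended$) and $\modelAlt' \models \definitionFormula_\bounded$ (which follows from $\modelAlt' \models \formulaMSOcopy$), each structure satisfies $P_\twoType(u)$ iff $\twoType$ is realized at $u$ as the $2$-type of some pair $(u,v)$.

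Finally, I would exploit that the unary predicate $P_\twoType$ has the same interpretation in $\modelAlt$ and $\modelAlt'$: both inherit it from $\modelFinal$ since unary relations are not copied, and $\modelAlt'$ was constructed so that $\oneTypeOf^{\modelAlt'}(u) = \oneTypeOf^\modelFinal(u)$. Hence $\modelAlt \models P_\twoType(u)$ iff $\modelAlt' \models P_\twoType(u)$; combined with the two equivalences from the previous paragraph and the uniqueness of $v_i$ as the only possible witness, this gives $\twoTypeOf^{\modelAlt|_\signatureMain}(u, v_i) = \twoType$ iff $\twoTypeOf^{\modelAlt'|_\signatureMain}(u, v_i) = \twoType$, as required. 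The main obstacle is confirming applicability of $\definitionFormula_\bounded$ to $\twoType$, but this is immediate from the hypothesis $\gaifmannRelation_i \in \twoType$ together with $\gaifmannRelation_i \in \signatureOrientedKTree$.
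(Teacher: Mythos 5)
Your proposal is correct and follows essentially the same route as the paper's proof: both pass through the unary annotation $P_\twoType(u)$ via $\definitionFormula_\unbounded$ in $\modelAlt$ and $\definitionFormula_\bounded$ in $\modelAlt'$, use the agreement of $1$-types to transfer $P_\twoType(u)$ between the two structures, and use $\rank^i_u(\modelAlt,\modelAlt')=0$ together with $\functionFormula$ to pin the witness down to the common $\gaifmannRelation_i$-image of $u$. The only difference is presentational (you reduce to the unique candidate $v_i$ first and then run a single chain of equivalences, where the paper argues one direction and appeals to symmetry), so there is nothing further to flag.
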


\begin{lem}\label{lem:sequence}
There is a sequence of $\signatureCTwoExtended$-structures $\modelAlt_0, \ldots, \modelAlt_p$,
with universe $\modelUniverse$ and $\modelAlt = \modelAlt_0$, such that:
\begin{enumerate}[(1)]
\item\label{enum:p1} $\oneTypeOf^{\modelAlt_i}(u) = \oneTypeOf^{\modelAlt'}(u)$ for all $u \in \modelUniverse$,
\item\label{enum:p2} $\twoTypeOf(\modelAlt_i) = \twoTypeOf(\modelAlt_{i+1})$ for all $0 \le i < p$,
\item\label{enum:p3} $\modelAlt_i \models \formulaMSOcopy$, (in particular $\modelAlt_i$ is $\messageAlphabet$-functional and chromatic),
\item\label{enum:p4} $\rank(\modelAlt_i,\modelAlt') > \rank(\modelAlt_{i+1},\modelAlt')$ for all $0 \le i < p$, and $\rank(\modelAlt_p,\modelAlt') = 0$.
\end{enumerate}
\end{lem}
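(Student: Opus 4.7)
I will prove Lemma~\ref{lem:sequence} by induction on $\rank(\modelAlt, \modelAlt')$, building the sequence one edge-swap at a time. In the base case $\rank(\modelAlt, \modelAlt') = 0$, take $p = 0$: condition (\ref{enum:p1}) holds because the copy operation acts only on binary symbols, so $\oneTypeOf^\modelAlt(u) = \oneTypeOf^{\modelAlt'}(u)$ for every $u$, while conditions (\ref{enum:p2})--(\ref{enum:p4}) are vacuous.

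For the inductive step, I construct a single $\modelAlt_1$ that satisfies (\ref{enum:p1})--(\ref{enum:p3}) and strictly reduces rank, and then apply the hypothesis to $\modelAlt_1$. Pick $u \in \modelUniverse$ and $i \in [k]$ with $\rank^i_u(\modelAlt, \modelAlt') = 1$; set $v = \gaifmannRelation_i^{\modelAlt}(u)$ and $w = \gaifmannRelation_i^{\modelAlt'}(u)$, so $v \neq w$. The key \emph{type-alignment} step establishes $\twoTypeOf^{\modelAlt|_\signatureMain}(u, v) = \twoTypeOf^{\modelAlt'|_\signatureMain}(u, w)$: letting $\twoType_0$ denote the latter, $\modelAlt' \models P_{\twoType_0}(u)$ by $\definitionFormula_\bounded$, so (shared unary vocabulary) $\modelAlt \models P_{\twoType_0}(u)$; then $\definitionFormula_\unbounded$ supplies some $v_0$ with $\twoTypeOf^{\modelAlt|_\signatureMain}(u, v_0) = \twoType_0$, and functionality of $\gaifmannRelation_i^{\modelAlt}$ forces $v_0 = v$. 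In particular $\oneTypeOf^\modelAlt(v) = \oneTypeOf^\modelAlt(w)$; a dual argument using $\twoType_0^{-1}$ at the unique $\modelAlt$-predecessor $u'$ of $w$ under $\gaifmannRelation_i$ yields $\oneTypeOf^\modelAlt(u) = \oneTypeOf^\modelAlt(u')$. I now define $\modelAlt_1$ by transposing the full binary-relation content of the pairs $(u, v) \leftrightarrow (u, w)$ and $(u', v) \leftrightarrow (u', w)$ (and their reverses), leaving every other pair untouched. The $1$-type equalities make these transpositions map each affected $2$-type to another $2$-type already realized by the partner pair, so $\twoTypeOf(\modelAlt_1) = \twoTypeOf(\modelAlt)$, while $\gaifmannRelation_i^{\modelAlt_1}(u) = w$.

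Conditions (\ref{enum:p1})--(\ref{enum:p3}) then follow: (\ref{enum:p1}) is immediate since unary relations are untouched; (\ref{enum:p2}) is exactly the $2$-type preservation above; (\ref{enum:p3}) combines Lemma~\ref{lem:star-type-equivalence} for the universal part of $\formulaCTwo$, the symmetry of the transposition for $\messageAlphabet$-functionality at $u$ and $u'$, the preservation of each source's outgoing $2$-type set for $\definitionFormula_\unbounded$, and the matched $1$-types of the swapped pairs for chromaticity. The \emph{main obstacle} is condition (\ref{enum:p4}): while the swap forces $\rank^i_u$ to drop, $\rank^i_{u'}$ may rise in the pathological case $\gaifmannRelation_i^{\modelAlt'}(u') = w$. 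I expect to resolve this either by choosing $u$ extremally in the ``disagreement digraph'' associated with $\gaifmannRelation_i^\modelAlt$ versus $\gaifmannRelation_i^{\modelAlt'}$ (whose bounded out-degree follows from chromaticity, so sinks exist), or by performing the swap along an entire disagreement cycle in one inductive step, so that the net rank strictly drops even if intermediate stages do not. A secondary technical point -- also to be handled via chromaticity together with the $P_{\twoType}$-annotations -- is verifying that the transposition does not break $\messageAlphabet$-functionality at $v$ or $w$ when some $\messageType \in \messageAlphabet$ has a target among $\{u, u'\}$; here the bound $\outDegree(\cdot) \le |\messageAlphabet|^2$ in the message graph is exactly what keeps the swap local.
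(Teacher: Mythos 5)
Your setup follows the paper's proof closely --- rank induction, the type-alignment step $\twoTypeOf^{\modelAlt|_\signatureMain}(u,v)=\twoTypeOf^{\modelAlt'|_\signatureMain}(u,w)$ via the $P_\twoType$ annotations and functionality, and the four-vertex swap --- but the decisive step, property (4), is left unproven, and neither of the fixes you sketch is the right one. The ``pathological case'' $\gaifmannRelation_i^{\modelAlt'}(u')=w$ simply cannot occur, and the reason is chromaticity of $\modelAlt'$, the very ingredient you defer to a ``secondary technical point.'' Concretely: if $\twoTypeOf^{\modelAlt'|_\signatureMain}(u',w)$ were equal to $\twoType$, then $(u',w)$ and $(w,u)$ would both be $\messageAlphabet$-edges in $\modelAlt'$ (the latter because $\twoType^{-1}$ is a $\messageAlphabet$-message type and $\twoTypeOf^{\modelAlt'|_\signatureMain}(u,w)=\twoType$), so $(u',u)$ lies in the message graph of $\modelAlt'$ and chromaticity forces $\oneTypeOf^{\modelAlt'}(u')\neq\oneTypeOf^{\modelAlt'}(u)$, contradicting the $1$-type equality you derived. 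Hence $\twoTypeOf^{\modelAlt'|_\signatureMain}(u',w)\neq\twoType$, and by the contrapositive of Lemma~\ref{lem:type-agreement} every index $j'$ with $\gaifmannRelation_{j'}(x,y)\in\twoType$ already satisfies $\rank^{j'}_{u'}(\modelAlt,\modelAlt')=1$ \emph{before} the swap, so $\rank_{u'}$ cannot increase; the same argument at $v$ gives $\twoTypeOf^{\modelAlt'|_\signatureMain}(u,v)\neq\twoType$, whence $\rank_u$ strictly drops. Your alternatives do not substitute for this: a digraph of bounded out-degree need not have sinks (a directed cycle is a counterexample), and once the observation above is in place no cycle-at-a-time surgery is needed.

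There are two further gaps. First, your construction presupposes that the partner vertex $u'$ exists, which requires $\twoType^{-1}$ to be a $\messageAlphabet$-message type; otherwise $P_{\twoType^{-1}}$ is not in the vocabulary and $w$ need have no incoming edge of $2$-type $\twoType$. The paper splits into two cases: when $\twoType^{-1}$ is not a message type, a single swap of $(u,v)$ with $(u,w)$ suffices, and functionality at $v$ and $w$ is untouched because $(v,u)$ and $(w,u)$ carry no message types. Second, $u'$ is not ``the unique $\gaifmannRelation_i$-predecessor of $w$'': the $\gaifmannRelation_i$ are functions, not injections. What you actually get (from $P_{\twoType^{-1}}(w)$ and $\definitionFormula_\unbounded$) is \emph{some} $a$ with $\twoTypeOf^{\modelAlt|_\signatureMain}(a,w)=\twoType$, and you must separately note $a\neq u$ (which follows from functionality of $\gaifmannRelation_i$ at $u$ and $v\neq w$). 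Finally, the preservation of $\messageAlphabet$-functionality at $v$ and $w$ under the swap is again chromaticity of $\modelAlt_i$ --- it shows $(u,w)$, $(w,u)$, $(a,v)$, $(v,a)$ carry no $\messageAlphabet$-message types, so no function value at $v$ or $w$ is disturbed; this is the same mechanism that rescues (4), not a side issue.
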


\begin{proof}
Assume we have already defined $\modelAlt_i$ and $\rank(\modelAlt_i,\modelAlt') > 0$.
In the following we will define $\modelAlt_{i+1}$.
Because of $\rank(\modelAlt_i,\modelAlt') > 0$ we can choose some elements $u,v,w \in \modelUniverse$ and $j \in [k]$ such that $\modelAlt_i \models \gaifmannRelation_j(u,v)$, $\modelAlt' \models \gaifmannRelation_j(u,w)$ and $v \neq w$.
Let $\twoType = \twoTypeOf^{\modelAlt_i|_\signatureMain}(u,v)$.
We have $\twoType \in \MsgTypes{\messageAlphabet}(\signatureMain)$ because of $\gaifmannRelation_j \in \signatureOrientedKTree$.
We have $\modelAlt_i \models P_\twoType(u)$ because of $\modelAlt_i \models \definitionFormula_\unbounded$.
Because $\oneTypeOf(u)^{\modelAlt_i} = \oneTypeOf(u)^{\modelAlt'}$ we have $\modelAlt' \models P_\twoType(u)$.
Thus, $\modelAlt_i \models \typeFormula_{\twoType_y}(v)$ and $\modelAlt' \models \typeFormula_{\twoType_y}(w)$.
With $\oneTypeOf^{\modelAlt_i}(w) = \oneTypeOf^{\modelAlt'}(w)$ we get $\modelAlt_i \models \typeFormula_{\twoType_y}(w)$, and thus $\oneTypeOf^{\modelAlt_i|_\signatureMain}(v) = \oneTypeOf^{\modelAlt_i|_\signatureMain}(w)$.
We proceed by a case distinction:

\vspace{10pt}
\noindent
{\bf Case 1: $\twoType^{-1} $ is a $\messageAlphabet$-message type.}
\vspace{5pt}
\\
We have $\modelAlt' \models P_{\twoType^{-1}}(w)$ because of $\modelAlt' \models \definitionFormula_\bounded$.
We get $\modelAlt_i \models P_{\twoType^{-1}}(w)$ because of $\oneTypeOf^{\modelAlt_i}(w) = \oneTypeOf^{\modelAlt'}(w)$.
With $\modelAlt_i \models \definitionFormula_\unbounded$, there is an element $a \in \modelUniverse$ such that $\twoType = \twoTypeOf^{\modelAlt_i|_\signatureMain}(a,w)$ and $\modelAlt_i \models P_{\twoType}(a)$.
We note that $u \neq a$ because of $\modelAlt_i \models \gaifmannRelation_j(a,w)$, $v \neq w$ and $v$ is the unique element with $\modelAlt_i \models \gaifmannRelation_j(u,v)$ (using $\modelAlt_i \models \functionFormula$).
Moreover, $\modelAlt_i \models \typeFormula_{\twoType^{-1}_y}(a)$ and $\modelAlt' \models \typeFormula_{\twoType^{-1}_y}(u)$.
With $\oneTypeOf^{\modelAlt_i}(u) = \oneTypeOf^{\modelAlt'}(u)$ we get $\modelAlt_i \models \typeFormula_{\twoType^{-1}_y}(u)$, and thus $\oneTypeOf^{\modelAlt_i|_\signatureMain}(u) = \oneTypeOf^{\modelAlt_i|_\signatureMain}(a)$.


We note that the edges $(u,w)$, $(w,u)$, $(a,v)$ and $(v,a)$ do not have $\messageAlphabet$-message types (*) because $\modelAlt_i$ is chromatic, $u \neq a$, $v \neq w$, $\oneTypeOf^{\modelAlt_i|_\signatureMain}(v) = \oneTypeOf^{\modelAlt_i|_\signatureMain}(w)$, $\oneTypeOf^{\modelAlt_i|_\signatureMain}(u) = \oneTypeOf^{\modelAlt_i|_\signatureMain}(a)$ and the edges $(u,v)$, $(v,u)$, $(a,w)$ and $(w,a)$ have $\messageAlphabet$-message types.
Similarly, we get $\twoType \neq \twoTypeOf^{\modelAlt'|_\signatureMain}(u,v)$ and $\twoType \neq \twoTypeOf^{\modelAlt'|_\signatureMain}(a,w)$ (**) because $\modelAlt'$ is chromatic, $\twoType = \twoTypeOf^{\modelAlt'|_\signatureMain}(v,w)$    and $\twoType$ and $\twoType^{-1}$ are $\messageAlphabet$-message types.

We define $\modelAlt_{i+1}$ as follows:
The unary relations of $\modelAlt_{i+1}$ are defined such that we have $\oneTypeOf(u)^{\modelAlt_{i+1}} = \oneTypeOf^{\modelAlt_i}(u)$ for all elements $u \in \modelUniverse$.
We obtain the binary relations of $\modelAlt_{i+1}$ by swapping the edges $(u,w)$ and $(u,v)$ as well as $(a,w)$ and $(a,v)$ in $\modelAlt_i$:
we set
$\twoTypeOf^{\modelAlt_{i+1}|_\signatureMain}(u,w) = \twoType$,
$\twoTypeOf^{\modelAlt_{i+1}|_\signatureMain}(a,v) = \twoType$,
$\twoTypeOf^{\modelAlt_{i+1}|_\signatureMain}(a,w) = \twoTypeOf^{\modelAlt_i|_\signatureMain}(a,v)$ and
$\twoTypeOf^{\modelAlt_{i+1}|_\signatureMain}(u,v) = \twoTypeOf^{\modelAlt_i|_\signatureMain}(u,w)$;
these 2-types are well-defined because of $\oneTypeOf^{\modelAlt_i|_\signatureMain}(v) = \oneTypeOf^{\modelAlt_i|_\signatureMain}(w)$.
All other 2-types in $\modelAlt_{i+1}|_\signatureMain$ are the same as in $\modelAlt_i|_\signatureMain$.
This completes the definition of $\modelAlt_{i+1}$.

We now argue that $\modelAlt_{i+1}$ satisfies properties (\ref{enum:p1})-(\ref{enum:p4}).
Clearly, $\modelAlt_{i+1}$ satisfies (\ref{enum:p1}) by definition.
Because we only swapped edges from $\modelAlt_i$ to $\modelAlt_{i+1}$ we have (\ref{enum:p2}).
From (\ref{enum:p2}), Lemma~\ref{lem:star-type-equivalence} and $\modelAlt_i \models \chi$, we get $\modelAlt_{i+1} \models \chi$.
Because we only swapped swapped edges from $\modelAlt_i$ to $\modelAlt_{i+1}$
we get $\modelAlt_{i+1} \models \gaifmannFormula$ and $\modelAlt_{i+1} \models \containmentFormula$ from $\modelAlt_i \models \gaifmannFormula$ and $\modelAlt_i \models \containmentFormula$.
Because of (*) and $\twoTypeOf^{\modelAlt_i|_\signatureMain}(u,v) = \twoType = \twoTypeOf^{\modelAlt_i|_\signatureMain}(a,w)$ we get $\modelAlt_{i+1} \models \definitionFormula_\unbounded$ from $\modelAlt_i \models \definitionFormula_\unbounded$, that $\modelAlt_{i+1}$ is $\messageAlphabet$-functional because $\modelAlt_i$ is $\messageAlphabet$-functional and $\modelAlt_{i+1} \models \colorFormula$ from $\modelAlt_i \models \colorFormula$ (using that $\oneTypeOf^{\modelAlt_i|_\signatureMain}(v) = \oneTypeOf^{\modelAlt_i|_\signatureMain}(w)$
and
$\oneTypeOf^{\modelAlt_i|_\signatureMain}(u) = \oneTypeOf^{\modelAlt_i|_\signatureMain}(a)$).
Thus, $\modelAlt_{i+1}$ satisfies property (\ref{enum:p3}).

It remains to show property~(\ref{enum:p4}).
%
%
%
Using (*), (**) and Lemma~\ref{lem:type-agreement} we get that \linebreak
$\rank_u(\modelAlt_{i+1},\modelAlt') < \rank_u(\modelAlt_i,\modelAlt')$ and $\rank_z(\modelAlt_{i+1},\modelAlt') \le \rank_z(\modelAlt_i,\modelAlt')$ for $z \in \{v,w,a\}$.
Moreover, 
$\rank_z(\modelAlt_{i+1},\modelAlt') = \rank_z(\modelAlt_i,\modelAlt')$ for $z \in \modelUniverse \setminus \{u,v,w,a\}$.
Thus, we have $\rank(\modelAlt_i,\modelAlt') > \rank(\modelAlt_{i+1},\modelAlt')$.

\vspace{10pt}
\noindent
{\bf Case 2: $\twoType^{-1}$ is not a $\messageAlphabet$-message type.}
\vspace{5pt}
\\
We note that the edge $(w,u)$ does not have a $\messageAlphabet$-message type because $\modelAlt_i$ is chromatic, $v \neq w$, $\oneTypeOf^{\modelAlt_i|_\signatureMain}(v) = \oneTypeOf^{\modelAlt_i|_\signatureMain}(w)$ and the edge $(u,v)$ has a $\messageAlphabet$-message type.

We define $\modelAlt_{i+1}$ as follows:
The unary relations of $\modelAlt_{i+1}$ are defined such that we have
$\oneTypeOf(u)^{\modelAlt_{i+1}} = \oneTypeOf(u)^{\modelAlt_i}$ for all elements $u \in \modelUniverse$.
We obtain the binary relations of $\modelAlt_{i+1}$ by swapping edges in $\modelAlt_i$:
$\twoTypeOf^{\modelAlt_{i+1}|_\signatureMain}(u,w) = \twoType$, and
$\twoTypeOf^{\modelAlt_{i+1}|_\signatureMain}(u,v) = \twoTypeOf^{\modelAlt_i|_\signatureMain}(u,w)$.
These 2-types are well-defined because of  $\oneTypeOf^{\modelAlt_i|_\signatureMain}(v) = \oneTypeOf^{\modelAlt_i|_\signatureMain}(w)$.
All other 2-types in $\modelAlt_{i+1}|_\signatureMain$ are the same as in $\modelAlt_i|_\signatureMain$.
This completes the definition of $\modelAlt_{i+1}$.

We now argue that $\modelAlt_{i+1}$ satisfies properties (\ref{enum:p1})-(\ref{enum:p4}).
As in the previous case, one can argue that  $\modelAlt_{i+1}$ satisfies (\ref{enum:p1}) and (\ref{enum:p2}), $\modelAlt_{i+1} \models \chi$,          $\modelAlt_{i+1} \models \gaifmannFormula$ and $\modelAlt_{i+1} \models \containmentFormula$.
Because $(v,u)$ and $(w,u)$ do not have $\messageAlphabet$-message types we get $\modelAlt_{i+1} \models \definitionFormula_\unbounded$ from $\modelAlt_i \models \definitionFormula_\unbounded$ and that $\modelAlt_{i+1}$ is $\messageAlphabet$-functional because $\modelAlt_i$ is $\messageAlphabet$-functional.
We get $\modelAlt_{i+1} \models \colorFormula$ from $\modelAlt_i \models \colorFormula$, $\modelAlt' \models \colorFormula$ and $\oneTypeOf^{\modelAlt_i}(w) = \oneTypeOf^{\modelAlt'}(w)$.

It remains to show property~(\ref{enum:p4}).
From Lemma~\ref{lem:type-agreement} we get $\twoTypeOf^{\modelAlt_i|_\signatureMain}(u,w) \neq \twoType$ and $\twoTypeOf^{\modelAlt'|_\signatureMain}(u,v) \neq \twoType$.
Again applying Lemma~\ref{lem:type-agreement} we get $\rank_u(\modelAlt_{i+1},\modelAlt') < \rank_u(\modelAlt_i,\modelAlt')$.
Because $\twoTypeOf^{\modelAlt_i|_\signatureMain}(v,u)$ and $\twoTypeOf^{\modelAlt_i|_\signatureMain}(w,u)$ are not $\messageAlphabet$-message types, we get $\rank_v(\modelAlt_{i+1},\modelAlt') = \rank_v(\modelAlt_i,\modelAlt')$  and $\rank_w(\modelAlt_{i+1},\modelAlt') = \rank_w(\modelAlt_i,\modelAlt')$.
Moreover, we have that 
$\rank_z(\modelAlt_{i+1},\modelAlt') = \rank_z(\modelAlt_i,\modelAlt')$ for all $z \in \modelUniverse \setminus \{u,v,w\}$.
Thus, $\rank(\modelAlt_i,\modelAlt') > \rank(\modelAlt_{i+1},\modelAlt')$.

\end{proof}

\section{From Bounded Tree-width to Binary Trees}\label{se:translations}

This section is devoted to a discussion of the proof of Lemma~\ref{lem:MSO-tree-interp-idea}. 
First we need some background from the literature. 
A {\em\bf translation scheme for $\mcC_2$ over $\mcC_1$} is a tuple
$\t=\left\langle \phi,\psi_{C}:C\in\mcC_2\right\rangle$
of $\MS(\mcC_1)$-formulas such that $\phi$ has exactly one free first
order variable and the number of free first order variables in each $\psi_{C}$ is $\ar(C)$. 
The formulas $\phi$ and $\psi_{C}$, $C\in\mcC_2$, do
not have any free second order variables.\footnote{All translation schemes in this paper are scalar (i.e.\ non-vectorized).
In the notation of~\cite{bk:CourcelleEngelFriet12}, a translation
scheme is a parameterless non-copying $\MS$-definition scheme with
precondition formula $(x\approx x)$. }
The {\em\bf quantifier rank $\qr(\t)$  of $\t$} is the maximum of the quantifier ranks
of $\phi$ and the $\psi_{C}$. 
$\t$ is {\em\bf quantifier-free} if $\qr(\t)=0$. 
The {\em\bf induced transduction $\t^{\star}$} is a partial function
from $\mcC_1$-structures
to $\mcC_2$-structures which assigns a $\mcC_2$-structure $\t^\star(\mfA)$ to a $\mcC_1$-structure $\mfA$ as follows.
The universe of $\t^\star(\mfA)$ is 
$A_{\t}=\{a\in A:\,\mfA\models\phi(a)\}$. 
The interpretation of $C\in \mcC_2$ in $\t^\star(\mfA)$ is
$C^{\t^\star(\mfA)}=\left\{ \bar{a}\in{A_{\t}}^{\ar(C)}:\,\mfA\models\psi_{C}(\bar{a})\right\}$. 
Due to the convention that structures do not have an empty universe,
$\t^{\star}(\mfA)$ is defined iff $\mfA\models\exists x\phi(x)$. 

\begin{lem} [Fundamental property of translation schemes]\label{lem:fundamental}\label{lem:C2-trans}
Let $\t$ be a translation scheme for $\mcC_2$ over $\mcC_1$. 
There is a computable function 
$\t^{\sharp}$ from $\MS(\mcC_2)$-sentences to
$\MS(\mcC_1)$-sentences such that
for every $\mcC_1$-structure $\mfA$
for which $\t^{\star}(\mfA)$ is defined and for every $\MS(\mcC_2)$-sentence
$\theta$, 
$\mfA\models\t^{\sharp}(\theta)$ if and only if  $\t^{\star}(\mfA)\models\theta$.
$\t^\sharp$ is called the {\em\bf induced translation}. 
\end{lem}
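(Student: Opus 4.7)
The plan is to define $\t^\sharp$ by structural recursion on the $\MS(\mcC_2)$-sentence $\theta$, and then prove correctness by a parallel structural induction generalized to formulas with free first- and second-order variables; the lemma follows by specializing to sentences. The recursive clauses are as follows. For an atomic formula $C(x_1,\ldots,x_{\ar(C)})$ with $C\in \mcC_2$, set $\t^\sharp(C(\bar{x})) = \psi_C(\bar{x})$; for equality, set $\t^\sharp(x\approx y) = (x\approx y)$; and translate Boolean connectives homomorphically. For the quantifier cases the point is to relativize to the universe $A_\t = \{a : \mfA\models \phi(a)\}$ of $\t^\star(\mfA)$:
\[ \t^\sharp(\exists x\,\theta_1) = \exists x\,\bigl(\phi(x)\land \t^\sharp(\theta_1)\bigr), \]
and for monadic set quantifiers over element sets
\[ \t^\sharp(\exists X\,\theta_1) = \exists X\,\bigl(\forall x\,(X(x)\rightarrow \phi(x))\land \t^\sharp(\theta_1)\bigr). \]
For an $\MS$ quantifier over subsets of a binary relation $C\in \mcC_2$, quantify over a binary set variable in $\MS(\mcC_1)$ relativized to the pairs defined by $\psi_C$:
\[ \t^\sharp(\exists X\,\theta_1) = \exists X\,\bigl(\forall x,y\,(X(x,y)\rightarrow \psi_C(x,y))\land \t^\sharp(\theta_1)\bigr). \]

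The correctness property — $\mfA\models \t^\sharp(\theta)(\bar{a},\bar{X})$ iff $\t^\star(\mfA)\models \theta(\bar{a},\bar{X})$ for tuples $\bar{a}$ from $A_\t$ and second-order assignments $\bar{X}$ lying in the appropriate relativized universes — is then established by a routine structural induction on $\theta$. The atomic cases use the very definitions of $A_\t$ and of $C^{\t^\star(\mfA)} = \{\bar{a}\in A_\t^{\ar(C)} : \mfA\models \psi_C(\bar{a})\}$; the Boolean cases are immediate. In the first-order quantifier case, the conjunct $\phi(x)$ forces the witness element to lie in $A_\t$, exactly matching the semantics of $\exists x$ in $\t^\star(\mfA)$. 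The element-set quantifier case is analogous, with $\forall x\,(X(x)\rightarrow \phi(x))$ ensuring that the witness $X$ is a subset of $A_\t$, and similarly the binary-set case uses $\forall x,y\,(X(x,y)\rightarrow \psi_C(x,y))$ to pin $X$ down to a subset of $C^{\t^\star(\mfA)}$. Since every clause in the definition of $\t^\sharp$ is primitive recursive on the syntax, $\t^\sharp$ is computable and sends sentences to sentences.

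I do not expect any real obstacle, as this is a folklore result spelled out in the Courcelle--Engelfriet monograph cited in the paper. The one point requiring care is the clause for $\MS$ quantifiers over subsets of a binary relation from $\mcC_2$: the witnessing binary relation variable in $\MS(\mcC_1)$ must be interpretable under the paper's GSO-style reading of $\MS$, and it must be explicitly constrained to lie inside the relation defined by $\psi_C$ so as to match the semantics of $\t^\star(\mfA)$. Once these clauses are in place the inductive step goes through without any complication.
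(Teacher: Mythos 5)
Your proposal is correct and follows essentially the same route as the paper: the paper likewise defines $\t^\sharp$ by structural recursion (substituting $\psi_C$ for atomic formulas, distributing over Boolean connectives, and relativizing quantifiers to $\phi$) and treats the correctness induction as routine, citing the literature. The one small slip is in your clause for quantifiers over subsets of a binary relation $C\in\mcC_2$: the guard should be $\psi_C(x,y)\land\phi(x)\land\phi(y)$ rather than $\psi_C(x,y)$ alone, since $C^{\t^\star(\mfA)}$ is by definition restricted to pairs from $A_{\t}$ --- a case the paper's own appendix avoids by spelling out only the monadic second-order clause.
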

For an $\MS(\mcC_2)$-sentence $\zeta$, $\t^{\sharp}$ 
substitutes the relation symbols
$C\in \mcC_2$ in $\zeta$ with the formulas $\psi_C$,
requires that each of the free variables satisfies $\phi$, 
and relativizes the quantification to $\phi$.
Appendix~\ref{ap:induced-trans} gives an inductive definition of $\t^\sharp$ following Definition 3.2 of~\cite{ar:MakowskyTARSKI}.

From the definition of the induced translation we have:
\begin{lem}[Quantifier-free translation schemes and $C^2$]\label{lem:fundamentalC2}
Let $\t$ be a quantifier-free translation scheme for $\mcC_2$ over $\mcC_1$. 
The induced translation $\t^{\sharp}$ maps $C^2(\mcC_2)$-formulas to $C^2(\mcC_1)$-formulas.
\end{lem}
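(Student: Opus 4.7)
The plan is to perform a straightforward structural induction on the $C^2(\mcC_2)$-formula $\theta$, verifying along the way that $\t^\sharp(\theta)$ uses only the two variables $x,y$, contains only counting quantifiers (no new universal/existential quantifiers beyond those inherited from $\theta$), and is built over $\mcC_1$.

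The key observation that makes the induction work is twofold. First, because $\t$ is quantifier-free, every $\psi_C$ is quantifier-free; since the paper restricts to vocabularies of unary and binary relation symbols, each $\psi_C$ has at most two free variables, and $\phi$ has exactly one. Hence whenever $\t^\sharp$ substitutes an atomic formula $C(z_1,\ldots,z_{\ar(C)})$ (with $z_i\in\{x,y\}$) by $\psi_C(z_1,\ldots,z_{\ar(C)})$, no fresh variables are introduced and no new quantifiers appear; the resulting subformula is a quantifier-free $C^2(\mcC_1)$-formula on the variables $\{x,y\}$. Second, the relativization step on a counting quantifier $\exists^{\star n} z.\,\varphi$ (with $\star\in\{\le,\ge,=\}$ and $z\in\{x,y\}$) produces $\exists^{\star n} z.\,(\phi(z)\land \t^\sharp(\varphi))$ (dually for $\forall$). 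Since $\phi(z)$ has only $z$ free and, by induction hypothesis, $\t^\sharp(\varphi)$ uses only the variables $\{x,y\}$ and lies in $C^2(\mcC_1)$, so does the conjunction and hence the whole relativized quantifier.

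The Boolean cases are immediate since $\t^\sharp$ commutes with connectives, and atomic equality $z_1\approx z_2$ is left untouched by $\t^\sharp$ and already belongs to $C^2(\mcC_1)$. Putting these together inductively yields $\t^\sharp(\theta)\in C^2(\mcC_1)$.

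I do not expect a genuine obstacle: the content of the lemma is syntactic bookkeeping. The only subtlety worth spelling out is the alpha-renaming used when substituting $\psi_C(z_1,\ldots,z_{\ar(C)})$ at a call site whose arguments are drawn from $\{x,y\}$. Because $\psi_C$ is quantifier-free, this renaming cannot clash with any bound variable, so the substitution lands cleanly in the two-variable fragment. This is exactly where the quantifier-free hypothesis on $\t$ is essential: a non-quantifier-free $\psi_C$ could bind $x$ or $y$ internally and force the use of a third variable after renaming, breaking $C^2$-membership.
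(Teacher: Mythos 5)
Your proof is correct and is essentially the argument the paper intends: the paper presents Lemma~\ref{lem:fundamentalC2} as an immediate consequence of the inductive definition of $\t^\sharp$ (Appendix~\ref{ap:induced-trans}), and your structural induction --- quantifier-free $\psi_C$ and $\phi$ with at most two resp.\ one free variables, so that atomic substitution introduces no new variables or quantifiers, and counting quantifiers are relativized by conjoining $\phi(z)$ under the quantifier --- is precisely the syntactic bookkeeping that justifies this. The only cosmetic divergence is that the paper's definition also conjoins $\phi(x)\land\phi(y)$ to equality atoms rather than leaving them untouched, which does not affect membership in $C^2$ since $\phi$ is itself quantifier-free with one free variable.
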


It is well-known that the class of graphs of tree-width at most $k$ is the image of 
an induced transduction on a class of labeled trees~\cite{ar:ArnborgEtAl}. 
For the proof of Lemma~\ref{lem:MSO-tree-interp-idea} we need an analogous translation scheme
for $(\mcC_\bounded\cup\mcC_\unbounded)$-structures $\M$ whose reducts $\M|_{\mcC_\bounded}$ have tree-width at most $k$. 
In order that $\alpha$ and $\beta$ be mapped to an $\MS(\mcD_\bounded)$-sentence and a $C^2(\mcC_\unbounded)$-sentence respectively, 
we need that the translation scheme satisfy some additional properties. 
\begin{lem}\label{lem:translations-main}
Let $\mcC_\bounded$ and $\mcC_\unbounded$ be vocabularies
such that $\mcC_\bounded\cap \mcC_\unbounded$ only contains unary relation symbols. 
There  exist the following effectively computable objects: (1) a vocabulary $\mcD_\bounded$ consisting of the binary relation symbol $s$ and unary relation symbols only,
(2) a translation scheme $\tr = \left\langle \phi, \psi_C: C\in \mcC_\bounded \cup \mcC_\unbounded \right\rangle$ for $\mcC_\bounded\cup\mcC_\unbounded$ over $\mcD_\bounded\cup \mcC_\unbounded$, and
(3) an $\MS(\mcD_\bounded)$-sentence $\mathit{dom}$, such that:
\begin{enumerate}[(a)]
 \item\label{item:dom} $\phi$ is quantifier-free over $\mcD_\bounded$, 
 \item\label{item:C2} For every relation symbol $C\in \mcC_\unbounded$, $\psi_C$ is quantifier-free. 
 \item\label{item:MSO} For every relation symbol $C\in \mcC_\bounded$, $\psi_C$ is an $\MS(\mcD_\bounded)$-formula. 
 \item\label{item:K} Let $\mathscr{K}$ be the class of $(\mcD_\bounded\cup\mcC_\unbounded)$-structures in which $s$ is interpreted as a binary tree and 
 which satisfy $\mathit{dom}$. The image of $\mathscr{K}$ under $\tr^\star$ 
 is exactly the class of $(\mcC_\bounded\cup\mcC_\unbounded)$-structures $\M$ such that $\M|_{\mcC_\bounded}$ has tree-width at most $k$. 
\end{enumerate}
\end{lem}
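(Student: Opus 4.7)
The plan is to adapt the classical MSO-interpretation of structures of tree-width at most $k$ from labeled binary trees (essentially~\cite{ar:ArnborgEtAl}), arranging the encoding so that (i) the domain formula $\phi$ is quantifier-free and (ii) the $\mcC_\unbounded$-relations are pulled back by trivial quantifier-free formulas $C(\bar x)$. Let $\mcD_\bounded$ consist of the binary symbol $s$ together with unary labels that (a) mark which nodes of the binary tree are ``vertex nodes'' of the target graph, (b) record the slot $j\in\{1,\ldots,k+1\}$ occupied by each vertex node in the bags of its introducing tree node, (c) record at each tree node which slot pair is joined by which binary $\mcC_\bounded$-symbol, and (d) encode each unary symbol $C\in\mcC_\bounded\setminus\mcC_\unbounded$ (shared unaries in $\mcC_\bounded\cap\mcC_\unbounded$ need no copy, as they already belong to the source vocabulary). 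Then $\phi(x)$ is simply the label from (a), which is quantifier-free over $\mcD_\bounded$, giving~(\ref{item:dom}).

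For $C\in\mcC_\unbounded$ I set $\psi_C(\bar x):=C(\bar x)$, which is quantifier-free --- this is property~(\ref{item:C2}) --- and since $\tr^\star$ automatically restricts relations to $A_\tr$, spurious tuples involving non-vertex nodes are silently discarded. For unary $C\in\mcC_\bounded$ I use either the label from (d) or $C(x)$ if $C\in\mcC_\unbounded$ too. For binary $C\in\mcC_\bounded$ I let $\psi_C(x,y)$ be the MSO formula ``$x,y$ are vertex nodes in slots $i,j$ respectively, and some tree node $t$ carries the label recording a $C$-edge between slots $i$ and $j$ while $x$ and $y$ both belong to the bag at $t$''; bag-membership is MSO-definable on a binary tree using standard connectivity along $s$ (a vertex $v$ belongs to the bag at $t$ iff $t$ lies in the maximal connected subtree of $s$ through the introducing node of $v$ whose nodes all carry $v$'s slot), so $\psi_C\in\MS(\mcD_\bounded)$ and~(\ref{item:MSO}) holds.

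I would take $\mathit{dom}$ to be the MSO($\mcD_\bounded$)-sentence asserting that $s$ and the labels from (a)--(d) jointly encode a legitimate tree decomposition of width at most $k$: label patterns at each node are internally consistent, at most $k+1$ slots are alive at any node, the set of tree nodes whose bag contains any given vertex node is connected in $s$, and an edge label appearing at $t$ refers only to slots alive at $t$. Property~(\ref{item:K}) then splits into two directions: every $\mfA\in\mathscr{K}$ yields, by construction, a target whose $\mcC_\bounded$-reduct has tree-width at most $k$; conversely, given any $\M$ with $\tw(\M|_{\mcC_\bounded})\le k$, I would fix a width-$k$ tree decomposition of $\M|_{\mcC_\bounded}$, binarize it, realize each vertex of $\M$ as a vertex node of the tree inserted inside its introducing bag, decorate with the labels from (a)--(d), and lift the $\mcC_\unbounded$-relations of $\M$ verbatim to the vertex nodes of the tree, with arbitrary (say empty) values elsewhere. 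The main obstacle is~(\ref{item:dom}): standard presentations of the $\mathrm{MS}_2$-interpretation often use an existential $\phi$ to pick the ``first occurrence'' of each vertex, and I sidestep this by tagging vertex nodes directly with a dedicated unary label, moving all selection work into $\mathit{dom}$ rather than $\phi$.
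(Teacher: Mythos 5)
Your proposal is correct, but it takes a genuinely different route from the paper's proof. You use the classical bag-and-slot encoding of a width-$k$ tree decomposition: unary labels mark vertex nodes, record slots, and record per tree node which slot pairs carry which binary $\mcC_\bounded$-symbol, with bag membership recovered in $\MS$ via connectivity of slot-regions. The paper instead builds the encoding around \emph{aligned $k$-trees}: the non-blank nodes of the binary tree are exactly the universe elements, partitioned into classes $\D_1,\ldots,\D_k$, and adjacency is generated by total functions $R_j$ sending each element to its nearest strict ancestor in class $\D_j$ (defined by an $\MS$ formula $\dpath$); binary $\mcC_\bounded$-relations are then read off from unary annotations $U_{B,j}$, $U_{B,\down,j}$, $U_{B,\self}$ placed on the \emph{sources} of $R_j$-edges, and the whole map is assembled as a composition $\trans_\mcC\circ\redu_\mcC\circ\lab_\mcC$ of three translation schemes, with the structural content carried by Lemma~\ref{lem:o-k-tree}. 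What the paper's choice buys is that the $R_j$ are total functions, which matches the oriented-$k$-tree machinery of the Separation Theorem and confines all second-order quantification to the single reachability formula defining $R_j$; the $\lab_\mcC$ step is then quantifier-free. Your encoding buys familiarity and a more direct correspondence with tree decompositions, at the price of the somewhat more delicate $\MS$ definition of bag membership (which needs the standard normalization that a vertex keeps its slot throughout the connected region of bags containing it, so that slot reuse cannot be confused with persistence). Both constructions deliver properties (a)--(d); one small bookkeeping point in yours is that, to satisfy (c) literally, the unary symbols of $\mcC_\bounded$ (including the shared ones) should be counted as members of $\mcD_\bounded$, as the paper does by putting $\unary(\mcC_\bounded)$ into $\mcN_{\mcC_\bounded}$, so that $\psi_C(x)=C(x)$ is an $\MS(\mcD_\bounded)$-formula.
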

The proof of Lemma~\ref{lem:translations-main} is technically similar to the discussion in~\cite{courcelle2002fusion}. 
We include the proof in the Appendix~\ref{ap:lem:MSO-tree-interp-idea} for completeness. 

We are  now  ready to prove Lemma~\ref{lem:MSO-tree-interp-idea}.
By Lemma~\ref{lem:translations-main}(\ref{item:K}) and Lemma~\ref{lem:fundamental}, statement~(i) in Lemma~\ref{lem:MSO-tree-interp-idea} holds
iff 
there is a $(\mcD_\bounded\cup \mcC_\unbounded)$-structure $\mfN$ such that $s^\mfN$ is a binary tree and 
$\mfN\models \mathit{dom}\land \tr^\sharp(\alpha)\land \tr^\sharp(\beta)$.
Let  $\alpha'=\mathit{dom}\land\tr^\sharp(\alpha)$ and $\beta'=\tr^\sharp(\beta)$. 
By Lemma~\ref{lem:translations-main}(\ref{item:MSO}) and the definition of $\tr^\sharp$, $\alpha' \in \MS(\mcD_\bounded)$.  
Let $\tr|_{\mcC_\unbounded}$ be the translation scheme for $\mcC_\unbounded$ over $\mcD_\bounded\cup \mcC_\unbounded$
which agrees with $\tr$ on all formulas, i.e. $\tr|_{\mcC_\unbounded} = \left\langle \phi, \psi_C: C\in \mcC_\unbounded\right\rangle$. 
The image of $\tr|_{\mcC_\unbounded}^\star$ on a structure $\M$ is the $\mcC_\unbounded$-reduct of the image of $\tr^\star$ on $\M$. 
Since $\beta \in C^2(\mcC_\unbounded)$, ${\tr|_{\mcC_\unbounded}}^\sharp(\beta)$ is well-defined 
and ${\tr|_{\mcC_\unbounded}}^\sharp(\beta) = \beta'$. By Lemma~\ref{lem:translations-main}(\ref{item:dom},\ref{item:C2}), 
$\tr|_{\mcC_\unbounded}$ is a quantifier-free
translation scheme, implying that $\beta' \in C^2(\mcD_\bounded\cup\mcC_\unbounded)$ by Lemma~\ref{lem:fundamentalC2}.

\section{From \texorpdfstring{$\MS$}{MSO} to \texorpdfstring{$C^{2}$}{C2} on Binary Trees}\label{se:hintikka}
The purpose of this section is to show that, on structures consisting only of a binary tree and additional unary relations, 
every $\MS$-sentence can be rewritten to a $C^2$-sentence 
which is equi-satisfiable and whose length is linear in the length of the input $\MS$-sentence. 
We start by introducing some tools for the literature.

\begin{theorem} [Hintikka sentences] \label{th:Hintikka-sentences} Let $\mcC$ be a vocabulary.
For every $q\in\mathbb{N}$ there is a finite set $\H_{\mcC,q}$ of
$\MS(\mcC)$ of quantifier rank $q$ such that: 
\begin{enumerate}
\item every $\epsilon\in\H_{\mcC,q}$ has a model; 
\item the conjunction of any two distinct sentences $\epsilon_{1},\epsilon_{2}\in\H_{\mcC,q}$
is not satisfiable; 
\item every $\MS(\mcC)$-sentence $\alpha$ of quantifier rank at most $q$
is equivalent to exactly one finite disjunction of sentences $\H_{\mcC,q}$; 
\item every $\mcC$-structure $\mfA$ satisfies exactly one
sentence $\hin_{\mcC,q}(\mfA)$ of $\H_{\mcC,q}$. We may omit $\mcC$ or $q$ from the subscript when they 
are clear from the context.
\end{enumerate}
\end{theorem}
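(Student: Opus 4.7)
The plan is to prove the statement by induction on the quantifier rank $q$, using the standard Hintikka / Scott formula construction. Since the inductive step introduces a fresh existentially quantified first-order or monadic set variable, I would first generalize the statement to formulas $\eta^{\mfA,\bar{a},\bar{U}}_{q}(\bar{x},\bar{X})$ with free first-order variables $\bar{x}$ and free monadic second-order variables $\bar{X}$, characterizing the $\MS$-rank-$q$ type of a $\mcC$-structure $\mfA$ together with distinguished elements $\bar{a}$ and distinguished sets $\bar{U}$. The set $\H_{\mcC,q}$ is then recovered as the subset of formulas with empty tuples of free variables, and $\hin_{\mcC,q}(\mfA)$ is the formula built from $\mfA$ with no parameters.

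For the base case $q = 0$, Hintikka formulas are the complete atomic $\mcC$-types on $(\bar{x},\bar{X})$; since $\mcC$ is finite and the tuples $\bar{x},\bar{X}$ have fixed length, there are only finitely many atomic formulas and hence finitely many complete types. For the inductive step from $q$ to $q + 1$, assume by induction the finite sets $\H_{\mcC,q}(\bar{x}y,\bar{X})$ and $\H_{\mcC,q}(\bar{x},\bar{X}Y)$. The rank-$(q{+}1)$ Hintikka formula of $(\mfA,\bar{a},\bar{U})$ is the conjunction of: (i) an existential $\exists y.\,\eta(\bar{x}y,\bar{X})$ for each $\eta$ realized by some element expansion in $\mfA$; (ii) the matching universal $\forall y.\,\bigvee \eta(\bar{x}y,\bar{X})$ ranging over the realized $\eta$'s; (iii) the two analogous clauses $\exists Y$ and $\forall Y$ collected over realized set expansions. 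Because the realized types form a subset of a finite set, only finitely many such conjunctions arise, so $\H_{\mcC,q+1}$ is finite.

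Properties (1), (4), and (2) then follow directly: each Hintikka formula is satisfied by the structure used to build it (property 1); every $\mfA$ satisfies its own rank-$q$ Hintikka formula by construction, and uniqueness holds because any distinct $\epsilon \in \H_{\mcC,q}$ would witness an $\exists$-clause or forbid a $\forall$-clause inconsistent with the types actually realized in $\mfA$ (property 4); and pairwise conjunctions are therefore unsatisfiable (property 2). Property (3) requires a separate induction on the structure of $\alpha$, showing that every $\MS(\mcC)$-sentence of rank at most $q$ is equivalent to the disjunction of those $\epsilon \in \H_{\mcC,q}$ whose (unique) models satisfy $\alpha$; uniqueness of the disjunction follows from (2), which forces distinct subsets of $\H_{\mcC,q}$ to define distinct sentences. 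The main technical care lies in the second-order step: a structure admits arbitrarily (possibly uncountably) many set expansions $V \subseteq A$, but only finitely many yield distinct rank-$q$ Hintikka formulas by the induction hypothesis, so a finite conjunction over the realized types faithfully captures the rank-$(q{+}1)$ $\MS$-type.
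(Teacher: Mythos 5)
The paper does not prove this theorem at all: it is stated as classical background and delegated to the literature (\cite[Chapter 3, Theorem 3.3.2]{hodges1993model} and \cite{ar:MakowskyTARSKI}), so there is no in-paper proof to compare against. Your reconstruction is the standard Hintikka/Scott argument --- induction on quantifier rank over formulas with first-order and set parameters, with the rank-$(q{+}1)$ formula built as the conjunction of existential witnesses and a universal closure over the finitely many realized rank-$q$ types --- and it is essentially correct, including the key finiteness observation that although a structure has unboundedly many set expansions, these realize only finitely many rank-$q$ types. Two small points deserve care. First, the parenthetical ``(unique) models'' in your treatment of property (3) is wrong as stated: a sentence $\epsilon\in\H_{\mcC,q}$ has many models; what you actually need (and what your induction on the structure of $\alpha$ should deliver) is that all models of $\epsilon$ agree on every $\MS(\mcC)$-sentence of rank at most $q$, which is precisely what makes the disjunction $\bigvee\{\epsilon : \epsilon\models\alpha\}$ well defined. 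Second, the $\MS$ of this paper is the variant whose set quantifiers may range over subsets of the (binary) relations as well as subsets of the universe, so the set-expansion clauses in your inductive step must also enumerate the realized types of expansions by subsets of each relation, not only subsets of the domain; the finiteness argument goes through unchanged, but the construction as you wrote it covers only domain subsets.
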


For a class of $\mcC$-structures $\K$ an $n$-ary operation
$Op$ over $\mcC$-structures is called \emph{smooth}
\emph{over }$\K$, if for all $\mfA_{1},\ldots,\mfA_{n}\in\K$,
$\hin_{\mcC,q}(Op(\mfA_{1},\ldots,\mfA_{n}))$ depends only on
$\hin_{\mcC,q}(\mfA_{i})$: $i\in[n]$ and this dependence is computable\footnote{
Smooth operations here are called \emph{effectively smooth} in~\cite{ar:MakowskyTARSKI}.
}.
We omit \emph{``over
}$\K$\emph{'' }when $\K$ consists of
all $\mcC$-structures. 

\begin{theorem}[Smoothness]\label{th:smoothness}
\begin{enumerate}
\item The disjoint union is smooth. %
\item For every quantifier-free translation scheme $\t$, the
operation $\t^{\star}$ is smooth. 
\item Let $ \mfT_1 \Op \mfT_2$ denote the operation which augments the disjoint union of $\mfT_1$ and $\mfT_2$ by
adding an edge from the root of tree $\mfT_1$ to
the root of tree $\mfT_2$. This operation 
 is smooth over labeled trees.\footnote{We use the smoothness of the disjoint union and quantifier-free transductions. 
 The transduction adds the edge between the roots. 
Technically, in order for the transduction to be quantifier-free, 
we need that our trees have a special unary relation symbol 
$\rt$ with the natural interpretation as the root of the tree. }
%
\end{enumerate}
\end{theorem}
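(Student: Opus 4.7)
The plan is to prove (1)--(3) in order, with (3) reducing to the first two. The underlying tools are the finiteness of $\H_{\mcC,q}$, the fundamental property (Lemma~\ref{lem:fundamental}), and Ehrenfeucht--Fra\"iss\'e games for $\MS$.

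For (1) I would give the classical Feferman--Vaught argument via EF games: in the MSO $q$-round game on $\mfA_1\sqcup\mfA_2$ versus $\mfB_1\sqcup\mfB_2$, any set or element picked by Spoiler decomposes uniquely along the disjoint union, and Duplicator answers componentwise using a pair of winning strategies for the games on $(\mfA_1,\mfB_1)$ and $(\mfA_2,\mfB_2)$. Hence $\hin_q(\mfA_1)$ and $\hin_q(\mfA_2)$ jointly determine $\hin_q(\mfA_1\sqcup\mfA_2)$. Effectiveness follows from the finiteness of $\H_{\mcC,q}$: one can precompute, for every pair of Hintikka sentences, the unique sentence of $\H_{\mcC,q}$ satisfied by the disjoint union of any two finite models (e.g.\ by enumerating finite Hintikka witnesses).

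For (2) the crucial point is that, because $\t$ is quantifier-free, the inductive definition of $\t^\sharp$ given in Appendix~\ref{ap:induced-trans} only substitutes the quantifier-free $\psi_C$ for each atom $C(\bar x)$ and relativizes each quantifier to the quantifier-free $\phi$. A straightforward induction shows $\qr(\t^\sharp(\theta))=\qr(\theta)$. Thus for every Hintikka sentence $\eta\in\H_{\mcC_2,q}$, the translated sentence $\t^\sharp(\eta)$ has quantifier rank $q$ and so is equivalent to a unique finite disjunction of sentences in $\H_{\mcC_1,q}$ (item 3 of Theorem~\ref{th:Hintikka-sentences}). By Lemma~\ref{lem:fundamental}, $\hin_q(\t^\star(\mfA))$ is the unique $\eta$ whose disjunction contains $\hin_q(\mfA)$, and this assignment is computable by computing each $\t^\sharp(\eta)$ and normalising it.

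For (3) I would realise $\mfT_1\Op\mfT_2$ as $\t^\star(\mfT_1\sqcup\mfT_2)$ on the class of trees equipped with the distinguished root predicate $\rt$ mentioned in the footnote, where the disjoint union is ``tagged'' by assigning each operand a distinct fresh unary mark so that the two roots remain distinguishable in the combined structure. The quantifier-free translation $\t$ then keeps every original atom and additionally declares the new edge by the quantifier-free formula ``$x$ is the $\rt$ of the first tag and $y$ is the $\rt$ of the second tag''. Tagged disjoint union is smooth by exactly the EF argument of (1) (the tagging is just an extra unary label), so (3) follows by composing (1) and (2). The main obstacle is precisely keeping $\t$ quantifier-free: without the distinguishing root predicate plus the tags, the two roots of $\mfT_1\sqcup\mfT_2$ are indistinguishable and pinning them down would require quantifying over tree structure, which would destroy the needed quantifier-rank bound in (2). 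The footnote's setup is tailored to remove this obstacle.
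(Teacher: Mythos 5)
Your proposal is correct and follows essentially the route the paper intends: the paper defers items (1) and (2) to the cited literature (Hodges, Makowsky), where exactly your EF-game argument for disjoint union and your quantifier-rank-preservation argument for quantifier-free $\t^\sharp$ appear, and its footnote to item (3) sketches precisely your reduction of $\Op$ to a quantifier-free transduction applied to a disjoint union, relying on the $\rt$ predicate. Your additional observation that the two operands must be tagged so that a quantifier-free formula can tell the two roots apart is a legitimate detail that the paper's footnote elides, and it strengthens rather than deviates from the argument.
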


For an in-depth introduction to Hintikka sentences and smoothness 
and references to proofs see~\cite[Chapter 3, Theorem 3.3.2]{hodges1993model}
and~\cite{ar:MakowskyTARSKI} respectively. 

We are now ready for the main lemma of this section.

Let $\mathrm{rt}(x)$ be the sentence $\forall y\,\neg s(y,x)$. This sentence defines the root of the binary tree $s$. 
\begin{lem}\label{lem:HintikkaTree}
Let $q\in\mathbb{N}$ and let $\mcC$
be a vocabulary which consists only of the binary relation symbol $s$
and (possibly) additional unary relation symbols.
Let $C_{\epsilon}:\,\epsilon\in\H_{\mcC,q}$ be new unary
relation symbols. Let $\hin(\mcC,q)$ be the vocabulary which
extends $\mcC$ with $\{C_{\epsilon}:\,\epsilon\in\H_{\mcC,q}\}$.
There is a computable $C^{2}(\hin(\mcC,q))$-sentence $\psihin_{\mcC,q}^{\hin}$
such that if $\mfT_{0}$ is a $\mcC$-structure such that $s^{\mfT_0}$ is a binary tree:
\begin{enumerate}[(i)]
\item There is an expansion $\mfT_{1}$ of $\mfT_{0}$ to $\hin(\mcC,q)$ with
$\mfT_{1}\models\psihin_{\mcC,q}^{\hin}$. 
\item For every expansion $\mfT_{1}$ of $\mfT_{0}$ to $\hin(\mcC,q)$ 
and for every $\omega\in \MS(\mcC)$ 
such that \linebreak
$\mfT_{1}\models\psihin_{\mcC,q}^{\hin}$
and $\qr(\omega)=q$, 
the following holds:
there exists a $C^2$-sentence $\omega_{\hin}$ such that
%
$\mfT_{0}\models\omega$
%
iff 
$\mfT_{1}\models \omega_{\hin}$. 
The $C^2$-sentence $\omega_{\hin}$ is 
\[\forall x\left(\mathrm{rt}(x)\to \bigvee_{\epsilon\in \H_{\mcC,q}:\,\epsilon\models \omega}C_{\epsilon}(x)\right)
\]

%
\end{enumerate}
\end{lem}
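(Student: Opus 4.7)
My plan is to design $\psihin_{\mcC,q}^\hin$ so that, in any tree-expansion $\mfT_1$ satisfying it, the unary flag $C_\epsilon^{\mfT_1}$ holds at a node $v$ iff the subtree of $\mfT_0$ rooted at $v$ has Hintikka type $\epsilon \in \H_{\mcC,q}$. Once this is secured, part (ii) is immediate: the $s$-root is unique (by $\mathrm{rt}$), and by parts (3)-(4) of Theorem~\ref{th:Hintikka-sentences} we have $\mfT_0 \models \omega$ iff $\hin_{\mcC,q}(\mfT_0) \models \omega$, so the given $\omega_\hin = \forall x.\,\mathrm{rt}(x) \to \bigvee_{\epsilon \models \omega} C_\epsilon(x)$ fires exactly when needed.

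The core tool is the smoothness of the tree-building operations in Theorem~\ref{th:smoothness}: the subtree rooted at $v$ can be assembled from a singleton $\mcC$-structure carrying the unary labels of $v$ together with the (at most two) subtrees rooted at the children of $v$, using disjoint union followed by the $\Op$ operation that installs the root-to-root edges. Smoothness yields a computable function $\tau$ that, given the $\unary(\mcC)$-type $\pi$ of $v$ and the multiset $M$ of Hintikka types of its children, outputs the Hintikka type of the subtree at $v$. Since $|\H_{\mcC,q}|$ is finite, the range of configurations $(\pi, M)$ with $|M| \le 2$ is finite. I then take $\psihin$ to be the conjunction of (a) a uniqueness axiom $\forall x.\,(\bigvee_\epsilon C_\epsilon(x)) \land \bigwedge_{\epsilon \ne \epsilon'} \neg (C_\epsilon(x) \land C_{\epsilon'}(x))$, and (b) for every configuration $(\pi, M)$, a local consistency clause $\forall x.\,(\typeFormula_\pi(x) \land \mathrm{children}_M(x)) \to C_{\tau(\pi,M)}(x)$, where $\mathrm{children}_M(x)$ is a Boolean combination of counting formulas of the form $\exists^{=n} y.\, s(x,y) \land C_\epsilon(y)$ and $\exists^{=n} y.\, s(x,y)$ asserting that the $s$-successors of $x$ form exactly $M$ with respect to the $C_\epsilon$-flags. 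Every such clause uses only the variables $x, y$ with counting, so $\psihin \in C^2(\hin(\mcC,q))$.

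For (i), given $\mfT_0$ I expand it by setting $C_\epsilon^{\mfT_1}(v)$ iff $\epsilon = \hin_{\mcC,q}(\text{subtree of }\mfT_0\text{ at }v)$; this is well-defined and unique by Theorem~\ref{th:Hintikka-sentences}(4), the uniqueness axioms are immediate, and each local consistency axiom holds precisely because $\tau$ was extracted from the same build-up operations. For (ii), conversely, I argue by induction on subtree height that every model $\mfT_1 \models \psihin$ must flag each $v$ with the correct Hintikka type: the induction hypothesis supplies the correct flags on the children, which combined with the $\mathrm{children}_{M_v}$ witness forces the label $C_{\tau(\pi_v, M_v)}$ at $v$, and uniqueness prevents any competing flag.

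The main obstacle I expect is arranging the $\mathrm{children}_M$ predicate inside $C^2$ without simultaneously naming two distinct children; the remedy is that in a binary tree each node has at most two $s$-successors, so a multiset of size $\le 2$ is pinned down entirely by counting single-child witnesses with prescribed properties, which $C^2$ handles directly. A secondary care is that $\tau$ must be computed over the pure $\mcC$-reduct and not over the enriched vocabulary $\hin(\mcC,q)$, to avoid a circular dependence of Hintikka types on the flags themselves; this is enforced by defining the expansion in (i) entirely from $\mfT_0$.
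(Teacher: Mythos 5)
Your proposal is correct and follows essentially the same route as the paper: both use the smoothness of disjoint union and the root-joining operation $\Op$ to obtain a computable map from a node's unary type and its children's Hintikka types to the subtree's Hintikka type, axiomatize this locally in $C^2$ via a partition axiom plus leaf/one-child/two-children clauses (your multiset-counting $\mathrm{children}_M$ is equivalent to the paper's $\intdist$/$\intsame$ split using $\exists^{=2}$), and conclude (ii) from the uniqueness of the correctly-flagged expansion together with Theorem~\ref{th:Hintikka-sentences}(3)--(4). Your explicit induction on subtree height just fills in the uniqueness claim that the paper asserts without detail.
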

The sentence $\psihin_{\mcC,q}^{\hin}$ is defined so that for every $\mfT_0$  there is a unique expansion $\mfT_1$
such that $\mfT_1 \models \psihin_{\mcC,q}^{\hin}$. For every $u$ in the universe $T_{1}$ of $\mfT_1$,
we will have  $u\in C_{\epsilon}^{\mfT_{1}}$ iff the subtree $\mfT_{u}$ of $\mfT_{0}$
whose root is $u$ satisfies $\mfT_{u}\models\epsilon$.
Using the smoothness of $\Op$, whether an element of $\mfT_1$ belongs to $C_\epsilon^{\mfT_1}$ depends only on its children.
This can be axiomatized in $C^2$. 
Lemma~\ref{lem:HintikkaTree-idea} follows from Lemma~\ref{lem:HintikkaTree} with
$q=\qr(\alpha)$,
$\mcD = \hin(\mcC,q)$, and 
$\omega_{\hin} = \gamma$.

Appendix~\ref{ap:lem:HT} spells out the proof of Lemma~\ref{lem:HintikkaTree}. 

\section{\texorpdfstring{$\MS$}{MSO} with Cardinality Constraints}\label{se:cardinality}

$\MS^{\card}$ denotes the extension of $\MS$ with atomic formulas
called \emph{cardinality constraints} $\sum_{i=1}^{r}|X_{i}|<\sum_{i=1}^{t}|Y_{i}|$,
where the $X_{i}$ and $Y_{i}$ are MSO variables,
and $|X|$ denotes the cardinality of $X$. 
Let $\WSone$ ($\WSone^{\card}$)
be the weak monadic second order theory (with cardinality constraints) of the
structure $\left\langle \mathbb{N},+1,<\right\rangle$. 
Let $\cardLogic \subseteq \MS^{\card}$  be the set of sentences $\rho$ such that 
(1) $\rho$ is of the form $\rho=\exists X_1 \cdots \exists X_m \omega$, and (2) only the $X_1,\ldots,X_m$ participate in cardinality
constraints. 
\begin{theorem}
\label{th:cardinality} Given a sentence $\rho \in \cardLogic$, 
it is decidable (A) whether  $ \left\langle \mathbb{N},+1,<\right\rangle \models \rho$, 
and (B) whether $\rho$ is satisfiable by a finite structure of
bounded tree-width. 
\end{theorem}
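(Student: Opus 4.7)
\emph{Plan.} The plan is to reduce Theorem~\ref{th:cardinality}(B) to the Main Theorem~\ref{th:main} by enumerating the possible truth values of the cardinality constraints, and to handle (A) by the same scheme applied after the automata-theoretic reduction of $\WSone$ to finite linear orders, which have tree-width~$1$.

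\emph{Skolemization and truth-value enumeration.} Write $\rho = \exists X_1 \cdots \exists X_m\,\omega$ and let $c_1,\ldots,c_s$ be the cardinality constraints appearing in $\omega$; by the definition of $\cardLogic$, each $c_k$ involves only the Skolemized variables. The sentence $\rho$ is satisfiable by a finite structure of bounded tree-width iff there is such a structure interpreting $X_1,\ldots,X_m$ as unary relations so that $\omega$ holds. For each $\vec\tau \in \{T,F\}^{s}$, let $\omega[\vec\tau]$ be the pure $\MS$-sentence obtained from $\omega$ by replacing each $c_k$ by $\tau_k$. Then $\rho$ is satisfiable iff for some $\vec\tau$ there is a structure which (i) satisfies $\omega[\vec\tau]$ and (ii) evaluates each $c_k$ to $\tau_k$.

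\emph{Encoding a cardinality constraint in $C^2$.} The main technical step is, for each $c_k$ of the form $\sum_{i=1}^{r}|X_i| < \sum_{j=1}^{t}|Y_j|$ and each truth value $\tau_k$, to construct a $C^2$-sentence $\beta_k^{\tau_k}$ over the $X_i,Y_j$ together with fresh auxiliary relations that has a model expanding a given structure iff $c_k$ has truth value $\tau_k$ in that structure. For $\tau_k = T$ I introduce $r\cdot t$ fresh binary relation symbols $R_{i,j}$ encoding a transportation-style injection from the multiset $\bigsqcup_i X_i$ into the multiset $\bigsqcup_j Y_j$. Using the counting quantifiers of $C^2$ I axiomatize, in addition to the natural domain/codomain clauses $R_{i,j}(a,b) \to X_i(a)\wedge Y_j(b)$: for every $a$ with $X_i(a)$ there is a \emph{unique} pair $(j,b)$ with $R_{i,j}(a,b)$ (each copy of $a$ coming from $X_i$ is matched to some $Y$-copy); for every $b$ with $Y_j(b)$ there is at most one pair $(i,a)$ with $R_{i,j}(a,b)$ (injectivity on target copies); and for some $j$ there is a $b\in Y_j$ with no matching pair (strictness). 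Because $r$ and $t$ are fixed, the indices $i,j$ range over bounded sets: uniqueness of $j$ for a given $(a,b)$ is enforced by a finite conjunction over $j\ne j'$, and all these clauses fit into two first-order variables with counting. The complementary case $\tau_k = F$ encodes $\sum_j|Y_j|\le\sum_i|X_i|$ symmetrically, omitting the strictness clause.

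\emph{Combining and applying the Main Theorem.} Setting $\alpha_\tau = \omega[\vec\tau]$ and $\beta_\tau = \bigwedge_{k} \beta_k^{\tau_k}$, we obtain an $\MS$-sentence $\alpha_\tau$ over $\vocab(\omega)\cup\{X_1,\ldots,X_m\}$, a $C^2$-sentence $\beta_\tau$ over the $X_i$ together with the fresh auxiliary relations, and the two share only unary relation symbols. Since the $X_i$ are unary, the reduct to $\vocab(\alpha_\tau)$ has bounded tree-width iff the original reduct does, so Theorem~\ref{th:main} decides whether $\alpha_\tau \wedge \beta_\tau$ has a finite model of the required shape. Taking the disjunction over the $2^s$ choices of $\vec\tau$ settles part~(B). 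Part~(A) follows by applying the same construction after invoking the classical B\"uchi--Elgot reduction of $\WSone$-satisfiability on $\langle \mathbb{N}, +1, <\rangle$ to finite satisfiability on finite linearly ordered structures, which have tree-width~$1$ and therefore trivially fit the bounded tree-width hypothesis. The main obstacle throughout is the $C^2$ encoding of \emph{sums} of cardinalities, which are not natively expressible; the transportation-matrix construction above is exactly what enables the two-variable restriction.
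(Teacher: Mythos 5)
Your treatment of part (B) is correct and close in spirit to the paper's: both encode cardinality comparisons by $C^2$-axiomatizable injections tied to unary predicates shared with the $\MS$ side, and both exploit that the constraints mention only the outermost existential variables and hence have a fixed truth value once the structure is fixed. You differ in mechanism: the paper substitutes each constraint $|X_1|<|X_2|$ \emph{in place} by an $\MS$-formula over fresh shared unary symbols $W_{\dom}$ and $W_{\img}$ (the domain and image of a single relation $B$ axiomatized to be an injection in one of the two directions), so one call to Theorem~\ref{th:main} suffices, whereas you enumerate the $2^{s}$ truth-value vectors and force each value by a separate family of relations $R_{i,j}$. Your explicit multiset ``transportation'' encoding of $\sum_i|X_i|<\sum_j|Y_j|$ is in fact more detailed than the paper's remark that general sums are handled ``similarly'' under a disjointness assumption, and it is correct and expressible in $C^2$ as you argue.

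Part (A) is where the genuine gap lies. You cannot pass from truth of $\omega[\vec\tau]$ in $\left\langle \mathbb{N},+1,<\right\rangle$ to finite satisfiability of the \emph{same} sentence over finite linear orders: for instance $\exists x\,\forall y\,(y<x\lor y\approx x)$ is false in $\mathbb{N}$ but true in every finite order, because the inner quantifiers of $\omega$ continue to range over the infinite tail even after the $X_i$ are confined to a finite prefix. What B\"uchi--Elgot actually provides is that the set of tuples of finite sets satisfying $\omega[\vec\tau]$ in $\left\langle \mathbb{N},+1,<\right\rangle$ is an effectively regular language of finite words; the sentence you must then hand to part (B) is an $\MS$-sentence expressing acceptance by that automaton (with the $X_i$ surviving as unary predicates on positions, so that your $C^2$ cardinality sentences still attach correctly), not $\omega[\vec\tau]$ itself. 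This translation is precisely the technically substantial half of the paper's proof of (A): it splits $\mathbb{N}$ into a finite prefix containing the $X_i$ and an infinite tail isomorphic to $\left\langle \mathbb{N},+1\right\rangle$, uses smoothness of the disjoint union together with the computability of the Hintikka type of the tail (via decidability of the $\MS$ theory of $\left\langle \mathbb{N},+1\right\rangle$) to replace the condition on the infinite structure by a disjunction of Hintikka sentences that the finite prefix must satisfy, and only then invokes (B) on successor structures of tree-width $1$. Your outline can be completed along the automata-theoretic route, but as written the decisive step --- which finite-word sentence is evaluated and why it is equivalent to truth in the infinite model --- is missing, and the literal reading of your reduction is false.
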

This theorem follows from Theorem~\ref{th:main}. 
The main observation needed for (B) is that cardinality
constraints can be expressed in terms of injective functions, which are axiomatizable in $C^{2}$. 
(A) is reducible to (B). The main observations for (A) are:
\begin{enumerate}
 \item that $X_1,\ldots,X_m$ are contained 
in the substructure $\mfA_1$ of $\left\langle \mathbb{N},+1,<\right\rangle$ generated by $\{0,\ldots,\ell\}$ for some $\ell\in \mathbb{N}$, 
\item that substructure $\mfA_2$ of $\left\langle \mathbb{N},+1,<\right\rangle$ generated by $\mathbb{N}-\{0,\ldots,\ell\}$
is isomorphic to $\left\langle \mathbb{N},+1,<\right\rangle$, and therefore $\mfA_2$ and $\left\langle \mathbb{N},+1,<\right\rangle$ have the same weak monadic 
second order theory,
\item that the weak monadic second order theory of $\left\langle \mathbb{N},+1,<\right\rangle$ is decidable, 
\item and that
$\left\langle \mathbb{N},+1,<\right\rangle$ is a transduction $\t$ of $\mfA_1\sqcup \mfA_2$.  
\end{enumerate}
It remains to use that $\sqcup$ is smooth, cf. Appendix~\ref{ap:th:cardinality}.

\newpage
\bibliographystyle{plain}

\newpage

 \section{Appendix}

\subsection{Proof of Theorem~\ref{th:cardinality}}\label{ap:th:cardinality}
We start with (B).

Let $\rho$ be a $\cardLogic$ sentence, i.e. 
the outermost block of quantifiers in $\rho$ is existential and only variables from the outermost
blocks may appear in cardinality constraints. 
For simplicity we consider $\rho=\exists X_{1}\exists X_{2}\omega$ with only two quantifiers in the outermost block. 
W.l.o.g.
the only cardinality constraint in $\omega$ is $|X_{1}|<|X_{2}|$.
By a slight abuse of notation, we sometimes treat $X_1$ and $X_2$ as unary relation symbols. 
Let $\mcC_{\bounded}$ extend the vocabulary of $\rho$ with new unary relation
symbols $X_{1}$,$X_{2}$,$W_{\img}$,$W_{\dom}$. Let $\mcC_{\unbounded}$
extend $\mcC_{\bounded}$ with a new binary relation symbol $B$. 
Finite satisfiability
of $\omega$ by a structure $\M$ such that $\tw(\M)\leq k$ can be
reduced to finite satisfiability of a sentence $\alpha\land\beta$, $\alpha\in\MS(\mcC_{\bounded})$
and $\beta\in C^{2}(\mcC_{\unbounded})$, by a structure $\mfA_{1}$ such that
$\tw(\mfA_{1}|_{\mcC_{\bounded}})\leq k$. Let $\beta$ be the $C^{2}$-sentence 
\[\beta=(\inj_{12}\lor\inj_{21})\land\dom\land\img
\]
where 
\begin{itemize}
 \item $\inj_{12}$ expresses that $B$ is an injective function from $X_{1}$
to $X_{2}$, 
\item $\inj_{21}$ espresses that $B$ is an injective function from
$X_{2}$ to $X_{1}$, 
\item $\dom$ expresses that the domain of $B$ is $W_{\dom}$,
\item and $\img$ expresses that the image of $B$ is $W_{\img}$. 
\end{itemize}

For every $\mcC_{\unbounded}$-structure
$\mfA_{1}$, $|X_{1}^{\mfA_{1}}|<|X_{2}^{\mfA_{1}}|$ iff $W_{\dom}^{\mfA_{1}}=X_{1}^{\mfA_{1}}$
and $X_{2}^{\mfA_{1}}\backslash W_{\img}^{\mfA_{1}}\not=\emptyset$. Let
$\alpha$ be obtained from $\omega$ by substituting every $|X_{1}|<|X_{2}|$
by 
\[\forall x\,(X_{1}(x)\leftrightarrow W_{\dom}(x))\land\exists x\:(\neg W_{\img}(x)\land X_{2}(x))
\]
For any $\mcC_{\bounded}$-structure $\mfA_{0}$ with $\tw(\mfA_{0})\leq k$,
$\mfA_{0}\models\omega$ iff there is an expansion $\mfA_{1}$ of $\mfA_{0}$
such that $\mfA_{1}\models\alpha\land\beta$. The treatment of other
cardinality constraints $\sum_{i=1}^{r}|X_{i}|<\sum_{i=1}^{t}|Y_{i}|$
is similar; it is helpful to assume w.l.o.g.\ that the $X_{i}$ and the $Y_{i}$
are disjoint.

Now we turn to (A).
The unary function $+1$ is the \emph{successor relation} of $\mathbb{N}$ and
interprets the binary relation symbol $\suc$. The binary relation
$<$ is the \emph{natural order relation} on $\mathbb{N}$.
In the proof of (A) we will use the theory of Hintikka sentences as presented in Section~\ref{se:hintikka}
with one caveat, namely that instead of restricting to finite structures, we allow arbitrary structures.
Theorems~\ref{th:Hintikka-sentences} and~\ref{th:smoothness} hold for arbitrary structures.
Theorems~\ref{th:Hintikka-sentences} guarantees the existence of a set 
$\H^\arb_{\mcC,q}$ analogously to $\H_{\mcC,q}$ for arbitrary structures. 
For every $\mcC$-structure $\mfA$, Theorem~\ref{th:Hintikka-sentences} guarantees
the existence of a sentence $\hin^\arb_{\mcC,q}(\mfA)$ of $\H^\arb_{\mcC,q}$
analogously to $\hin_{\mcC,q}(\mfA)$. For the rest of the section, we omit the superscript $\arb$
to simplify notation.

Consider $\rho=\exists X_{1}\exists X_{2}\omega$ for the vocabulary
$\mcC_{\mathbb{N}}$ of $\left\langle \mathbb{N},+1,<\right\rangle $.
Let $\alpha\in\MS(\mcC_{\bounded})$ and $\beta\in C^{2}(\mcC_{\unbounded})$ be as
discussed in the proof of (B) above. 
The following are equivalent:
\begin{enumerate}[1.]
\item $\left\langle \mathbb{N},+1,<\right\rangle \models\rho$
\item There are \emph{finite} unary relations $U_1$ and $U_2$
such that $\left\langle \mathbb{N},+1,<,U_1,U_2\right\rangle \models\omega$. 
$U_1$ and $U_2$ 
interpret $X_1$ and $X_2$ respectively. 
\item There are \emph{finite} unary relations $U_1$ and $U_2$
and an expansion $\mfA$ of $\left\langle \mathbb{N},+1,<,U_1,U_2\right\rangle $
such that $\mfA\models\alpha\land\beta$. $\mfA$ expands $\left\langle \mathbb{N},+1,<,U_1,U_2\right\rangle $
with interpretations of the symbols in $\mcD_\unbounded=\{B,W_{\dom},W_{\img}\}$.
\item $\rho'=\exists X_{1}\exists X_{2}(\alpha\land\beta)$ is satisfiable by an expansion
of $\left\langle \mathbb{N},+1,<\right\rangle$ with interpretations
for the symbols of $\mcD_\unbounded$.
\end{enumerate}

We have \emph{1.\ iff 2.} and \emph{3.\ iff 4.} by the semantics of $\exists X_{1}\exists X_{2}$
in weak monadic second order logic. We have \emph{2.\ iff 3.} similarly to the discussion
of $\alpha$ and $\beta$ in the proof of (A) above. The rest of the proof is devoted to proving that 4. is decidable. 

Observe that by the definition of $\beta$, and in weak monadic second order $X_1$ and $X_2$
are quantified to be finite sets, $B$ is axiomatized to
be a function with finite domain, and $W_\dom$ and $W_\img$ are finite. 
Hence models $\mfA$ of $\rho'$
can be decomposed into a finite part containing $B^{\mfA}$, and an
infinite part isomorphic to an expansion of $\left\langle \mathbb{N},+1,<\right\rangle $
in which the symbols of $\mcD_\unbounded$ as are interpreted as empty sets. We will use
a similar decomposition, but first we want to move from the structure
$\left\langle \mathbb{N},+1,<\right\rangle $ and its expansions to
$\left\langle \mathbb{N},+1\right\rangle $ and its expansions. There
is a translation scheme $\t_<$ such that for every structure
$\mfA = \left\langle \mathbb{N},+1, B^\mfA, W_\dom^\mfA, W_\img^\mfA \right\rangle$, 
$\t_<^{\star}(\left\langle \mfA \right\rangle)=\left\langle \mfA,<\right\rangle $,
where $\left\langle \mfA,<\right\rangle $ is the expansion of $\mfA$ with $<$. 
This is true since $<$ is $\MS$ definable from $+1$. 
We have that $\rho'$ is satisfied by an expansion of $\left\langle \mathbb{N},+1,<\right\rangle $
iff $\t_<^{\sharp}(\rho')$ is satisfied by
the same expansion of $\left\langle \mathbb{N},+1\right\rangle $.

Now we turn to the decomposition of models of $\t_<^{\sharp}(\rho')$
into a finite part containing $B^{\mfA}$, and an infinite part isomorphic
to a $\mcD_\unbounded$-expansion of $\left\langle \mathbb{N},+1\right\rangle$. 
Let $\mcD_{2}\supseteq\mcD_\unbounded$ be the vocabulary of $\t_<^{\sharp}(\rho')$.
For every $\mcD_{2}$-expansion $\P$ of $\left\langle \mathbb{N},+1\right\rangle $
and every $n\in\mathbb{N}$, let $\P_{1,n}$ and $\P_{n,\infty}$
be the substructures of $\P$ generated by $[n]$ and $\mathbb{N}\backslash[n-1]$
respectively. There is a translation scheme $\u$ such that $\u^{\star}(\P_{1,n}\sqcup\P_{n+1,\infty})=\P$
if $B^{\P}\subseteq[n]\times[n]$. $\u$ existentially quantifies
the set $[n]$ (which is the only non-empty finite set closed under
$\suc$ and its inverse), $1$ and $n$ (as the first and last elements
of $[n]$) and $n+1$ (as the only element without a $\suc$-predecessor
except for $1$) and adds the edge $(n,n+1)$ to $\suc$. We have
$\P\models\t_<^{\sharp}(\rho')$ iff $\P_{1,n}\sqcup\P_{n+1,\infty}\models\u^{\sharp}(\t_<^{\sharp}(\rho'))$.
Note that the vocabulary of $\u^{\sharp}(\t_<^{\sharp}(\rho'))$
is $\mcD_{2}$. Let $q$ be the quantifier rank of $\u^{\sharp}(\t_<^{\sharp}(\rho'))$.

The Hintikka sentence $\hin(\P_{n+1,\infty})$ of quantifier rank $q$ of $\P_{n+1,\infty}$ is uniquely defined since 
$\P_{n+1,\infty}$ is isomorphic to the expansion of $\left\langle \mathbb{N},+1\right\rangle $
with empty sets. Moreover, $\hin(\P_{n+1,\infty})$ is computable using that the theory 
of $\left\langle \mathbb{N},+1\right\rangle $ is decidable. 
Hence,
by the smoothness of the disjoint union, for every Hintikka sentence
$\epsilon\in\H_{\mcD_{2},q}$ there is a computable set $E_{\epsilon}\subseteq\H_{\mcD_{2},q}$
such that $\hin(\P_{1,n}\sqcup\P_{n+1,\infty})=\epsilon$ iff $\hin(P_{1,n})\in E_{\epsilon}$.
Then  $\P_{1,n}\sqcup\P_{n+1,\infty}\models\u^{\sharp}(\t_<^{\sharp}(\rho'))$ iff
$\P_{1,n}$ satisfies the sentence $\bigvee_{(\epsilon,\epsilon')}\epsilon'$,
where the $\bigvee_{(\epsilon,\epsilon')}$ ranges over pairs $(\epsilon,\epsilon')$
such that
\begin{enumerate}
\item $\epsilon\in\H_{\mcD_{2},q}$, 
\item $\epsilon\models\u^{\sharp}(\t_<^{\sharp}(\rho'))$, and 
\item $\epsilon'\in E_{\epsilon}$.
\end{enumerate}
Hence, 
$\rho'$ is satisfiable 
by an expansion
of $\left\langle \mathbb{N},+1,<\right\rangle$ 
iff $\bigvee_{(\epsilon,\epsilon')}\epsilon'$ is satisfiable by a finite structure in which $\suc$ is interpreted as a successor relation (i.e., as a simple directed path
on the whole universe). 
Let $\Pdef$ be the weak $\MS$-sentence such that the interpretation
of $\suc$ is a successor relation. By Theorem~\ref{th:cardinality}(B),
it is decidable whether, $\bigvee_{(\epsilon,\epsilon')}\epsilon'\land\Pdef$
is finitely satisfiable using that the class of simple directed paths annotated with unary relations has tree-width $1$. 

\begin{remark}
While we assumed for simplicity in (B) that $X_1$ and $X_2$ range over subsets of the universe, it is not hard to extend the proof to the case that 
$X_1$ and $X_2$ are guarded second order variables which range over subsets of any relation in the structure. This is true
since we can use the translation scheme $\tr$ from Lemma~\ref{lem:translations-main} to obtain the structures of tree-width at most $k$
as the image of $\tr^\star$ of labeled trees; $X_1$ and $X_2$ then  translate naturally to monadic second order variables. 
\end{remark}

\subsection{Proof of Lemma~\ref{lem:translations-main}}\label{ap:lem:MSO-tree-interp-idea}

This appendix is devoted to the proof of Lemma~\ref{lem:translations-main}. 
Appendix~\ref{ap:subse:aligned} introduces 
a tree encoding of structures of tree-width at most $k$ based on {\em aligned $k$-trees}. 
Appendix~\ref{ap:tr_mcC}
introduces a translation scheme such that the image of the induced transduction
on an $\MS$-definable class of labeled binary trees is 
 the class of structures of tree-width at most $k$. This is done based on
 translation schemes introduced in Appendix~\ref{ap:se:steps}. 
Finally, the proof of Lemma~\ref{lem:translations-main} is given in
 Subsection~\ref{ap:lem:translations-main:finally}. 

 Note that the notion of \emph{aligned $k$-trees} is quite similar to the notion of \emph{oriented $k$-trees} defined in 
 Section~\ref{se:background}. In fact, we could have replaced aligned $k$-trees with oriented $k$-trees here, or replaced 
 oriented $k$-trees with aligned $k$-trees throughout the paper. However, it simplifies the proofs
 to have two notions. 

\subsubsection*{Aligned \texorpdfstring{$k$}{k}-tree encodings and (decorated) aligned \texorpdfstring{$k$}{k}-trees}\label{ap:subse:aligned}

An aligned $k$-tree encoding $\mfT$
is a binary tree whose vertices are labeled by certain unary relations. 
All structures of bounded tree-width can be obtained by applying transductions 
to aligned $k$-tree encodings with additional vertex annotations in three steps:\vspace{3pt}

{\footnotesize
\begin{tabular}{llcl}
\ \ \ \bf Step A&\ \  \ \ \ \ \ \ \ \ \ Aligned $k$-tree encoding & $\stackrel{\transLONG}{\longrightarrow}$ & Decorated aligned $k$-tree 
\\
\ \ \ \bf Step B&\  \ \ \ \ \ \ \ \ \ \ Decorated aligned $k$-tree  & $\stackrel{\reduLONG}{\longrightarrow}$ & Aligned $k$-tree 
\\
\ \ \ \bf Step C&\  \ \ \ \ \ \ \ \ \ \ Aligned $k$-tree  & $\stackrel{\labLONG_\mcC}{\longrightarrow}$ & $\mcC$-structure of $\tw\leq k$\tabularnewline
\end{tabular}

}
\vspace{3pt}
$\transLONG$, $\reduLONG$ and $\labLONG_\mcC$ (shorthand $\trans$, $\redu$ and $\lab_\mcC$)
are unary operations on structures given in terms of translation schemes (see Section~\ref{ap:se:steps}).

Two key properties of our encoding:
\begin{enumerate}[(a)]
 \item Using that the number of edges in a $k$-tree is at most $k\cdot(\mbox{number of vertices})$,
our (decorated) aligned $k$-trees have functions $R_1,\ldots,R_k$ rather than an edge relation.
\item The elements of the universe of a structure appear as elements in the aligned $k$-tree, in the decorated aligned $k$-tree,  
and in the aligned $k$-tree encoding. Aligned $k$-tree encodings and decorated aligned $k$-trees have additional auxiliary vertices which are eliminated by
$\reduLONG$. 
\end{enumerate}
Let $\mcV$ be the vocabulary containing the binary relation symbol
$s$, and the unary relation symbols $\rt,\D_{1},\ldots,\D_{k},\D_{\blank}$. 

An {\em\bf aligned $k$-tree encoding} is a $\mcV$-structure $\mfT$ with universe
$T$ such that (i) $(T,s^{\mfT})$ is a directed tree (i.e., an acyclic directed graph
\begin{figure}
 {
\centering
\includegraphics[scale=0.6]{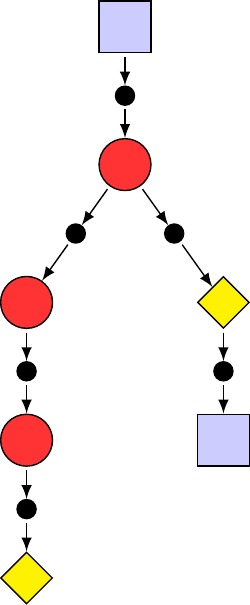}
\ \ \ \ \ \ \ 
\includegraphics[scale=0.6]{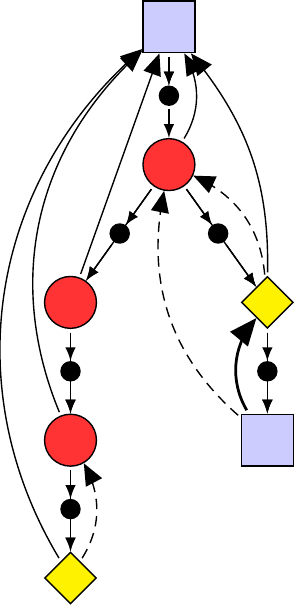}
\ \ \ \ \ \ \ 
\includegraphics[scale=0.6]{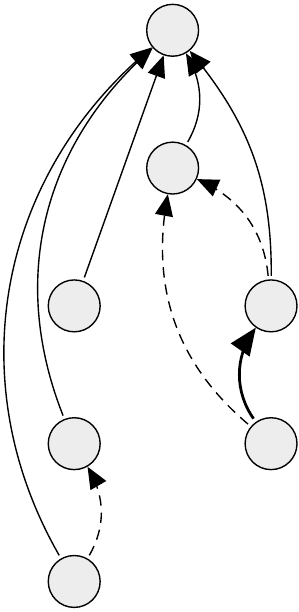}
\ \ \ \ \ \ \ 
\includegraphics[scale=0.6]{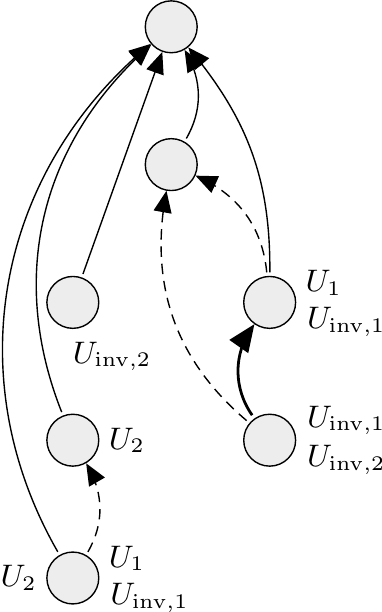}
\ \  \ 
\includegraphics[scale=0.6]{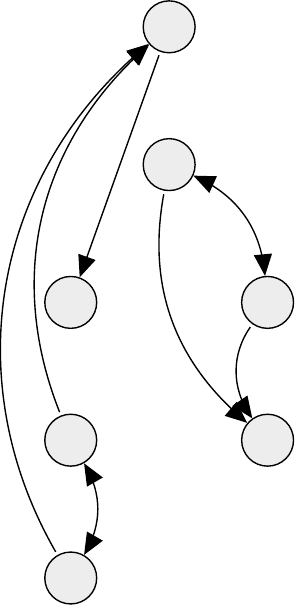}
\\\vspace{-2pt}
}
\ \ (a) \ \ \ \ \ \ \  \ \ \ \ \ \ \ \ \ \ \  
(b)\ \ \ \ \ \   \ \  \ \ \ \ \ \ \  \ \ \ \ \  
(c)  \ \  \ \  \ \ \ \ \ \ \  \ \ \ \    \  \ \ 
(d)\ \ \ \ \ \   \ \  \ \ \ \ \ \ \  \ \ \ \ \ \  
(e) \vspace{-5pt}
\caption{\label{fig-aligned} {\bf (a)} An aligned $3$-tree encoding. {\bf (b)} The decorated aligned $3$-tree obtained from (a) by applying $\trans^\star$. $R_1$, $R_2$ and $R_3$
are denoted by the dashed arrows, the solid arrows and the (single) thick arrow, respectively. 
{\bf (c)} The aligned $3$-tree obtained from (b) by applying $\redu^\star$. 
{\bf (d)} A aligned $3$-tree $\mfA$ expanding (c) with unary relations annotating its elements. $\mfA$ encodes a directed graph $G=(V,E)$.
The unary relations are $U_{j}^\mfA$ and $U_{\down,j}^\mfA$ for each $j\in \{1,2,3\}$. They encode,
for each element $m$ of $\mfA$, whether the edge $(m,R_j(m))$ respectively $(R_j(m),m)$ belongs to $E$. {\bf (e)}
The directed graph obtained from $\mfA$ in (d) by applying $\lab^\star$. $\gaif(G)$ has tree-width $2$. 
}
\end{figure}
in which all vertices have in-degree $1$ except exactly one.), 
(ii) every vertex in $(T,s^{\mfT})$ has out-degree at most~$2$, 
(iii) $\D_{1}^{\mfT},\ldots,\D_{k}^{\mfT},\D_{\blank}^{\mfT}$ form a partition of
$T$, (iv) $\rt^{\mfT}$ denotes the unary relation containing
only the unique vertex with in-degree $0$ in $(T,s^{\mfT})$, and 
(v) all the children of vertices in $T\backslash \D_\blank^\mfT$ 
belong to $\D_\blank^\mfT$.
Fig.~\ref{fig-aligned}(a) shows an aligned $3$-tree encoding.
The edges represent $s$. The small black circles represent $\D_{\blank}$, 
the larger red circles represent $\D_1$, the squares represent $\D_2$, and the diamonds represent $\D_3$.
\begin{lem}\label{lem:encoding-axiom}
There is a $C^2$-sentence $\enc$ such that for every $\mcV$-structure $\M$ in which $s^\M$ is a binary tree, 
$\model \models \enc$ iff $\model$ is an aligned $k$-tree encoding.
\end{lem}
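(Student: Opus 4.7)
The plan is to take $\enc$ to be a conjunction of three $\tvlc$-formulas, one per defining property (iii)-(v) of an aligned $k$-tree encoding. The hypothesis that $s^\M$ is a binary tree already delivers properties (i) (the directed-tree condition) and (ii) (out-degree at most $2$), so there is nothing to axiomatize for those.

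For (iii), the condition that $\D_1,\ldots,\D_k,\D_\blank$ partition the universe, I would use the standard covering and pairwise-disjointness formulas
\[
\forall x.\, \bigvee_{D \in \{\D_1,\ldots,\D_k,\D_\blank\}} D(x)
\quad \land \quad
\forall x.\, \bigwedge_{D \neq D'} \neg\bigl(D(x) \land D'(x)\bigr),
\]
which use only one variable. For (iv), the root is the unique $s$-source, so I would take
\[
\forall x.\, \rt(x) \leftrightarrow \neg \exists y.\, s(y,x);
\]
uniqueness of the source comes for free from clause (i), so no further axiom is needed. For (v), that children of non-blank vertices are blank, I would use
\[
\forall x, y.\, \bigl(\neg \D_\blank(x) \land s(x,y)\bigr) \to \D_\blank(y).
\]
Each of these formulas uses at most two variables and no genuine counting quantifier, so their conjunction is a $\tvlc(\mcV)$-sentence.

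Both directions of the biconditional in the lemma then follow by simply unfolding: a $\mcV$-structure $\M$ in which $s^\M$ is a binary tree satisfies this conjunction iff the interpretations of $\rt, \D_1, \ldots, \D_k, \D_\blank$ stand in precisely the relationships to $s^\M$ demanded by clauses (iii)-(v). I do not expect any substantive obstacle; the only point worth a moment's thought is that the root axiom, which contains an alternation of quantifiers, still fits in the two-variable fragment -- which it does, since the subformula $s(y,x)$ mentions only $x$ and $y$. Everything else is a mechanical transcription of the definition.
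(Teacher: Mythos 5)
Your proposal is correct and follows essentially the same route as the paper: the paper likewise sets $\enc = \xi_{iii}\land\xi_{iv}\land\xi_{v}$, observes that clauses (i) and (ii) come for free from $s^\M$ being a binary tree, and uses the same partition, root-characterization, and blank-child formulas (with $\forall y\,\neg s(y,x)$ in place of your equivalent $\neg\exists y\, s(y,x)$). No gaps.
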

\begin{proof}
Being an aligned $k$-tree encoding is expressible in $C^{2}$ for structures in which
$s$ is interpreted as a binary tree. Below $\xi_{X}$ refers to
requirement X in the definition of an aligned $k$-tree encoding.
$\xi_{i}$ and $\xi_{ii}$ are already taken care of since $s$
is a binary tree.
$\enc$ is given as follows:
\begin{eqnarray*}
\enc & = & \xi_{iii}\land\xi_{iv}\land\xi_{v}\\
\xi_{iii} & = & \forall x\left(\D_{\blank}(x)\lor\bigvee_{i=1}^{k}\D_{i}(x)\right)\land\\
 &  & \forall x\left(\bigvee_{i=1}^{k}\D_{i}(x)\leftrightarrow\neg \D_{\blank}(x)\right)\land\\
 &  & \forall x\left(\bigwedge_{i\not=j}\left(\neg \D_{i}(x)\lor\neg \D_{j}(x)\right)\right)\\
\mathrm{\xi_{iv}} & = & \forall x\left(\rt(x)\leftrightarrow\forall y\neg s(y,x)\right)\\
\xi_{v} & = & \forall x\forall y\left(s(x,y)\land\neg \D_{\blank}(x)\to \D_{\blank}(y)\right)
\end{eqnarray*}
\end{proof}

A {\em\bf decorated aligned $k$-tree} $\mfA$ is an expansion of an aligned $k$-tree encoding $\mfT$
to $\signatureOrientedKTree \cup \mcR$. For every $j\in[k]$, $R_{j}^{\mfA}$ contains all pairs
$(v,u)\in(T\backslash (\D_{j}^\mfT\cup \D_{\blank}^\mfT))\times \D_{j}^{\mfT}$
such that there is a directed path $P$ from $u$ to $v$ in $(T,s^{\mfT})$
which does not intersect with $\D_{j}^{\mfT}$ except on $u$, i.e.\ $\D_{j}^{\mfT}\cap(P\backslash\{u\})=\emptyset$;
moreover, for every $u$, if $R_j^\mfA$ does not contain any pair $(u,v)$, then $R_j^{\mfA}$ contains additionally
the self loop $(u,u)$. 
Observe that the relation $R_{j}^{\mfA}$ is a total function.
Fig.~\ref{fig-aligned}(b) shows the decorated aligned $3$-tree obtained from (a). 
The new edges represent $R_1$, $R_2$, and $R_3$: dashed edges represent $R_1$, 
solid edges represent $R_2$, and the single thick edge (from a square to a diamond) represents $R_3$. 

A {\em\bf aligned $k$-tree\footnote{
A small but important note is that the definition of aligned $k$-trees
used throughout this appendix deviates slightly from
the one given in Section~\ref{se:background} and used elsewhere in the paper. 

}}
is the $\mcR$-reduct of the substructure 
of a decorated aligned $k$-tree $\mfA_0$
generated by $\D_1^{\mfA_0}\cup\cdots\cup \D_k^{\mfA_0}$. 
Fig.~\ref{fig-aligned}(c) shows the aligned $3$-tree obtained from (b).
The vertices of (c) are not labeled by any unary relation.

\begin{lem}\label{lem:o-k-tree}~ 
\begin{enumerate}[(1)]
\item For every aligned $k$-tree $\mfA$, $\gaif(\mfA)$ is a partial
$(k-1)$-tree. 
\item For every partial $(k-1)$-tree $G$ there is an aligned $k$-tree
$\mfA$ such that $G$ is a subgraph of $\gaif(\mfA)$. 

\end{enumerate}
\end{lem}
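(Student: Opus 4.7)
The plan is to prove (1) by exhibiting a tree decomposition of $\gaif(\mfA)$ of width at most $k-1$, and (2) by induction on the construction process of an enclosing $(k-1)$-tree containing $G$.

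For (1), I will use the underlying binary tree $T$ of the aligned $k$-tree encoding as the decomposition tree, indexing bags by \emph{all} vertices of $T$. For each $v \in \D_1 \cup \cdots \cup \D_k$, set $B_v = \{v\} \cup \{R_1(v), \ldots, R_k(v)\}$; because $v \in \D_i$ forces $R_i(v) = v$, this bag has at most $k$ elements, giving width at most $k-1$. For a blank vertex $b \in \D_\blank$, I extend the nearest-ancestor definition of $R_j$ in the natural way (the nearest $\D_j$-ancestor of $b$ in $T$, when one exists) and take $B_b = \{R_j(b) : j \in [k]\}$, again of size at most $k$. Vertex and edge coverage are immediate from $v \in B_v$ and the fact that each Gaifman edge arises from some $R_j(u) = u'$ and hence lies in $B_u$. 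For the connectedness axiom, fix $u \in \D_j$; then $\{w \in T : u \in B_w\}$ is exactly the subtree of $T$ consisting of $u$ together with all of its descendants $w$ whose path from $u$ meets no other $\D_j$-vertex, which is connected in $T$ by the very definition of ``nearest $\D_j$-ancestor''.

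For (2), pick a $(k-1)$-tree $G'$ containing $G$ as a subgraph and induct on the construction of $G'$. The base $k$-clique on $\{v_1, \ldots, v_k\}$ is realized by the chain encoding $v_1, b_1, v_2, b_2, \ldots, b_{k-1}, v_k$ with $v_i \in \D_i$ and $b_i \in \D_\blank$; a direct computation gives $R_i(v_j) = v_i$ for all $i < j$, so every edge $\{v_i, v_j\}$ appears in $\gaif(\mfA)$. For the inductive step, suppose $G'$ is obtained from $G'_0$ by adding a vertex $w$ adjacent to a $(k-1)$-clique $C = \{u_1, \ldots, u_{k-1}\}$ of $G'_0$, and let $\mfA_0$ be the aligned $k$-tree realizing $G'_0$ given by the induction hypothesis. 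Because $R_j(u) \in \D_j$, no two elements of the same class can be adjacent in $\gaif(\mfA_0)$, so the $u_i$ occupy $k-1$ distinct classes $\D_{j_1}, \ldots, \D_{j_{k-1}}$, and their pairwise $R$-edges force them to lie on a single root-to-leaf chain in the encoding tree. I place $w$ in the unique missing class $\D_j$ as a descendant of the deepest $u \in C$, attaching it via a fresh $\D_\blank$-intermediary (and, if necessary, growing the encoding with additional $\D_\blank$-vertices in order to respect the out-degree-$2$ constraint and property~(v)). Since no $\D_{j_i}$-vertex is introduced strictly between $u_i$ and $w$, each $u_i$ remains the nearest $\D_{j_i}$-ancestor of $w$, so $R_{j_i}(w) = u_i$ and the required Gaifman edges $\{u_i, w\}$ appear; ancestor relations among previously placed vertices are untouched, so already-established Gaifman edges persist.

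The main obstacle I expect is the connectedness verification for (1), which must carefully track the interaction between $\D_\blank$-contraction and $\D_j$-boundaries along paths in $T$, together with the inductive bookkeeping for (2) that simultaneously respects the binary-tree constraint, the ``children-of-non-blank-vertices-are-blank'' constraint~(v), and the chain structure of previously realized cliques. Neither step is conceptually difficult, but each requires careful case analysis.
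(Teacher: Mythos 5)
Your proof of part (1) is correct but takes a genuinely different route from the paper. The paper never writes down a tree decomposition: it defines \emph{perfect} aligned $k$-tree encodings (those beginning with a root chain $w_1,p_1,\ldots,p_{k-1},w_k$ meeting every class $\D_i$), proves by induction on leaf removal that the Gaifman graph of a perfect aligned $k$-tree is literally a $(k-1)$-tree, and then obtains (1) by embedding an arbitrary encoding below such a chain and deleting the added vertices afterwards. You instead exhibit a width-$(k-1)$ tree decomposition directly on the encoding tree $T$, with bags $B_v=\{R_1(v),\ldots,R_k(v)\}$ (extended by the nearest-$\D_j$-ancestor rule on $\D_{\blank}$-vertices so that the occurrence set of each $u\in\D_j$ is exactly $u$ together with its descendants up to the next $\D_j$-vertex, which is connected). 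Your argument is shorter and avoids the auxiliary notion of perfection, at the price of invoking the equivalence between tree-width at most $k-1$ and being a partial $(k-1)$-tree --- legitimate, since the paper states that equivalence as well known; the paper's detour buys it the chain-and-clique structure that it then reuses in part (2). Your part (2) is essentially the paper's proof: same base chain, same observation that the $k-1$ clique vertices occupy distinct classes and lie on one root-to-leaf path, same placement of the new vertex in the unique missing class below the deepest clique vertex via a fresh $\D_{\blank}$-intermediary (indeed you are more careful than the paper about property (v) there). The one step you should make explicit is why no $\D_{j_i}$-vertex lies strictly between $u_i$ and the deepest clique vertex $u_\ell$ on the \emph{existing} path: this is forced because the clique edge $\{u_i,u_\ell\}$ can only be realized as $R_{j_i}(u_\ell)=u_i$ (the reverse orientation is impossible since $u_\ell$ is not an ancestor of $u_i$), which by definition of $R_{j_i}$ means $u_i$ is already the nearest $\D_{j_i}$-ancestor of $u_\ell$; your phrase that no such vertex is ``introduced'' covers only the newly added vertices, not the pre-existing portion of the path.
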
 

\begin{proof}
For the proof of Lemma~\ref{lem:o-k-tree}(1) we define
an aligned $k$-tree encoding $\mfT$ to be {\em\bf perfect} if (1) there
are $w_{1},p_1,w_2,\ldots,p_{k-1},w_{k}\in T$ such that $w_{1},p_1,w_2,\ldots,p_{k-1},w_{k}$ is a directed path, (2) $\rt^{\mfT}=\{w_{1}\}$,
 (3) $w_{i}\in \D_{i}^{\mfT_{0}}$ of all $i\in[k]$, (4) $p_i\in \D_{\blank}$ for all $i\in[k-1]$, and (5) the out-degree
of each of vertex in the path is exactly $1$.
If $\mfT$ is a perfect aligned $k$-tree encoding, then
$\redu^\star(\trans^\star(\mfT))$ is a
perfect  aligned $k$-tree. We have:

\vspace{5pt}
($\diamond$) For every perfect  aligned $k$-tree $\mfA$, $\gaif(\mfA)$ is
a $(k-1)$-tree.
\vspace{5pt}

($\diamond$) is proved by induction on the number of vertices in a
perfect aligned $k$-tree encoding $\mfT$ such that $\redu^\star(\trans^\star(\mfT))=\mfA$.
Lemma~\ref{lem:o-k-tree}(1) follows from ($\diamond$) by building
a perfect  aligned $k$-tree such that $\mfA$ is its substructure.

First we prove:
\begin{quote}
($\diamond$) For every perfect  aligned $k$-tree $\mfA$, $\gaif(\mfA)$ is
a $(k-1)$-tree.
\end{quote}

Let $\mfT$ be a perfect aligned $k$-tree encoding such that $\redu^{\star}(\trans_{}^{\star}(\mfT))=\mfA$.
We prove the claim by induction on the number of vertices in $\mfT$.

\begin{itemize}
\item In the base case, $\mfT$ consists only of the vertices $w_{1},p_1,\ldots,p_{k-1},w_{k}$,
and $\redu^{\star}(\trans_{}^{\star}(\mfT))$ is a $k$-clique, which
is a $(k-1)$-tree.
\item Let $\mfT$ be a perfect aligned $(k-1)$-tree encoding. Let $\mfT_{0}$
be obtained from $\mfT$ by removing a leaf $v$. By the induction hypothesis,
$\gaif(\trans_{}^{\star}(\mfT_{0}))$ is a $(k-1)$-tree. If $v\in \D_{\blank}^{\mfT}$,
then $\redu^{\star}(\trans_{}^{\star}(\mfT_{0}))=\redu^{\star}(\trans_{}^{\star}(\mfT))$.
Otherwise, let $v\in \D_{j}^{\mfT}$. We denote by $B=\{u_{i}:i\in[k]\backslash\{j\}\}$
the set of elements such that $u_{i}\in \D_{i}$ and there is a path
$P_{i}$ from $u_{i}$ to $v$ in $\mfT$ satisfying the following property:
$P_{i}$ does not visit any vertex in $\D_{i}$ except $u_{i}$. Note
that for all $i\in[k]\backslash\{j\}$ $u_{i}$ exists, and hence
$|B|=k-1$, using that $\mfT_{0}$ is perfect. By the choice of $B$
and since $v$ is a leaf, $B$ is the set of neighbors of $v$ in
$\gaif(\redu^{\star}(\trans^{\star}(\mfT)))$. It remains to show that
$B$ is a $(k-1)$-clique to get that $\gaif(\redu^{\star}(\trans_{}^{\star}(\mfT)))$
is a $(k-1)$-tree. Let $i_{1},i_{2}\in[k]\backslash\{j\}$ with $i_{1}\not=i_{2}$.
Since $(T,s^{\mfT})$ is a directed tree, either $P_{i_{1}}\subsetneq P_{i_{2}}$
or $P_{i_{2}}\subsetneq P_{i_{1}}$. W.l.o.g.\ let $P_{i_{1}}\subsetneq P_{i_{2}}$,
so $P_{i_{2}}$ is a path from $u_{i_{2}}$ to $v$ which visits $u_{i_{1}}$.
Let $P_{21}\subsetneq P_{i_{2}}$ be the path from $u_{i_{2}}$ to
$u_{i_{1}}$. By the choice of $B$, there is no vertex in $P_{21}$
which belongs to $\D_{i_{2}}$ except $u_{i_{2}}$. Hence, $(u_{i_{1}},u_{i_{2}})\in R_{i_{2}}^{\redu^{\star}(\trans_{}^{\star}(\mfT))}$,
i.e.\ there is an edge between $u_{i_{1}}$ and $u_{i_{2}}$ in $\gaif(\redu^{\star}(\trans_{}^{\star}(\mfT)))$.
We get that $B$ is a $(k-1)$-clique and $\gaif(\redu^{\star}(\trans_{}^{\star}(\mfT)))$
is a $(k-1)$-tree.
\end{itemize}

Now we are ready to prove Lemma~\ref{lem:o-k-tree}(1) and Lemma~\ref{lem:o-k-tree}(2).

\begin{enumerate}[(1)]
\item Let $\mfT$ be an aligned $k$-tree encoding such that $\redu^{\star}(\trans_{}^{\star}(\mfT))=\mfA$.
Let $\mfT_{0}$ be obtained from the directed path $w_{1},p_1,\ldots,p_{k-1},w_{k}$
with $w_{i}\in \D_{i}^{\mfT_{0}}$ for each $i\in[k]$,
$p_i\in \D_{\blank}^{\mfT_0}$ for each $i\in[k-1]$,
 and $\rt^{\mfT}=\{w_{1}\}$
by attaching $\mfT$ as a child of $w_{k}$. By construction, $\mfT_{0}$
is perfect. By ($\diamond$) we have that $\gaif(\redu^{\star}(\trans_{}^{\star}(\mfT_{0})))$
is a $(k-1)$-tree. Deleting the vertices $w_{1},p_1,\ldots,p_{k-1},w_{k}$ and
the edges incident to them from $\gaif(\redu^{\star}(\trans_{}^{\star}(\mfT_{0})))$
we get $\gaif(\redu^{\star}(\trans^{\star}(\mfT)))$, implying that
$\gaif(\redu^{\star}(\trans_{}^{\star}(\mfT)))$ is a partial $k$-tree.
\item We prove this by induction on the construction of $(k-1)$-trees.

\begin{enumerate}
\item Let $G$ be a $(k-1)$-clique with vertices $w_{1},\ldots,w_{k-1}$.
Let $\mfT_{base}$ be the directed path on the vertices $w_{1},p_1,\ldots,p_{k-2},w_{k-1}$
such that $w_{i}\in \D_{i}^{\mfT_{base}}$ for all $i\in[k-1]$
and $p_i\in \D_{\blank}^{T_{base}}$ for all $i\in[k-2]$. We have
that $\gaif(\redu^{\star}(\trans_{}^{\star}(\mfT_{base})))=G$.
\item Let $G_{0}$ be a $(k-1)$-tree and let $\mfT_{0}$ be 
an aligned $k$-tree encoding such that $\gaif(\redu^{\star}(\trans_{}^{\star}(\mfT_{0})))=G_{0}$.
Let $C=\{u_{1},\ldots,u_{k-1}\}$ be a $(k-1)$-clique in $G_{0}$.
Let $G$ be the graph obtained from $G_{0}$ by adding a new vertex
$v$ as well as edges between $v$ and each of the vertices in $C$.
Since $C$ is a $(k-1)$-clique in $G_{0}$, for every distinct $i_{1},i_{2}\in[k-1]$
there is a directed path in $\mfT_{0}$ between from $u_{i_{1}}$ to
$u_{i_{2}}$ or vice versa. As a consequence, there is a path $P$
in $\mfT_{0}$ from the root to some $u_{\ell}$ such that $u_{1},\ldots,u_{k-1}$
all occur on $P$. For every $t\in[k-1]$ and $i\in[k]$ such that
$u_{t}\in \D_{i}^{\mfT_{0}}$, $u_{t}$ is the last vertex in $P$ to
belong to $\D_{i}^{\mfT_{0}}$. Since $|C|=k-1$, there is $j\in[k]$
such that $\D_{j}^{\mfT_{0}}\cap C=\emptyset$. Since $C$ is a clique
in $\gaif(\redu^{\star}(\trans_{}^{\star}(\mfT_{0})))$, it must be
the case that $|\D_{i}^{\mfT_{0}}\cap C|=1$ for all $i\in[k]\backslash\{j\}$.
If $u_{\ell}$ has two children, let $\mfT$ be a subtree of $\mfT_{0}$
whose root is a child of $u_{\ell}$. Let $\mfT$ be obtained from $\mfT_{0}$
by attaching a new child vertex $v_{\blank}\in \D_{\blank}^{\mfT}$ to
$u_{\ell}$, moving $\mfT$ to be a child of $v_{\blank}$, and adding
$v$ as a child of $v_{\blank}$. If $u_{\ell}$ is a leaf or has
one child in $\mfT_{0}$, let $\mfT$ be obtained from $\mfT_{0}$ by attaching
$v$ as a child of $u_{\ell}$. Setting $v$ to belong to $\D_{j}^{\mfT}$,
we get $\gaif(\redu^{\star}(\trans^{\star}(\mfT)))=G$.
\end{enumerate}
\end{enumerate}

\end{proof}

%
%

\subsection*{The translation schemes \texorpdfstring{$\transLONG$}{interpret}, \texorpdfstring{$\reduLONG$}{undecorate} and \texorpdfstring{$\labLONG_\mcC$}{structurize\_C}}\label{ap:se:steps}

There are translation schemes $\transLONG$
(shorthand $\trans$) for $\signatureOrientedKTree\cup \mcV$ over $\mcV$
and $\reduLONG$ (shorthand $\redu$) for $\mcR$ over $\signatureOrientedKTree\cup \mcV$
which take an aligned $k$-tree encoding to its decorated aligned $k$-tree
respectively a decorated aligned $k$-tree to its aligned $k$-tree. 
The induced transductions $\trans^\star$ and $\redu^\star$
are surjective with respect to the classes of decorated aligned $k$-trees resp.\ aligned $k$-trees.

For every vocabulary $\mcC$ there is
a translation scheme $\labLONG_\mcC$ (shorthand $\lab_\mcC$) which takes 
an aligned $k$-tree whose elements are annotated with unary relations
to a $\mcC$-structure. The induced transduction $\lab_\mcC$ is surjective with respect to
the class of $\mcC$-structures of tree-width less than $k$. The binary relations of such structures are 
encoded inside aligned $k$-trees by
unary relations on the sources of $R_j$ edges, using that the $R_j$ are functions. 
The vocabulary consisting of these new unary relation symbols as well as the unary relation symbols
of $\mcC$ 
is denoted by $\mcN_\mcC$.
$\lab_\mcC$ is a translation scheme for $\mcC$ over $\mcR \cup \mcN_\mcC$. 
Fig.~\ref{fig-aligned}(d) shows an aligned $k$-tree annotated with the unary relations of $\mcN_\mcC$ for $\mcC= \left\langle E \right\rangle$,
where $E$ is a binary relation symbol. Fig.~\ref{fig-aligned}(e) shows the directed graph obtained from (d) by applying $\lab_\mcC$. 

\begin{lem}\label{lem:translations}\label{lem:trans--aaaa}\label{lem:redu--aaaa}\label{lem:lab---aaaa}\ \\ \vspace{-15pt}
\begin{description}
 \item[\bf \ \ \ A.] For every decorated aligned $k$-tree $\mfA$, $\trans^\star(\mfA|_{\mcV}) = \mfA$.
\item[\bf \ \ \ B.]  For every aligned $k$-tree $\mfA$ obtained from a decorated aligned $k$-tree $\mfA_0$,
$\redu^\star(\mfA_0)=\mfA$. 
\item[\bf \ \ \ C.]  
There is a vocabulary $\mcN_\mcC$ consisting of unary relation symbols only, such that for
every $\mcC$-structure $\M$, $\M$ has tree-width at most $k$ iff 
there is an aligned $k$-tree $\mfA_{\mathit{ori}}$ and an expansion $\mfA$ of $\mfA_{\mathit{ori}}$ to $\Xi_\mcC \cup\mcR$
such that $\lab_\mcC^{\star}(\mfA)=\M$.
\end{description}
\end{lem}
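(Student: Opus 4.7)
The plan is to construct the three translation schemes explicitly and then verify the three assertions in turn. For $\trans$, I will take the domain formula $\phi(x) = (x \approx x)$ so that the universe is preserved, let $\psi_C = C$ for each $C \in \mcV$ so that the aligned $k$-tree encoding is copied verbatim, and define $\psi_{R_j}(x,y)$ by the MSO-expressible condition: either $\D_j(y) \wedge \neg \D_j(x) \wedge \neg \D_\blank(x)$ together with the existence of a directed $s$-path from $y$ to $x$ whose internal vertices avoid $\D_j$; or, as a fallback, $x \approx y$ and the first disjunct fails for every possible $y$. The second disjunct corresponds to the ``add a self-loop'' clause in the definition of $R_j^\mfA$ and is expressible because MSO captures reachability avoiding a set. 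Once $\psi_{R_j}$ is in place, Part~A reduces to reading off the definition of $R_j^\mfA$: the formula is a literal translation.

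For $\redu$, I will take $\phi(x) = \bigvee_{i \in [k]} \D_i(x)$ and $\psi_{R_j}(x,y) = R_j(x,y)$, so that $\redu^\star$ restricts to the non-blank elements and preserves the $R_j$. Part~B then follows from the observation that for a decorated aligned $k$-tree the image of every $R_j$ applied to a non-blank element either lies in $\D_j$ or is the element itself, so $R_j$ restricted to the non-blank universe remains a total function and coincides with that of the induced aligned $k$-tree.

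For $\lab_\mcC$, the plan is to let $\Xi_\mcC$ consist of $\unary(\mcC)$ together with fresh unary symbols $U_{C,j}$, $U_{C,\down,j}$ and $U_{C,\self}$ for each $C \in \binary(\mcC)$ and $j \in [k]$, and to define
\[
\psi_C(x,y) \;=\; \bigvee_{j \in [k]} \left( (U_{C,j}(x) \wedge R_j(x,y)) \vee (U_{C,\down,j}(y) \wedge R_j(y,x)) \right) \vee (U_{C,\self}(x) \wedge x \approx y).
\]
The forward direction of Part~C will invoke Lemma~\ref{lem:o-k-tree}(2) to produce an aligned $k$-tree $\mfA_{\mathit{ori}}$ whose Gaifman graph contains $\gaif(\M)$ as a subgraph; each tuple $(u,v) \in C^\M$ can then be recorded by an appropriate annotation via $U_{C,j}$, $U_{C,\down,j}$, or $U_{C,\self}$, so that $\lab_\mcC^\star(\mfA) = \M$ by construction. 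The backward direction will be immediate, since every edge in $\gaif(\lab_\mcC^\star(\mfA))$ lies above an $R_j$-edge, making $\gaif(\lab_\mcC^\star(\mfA))$ a subgraph of $\gaif(\mfA_{\mathit{ori}})$, which is a partial tree of the required width by Lemma~\ref{lem:o-k-tree}(1).

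The main obstacle will be the forward direction of Part~C: we need every edge of $\gaif(\M)$ to be witnessed by at least one $R_j$-edge of $\mfA_{\mathit{ori}}$, which is exactly what Lemma~\ref{lem:o-k-tree}(2) grants, provided the aligned-tree parameter is chosen to match the stated tree-width bound (a small indexing adjustment relative to Section~\ref{se:background}). A secondary subtlety is in Part~A, where the fallback self-loop disjunct of $\psi_{R_j}$ must fire precisely when the reachability disjunct has no witness, so that $R_j$ emerges as a total function as required.
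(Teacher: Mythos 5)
Your proposal is correct and follows essentially the same route as the paper: explicit construction of $\trans$, $\redu$, and $\lab_\mcC$ with (near-identical) defining formulas, with Part~C's two directions resting on Lemma~\ref{lem:o-k-tree}(2) and~(1) respectively. If anything, your version is slightly more careful than the paper's at two points it glosses over --- the self-loop fallback disjunct in $\psi_{R_j}$ for $\trans$, and using only the subgraph relation $\gaif(\M)\subseteq\gaif(\mfA_{\mathit{ori}})$ rather than equality in the forward direction of Part~C --- and you correctly flag the off-by-one indexing between the appendix's aligned $k$-trees and the stated tree-width bound.
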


Lemma~\ref{lem:MSO-tree-interp-idea} follows from Lemma~\ref{lem:translations}, Lemma~\ref{lem:encoding-axiom}, and Lemma~\ref{lem:fundamental}
with $\mcD_\bounded = \mcN_{\mcC_\bounded} \cup \mcV$, 
$\alpha' = \enc \land \trans^{\sharp}(\redu^\sharp(\lab_\mcC^\sharp(\alpha)))$, and
$\beta' = \trans^{\sharp}(\redu^\sharp(\lab_\mcC^\sharp(\beta)))$. 

\subsection*{Proof of Lemma~\ref{lem:translations}(A)}\label{ap:lem:trans--}
$\trans=\left\langle \varphi,\psi_{C}:C\in\signatureOrientedKTree\right\rangle$ is given as follows:
\begin{enumerate}
\item The universe $A$ of $\mfA$ is $T$, i.e.\ $\varphi(x)= (x\approx x) $.
\item The relations from $\mcV$ are copied without change, i.e.\ $\psi_{s}(x,y)=s(x,y)$,
and $\psi_{\D}(x)=\D(x)$ for each $\D\in\{\D_{1},\ldots,\D_{k},\D_{\blank}\}$.
\item
$
\psi_{R_{j}}(x,y) = \varphi(x)\land\neg \D_{j}(x)\land \D_{j}(y)\land\forall z\left(z\not\approx y\to\neg \D_{j}(z)\right)\land\left(\dpath(x,y)\right)
$
where $\dpath(x,y)$ says that there is a subset $P$ of $s$ which
is a directed simple path from $x$ to $y$:
\begin{eqnarray*}
\dpath(x,y) & = & \exists P\subseteq s\,\Big[\forall z\left(in_{\leq1}(P,x)\right)\land in_{0}(P,x)\land\\
 &  & \hphantom{\exists P\subseteq s\,\Big[}\forall z\left(in_{\leq1}(P,y)\right)\land out_{0}(P,y)\land\reach(P,x,y)\Big]\\
\reach(P,x,y) & = & \forall X\subseteq V\Bigg(\left(X(x)\land\neg X(y)\right)\to\\
&&\hphantom{\forall X\subseteq V\Bigg(}\left(\exists z_{1}\exists z_{2}\left(X(z_{1})\land\neg X(z_{2})\land P(z_{1},z_{2})\right)\right)\Bigg)\\
in_{0}(z_{1},P) & = & \forall z_{2}\left(\neg P(z_{2},z_{1})\right)\\
out_{0}(z_{1},P) & = & \forall z_{2}\left(\neg P(z_{1},z_{2})\right)\\
in_{\leq1}(z_{1},P) & = & \exists^{\leq1}z_{2}\left(P(z_{2},z_{1})\right)\\
out_{\leq1}(z_{1},P) & = & \exists^{\leq1}z_{2}\left(P(z_{1},z_{2})\right)
\end{eqnarray*}
$\reach(P,x,y)$ says that $y$ is reachable from $x$ via $P$ edges.
\end{enumerate}

\subsection*{Proof of Lemma~\ref{lem:translations}(B)}\label{ap:lem:redu--}
Let $\redu=\left\langle \varphi,\psi_{R_{1}},\ldots,\psi_{R_{k}}\right\rangle$
be as follows: $\varphi(x)=\neg \D_{\blank}(x)$ and for
every $i\in[k]$, $\psi_{R_{i}}(x,y)=R_{i}(x,y)$.

\subsection*{Proof of Lemma~\ref{lem:translations}(C)}\label{ap:lem:lab}
We prove the following lemma which spells out the two directions of Lemma~\ref{lem:translations}(C):
\begin{lem}\label{lem:lab2}
For every vocabulary $\mcC$, there
is a set $\Xi_{\mcC}$ of unary relation symbols and
a translation scheme $\labLONG_\mcC$ (shorthand $\lab_\mcC$) for $\mcC$ over $\Xi_{\mcC}\cup\mcR$
such that:
\begin{enumerate}[(1)]
\item If $\M$ is a $\mcC$ structure whose Gaifman graph is a partial $(k-1)$-tree,
then there is an expansion $\mfA$ of an aligned $k$-tree to
$\Xi_{\mcC}\cup\mcR$ for which $\lab_\mcC^{\star}(\mfA)=\M$.
\item If $\mfA$ is a $(\Xi_\mcC \cup\mcR)$-structure and $\mfA_{\mathit{red}}=\mfA|_\mcR$ is an aligned $k$-tree,
then $\gaif(\lab_\mcC^{\star}(\mfA))$ is a partial $(k-1)$-tree.
\end{enumerate}
\end{lem}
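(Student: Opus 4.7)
The plan is to introduce the vocabulary $\Xi_\mcC$ of fresh unary relation symbols and to define $\lab_\mcC$ as follows. For each unary $C \in \unary(\mcC)$ include a symbol $U_C$; for each binary $C \in \binary(\mcC)$ and each $j \in [k]$ include two symbols $U_{C,j}$ and $U_{C,j}^{\down}$, intended, at a vertex $x$, to record that the edge $(x,R_j(x))$ respectively $(R_j(x),x)$ belongs to $C^\M$; finally include a domain predicate $U_{\dom}$. Then $\lab_\mcC = \langle \varphi, \psi_C : C \in \mcC \rangle$ is defined by $\varphi(x) = U_{\dom}(x)$, $\psi_C(x) = U_C(x)$ for unary $C$, and, for binary $C$, by the quantifier-free formula
\[
\psi_C(x,y) \;=\; \bigvee_{j \in [k]} \Bigl( \bigl(U_{C,j}(x) \land R_j(x,y)\bigr) \;\lor\; \bigl(U_{C,j}^{\down}(y) \land R_j(y,x)\bigr) \Bigr).
\]

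For direction (1), given $\M$ with $\gaif(\M)$ a partial $(k-1)$-tree, I invoke Lemma~\ref{lem:o-k-tree}(2) to obtain an aligned $k$-tree $\mfA_0$ whose Gaifman graph contains $\gaif(\M)$ as a subgraph; if the universe of $\M$ is too small for the construction, I first extend it by isolated vertices. I then expand $\mfA_0$ to $\mfA$ by setting $U_\dom^\mfA$ to the universe of $\M$, setting $U_C^\mfA = C^\M$ for unary $C$, and, for each binary $C$ and each $(u,v) \in C^\M$, picking some $j \in [k]$ with $R_j^{\mfA_0}(u) = v$ (and adding $u$ to $U_{C,j}^\mfA$) or $R_j^{\mfA_0}(v) = u$ (and adding $v$ to $U_{C,j}^{\down,\mfA}$); such a $j$ exists because $\{u,v\}$ is an edge of $\gaif(\mfA_0)$, while any self-loop $(u,u) \in C^\M$ is handled via the $R_i$-self-loop at $u$ that exists when $u \in \D_i$. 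All unmentioned annotations in $\Xi_\mcC$ are set to false. A direct check that $\psi_C$ then captures exactly the pairs $(u,v) \in C^\M$ with $u,v$ in the universe of $\M$ yields $\lab_\mcC^\star(\mfA) = \M$.

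For direction (2), let $\mfA$ be a $(\Xi_\mcC \cup \mcR)$-structure with $\mfA|_\mcR$ an aligned $k$-tree. Any pair $(u,v) \in C^{\lab_\mcC^\star(\mfA)}$ for binary $C$ must satisfy one of the disjuncts of $\psi_C(u,v)$, so either $R_j^\mfA(u,v)$ or $R_j^\mfA(v,u)$ holds for some $j$; hence $\{u,v\}$ is an edge of $\gaif(\mfA|_\mcR)$ (or, when $u = v$, a self-loop that does not affect the Gaifman graph). It follows that $\gaif(\lab_\mcC^\star(\mfA))$ is a subgraph of $\gaif(\mfA|_\mcR)$, which by Lemma~\ref{lem:o-k-tree}(1) is a partial $(k-1)$-tree; any subgraph of a partial $(k-1)$-tree is itself a partial $(k-1)$-tree.

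The main obstacle I anticipate is the mismatch of universes in direction~(1): Lemma~\ref{lem:o-k-tree}(2) only guarantees that $\gaif(\M)$ is a subgraph of $\gaif(\mfA_0)$, not that the two have the same vertex set, which is why the domain predicate $U_\dom$ (and the quantifier-free $\varphi$) is indispensable for recovering exactly the universe of $\M$. It is equally important that $\psi_C$ be quantifier-free for binary $C$, since the composition of $\lab_\mcC$ with $\redu$ and $\trans$ must preserve quantifier-freeness in order to satisfy Lemma~\ref{lem:translations-main}(b), which is what ultimately permits $C^2$-sentences to be translated to $C^2$-sentences.
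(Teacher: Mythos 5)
Your proposal is correct and follows essentially the same route as the paper: the same unary annotations $U_{C,j}$, $U_{C,\down,j}$ on sources of $R_j$-edges (exploiting functionality of the $R_j$), direction (1) via Lemma~\ref{lem:o-k-tree}(2) and an explicit expansion, and direction (2) via Lemma~\ref{lem:o-k-tree}(1) together with the observation that $\gaif(\lab_\mcC^{\star}(\mfA))$ is a subgraph of $\gaif(\mfA|_\mcR)$. The only deviations are cosmetic and sound: the paper uses dedicated symbols $U_{B,\self}$ with a $\neg(x\approx y)$ guard where you reuse the $R_i$-self-loop present at every $u\in\D_i$, and your domain predicate $U_{\dom}$ (in place of the paper's $\varphi(x)=(x\approx x)$) in fact handles the universe mismatch from Lemma~\ref{lem:o-k-tree}(2) more carefully than the paper does.
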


For every vocabulary $\mcC$, let $\Xi_{\mcC}$
be the vocabulary extending the set $\unary(\mcC)$ of unary relation symbols in $\mcC$ by
fresh unary relation symbols
$U_{B,\self}$, $U_{B,j}$, and $U_{B,\down,j}$ for every binary relation symbol $B\in\mcC$
and $j\in[k]$.
Recall $\mcV_\mcC = \mcV\cup\Xi_\mcC$ and $\signatureOrientedKTree_{\mcC}=\mcV_{\mcC}\cup\mcR\cup\mcC$.

Since the $R_i$ are functions, we use a fixed number of unary relation symbols to encode,
for every universe element $m_1$ such that $(m_1,m_2)\in R_j^\M$,
what other relations of $\M$  do $(m_1,m_2)$ and $(m_2,m_1)$ belong to.
The unary relation symbols are of the form $U_{B,j}$ and $U_{B,\down,j}$, where $j\in[k]$
and $B$ is a binary relation symbol.
We use that tuples $(m_1,m_2)$ which occur in any relation of $\M$, occur
also in $R_j^\M$ or in ${(R_j^\M)}^{-1}$. Additional unary relation symbols $U_{B,\self}$ encode self loops $(m,m)$
in $B^\M$.

Let $\lab_\mcC=\left\langle \varphi,\psi_{C}:C\in\mcC\right\rangle $ is
the translation scheme given as follows.
\begin{itemize}
\item $\varphi(x)= (x\approx x)$.
\item For every unary relation symbol $U\in \mcC$, $\psi_U(x)=U(x)$.
\item For every binary relation symbol $B\in\mcC$, $B^{\lab_\mcC^{\star}(\mfT)}$
consists of all pairs $(v,u)$ such that (a) $(v,u)\in R_{j}^{\mfT}$
and $v\in U_{B,j}^{\mfT}$ or (b) $(u,v)\in R_{j}^{\mfT}$ and $u\in U_{B,\down,j}^{\mfT}$.
This is given by
\begin{eqnarray*}
\psi_{B}(x,y)&=&\neg(x\approx y)\land \bigvee_{j\in[k]}\left(\left(R_{j}(x,y)\land U_{B,j}(x)\right)\lor\left(R_{j}(y,x)\land U_{B,\down,j}(y)\right)\right)
\\
&&\lor(x\approx y)\land U_{B,\self}(x)
\end{eqnarray*}

Let $\M$ be a $\mcC$-structure with universe $M$. Let $\mfA_{0}$
be an aligned $k$-tree such that $\gaif(\M)=\gaif(\mfA_{0})$
guaranteed by Lemma~\ref{lem:o-k-tree}. Let $\mfA_{1}$ be the expansion
of $\mfA_{0}$ such that, for every $B\in\mcC$ and $j\in[k]$,
\begin{eqnarray*}
U_{B,j}^{\mfA_{1}} & = & \left\{ m_{1}\mid(m_{1},m_{2})\in R_{j}^{\mfA_{1}}\cap B^{\M}\right\} \\
U_{B,\down,j}^{\mfA_{1}} & = & \left\{ m_{1}\mid(m_{2},m_{1})\in {(R_{j}^{A_{1}})}^{-1}\cap B^{\M}\right\} \\
U_{B,\self}^{\mfA_1} &=& \{m\mid (m,m)\in B^{\M}\}
\end{eqnarray*}
Self loops are encoded by the relations of the form  $U_{B,\self}^{\mfA_1}$.
Consider $a,b\in M$ which are distinct. We divide into cases. 
\begin{itemize}
\item Assume $(a,b)$ is not an edge of $\gaif(\M)$. By $\gaif(\M)=\gaif(\mfA_{1})$,
neither $(a,b)$ nor $(b,a)$ belong to any $R_{j}^{\mfT_{0}}=R_{j}^{\mfT_{1}}$.
Hence, $\mfA_{1}\not\models\psi_{B}(a,b)$, i.e.\ $(a,b)\notin B^{\lab_\mcC^{\star}(\mfA_{1})}$,
for any binary relation symbol $B\in\mcC$.
\item Assume $(a,b)$ is an edge of $\gaif(\M)$. By $\gaif(\M)=\gaif(\mfA_{0})$,
either $(a,b)$ or $(b,a)$ belong to some $R_{j}^{\mfA_{0}}=R_{j}^{\mfA_{1}}$.
We have: $(a,b)\in B^{\M}$ iff exists $j\in[k]$ such that [$(a,b)\in R_{j}^{\mfA_{1}}$
and $a\in U_{B,j}^{\mfA_{1}}$] or [$(b,a)\in R_{j}^{\mfA_{1}}$
and $b\in U_{B,\down,j}^{\mfA_{1}}$] iff $\mfA_{1}\models\psi_{B}(a,b)$
iff $(a,b)\in B^{\lab_\mcC^{\star}(\mfA_{1})}$.
\end{itemize}

In both cases $(a,b)\in B^{\M}$ iff $(a,b)\in B^{\lab_\mcC^{\star}(\mfA_{1})}$.
We get that $\M=\lab_\mcC^{\star}(\mfA_{1})$ and (1) follows.

\end{itemize}

Now we turn to (2). By Lemma~\ref{lem:o-k-tree}(1), the Gaifman graph
of $\mfA$ is a partial $(k-1)$-tree. By definition of $\psi_{B}$,
for every binary $B\in\mcC$, $B^{\lab_\mcC^{\star}(\mfA)}\subseteq\bigcup_{j\in[k]}R_{j}^{\mfA}\cup{(R_{j}^{\mfA})}^{-1}$,
implying that $\gaif(\lab_\mcC^{\star}(\mfA))$ is a partial $(k-1)$-tree.

\subsection*{The translation scheme \texorpdfstring{$\tr_\mcC$: \\}{tr\_C:}  from structures of bounded tree-width to labeled trees}\label{ap:tr_mcC}

Let $\trans_\mcC$ and $\redu_\mcC$ be the translation schemes which extend $\trans$ and $\redu$ 
respectively as follows. The unary relations $C$ of $\mcN_\mcC$ are additionally 
defined under $\trans_\mcC$ and $\redu_\mcC$ to be $\psi_C(x)=C(x)$. Hence, 
$\trans_\mcC$ is a translation scheme for $\mcR\cup \mcV \cup \mcN_\mcC$ over $\mcV \cup \mcN_\mcC$,
$\redu_\mcC$ is a translation scheme for $\mcR \cup \mcN_\mcC$ over $\mcR\cup \mcV \cup \mcN_\mcC$,
and 
$\lab_\mcC$ is a translation scheme for $\mcC$ over $\mcR \cup \mcN_\mcC$. 

Before we define $\tr_\mcC$, we need to introduce the notion of composition of translation schemes. 
Let $\mcC_0$, $\mcC_1$ and $\mcC_2$ be vocabularies. 
Let $\t_i = \left\langle\phi^i, \psi_C^i: C\in \mcC_i \right\rangle$ be a translation scheme for $\mcC_i$ over $\mcC_{i-1}$.
The {\em\bf composition} of $\t_1$ and $\t_2$, denoted $\t_1\circ\t_2$,
is the translation scheme given by $\t = \left\langle\phi, \psi_C: C\in \mcC_i \right\rangle$
such that 
\begin{enumerate}
 \item $\phi(x) = {(\t^1)}^{\sharp}({(\t^2)}^{\sharp}(x\approx x))$
 \item For every $C\in \unary(\mcC)$, $\psi_C(x) = {(\t^1)}^{\sharp}({(\t^2)}^{\sharp}(C(x)))$
 \item For every $C\in \binary(\mcC)$, $\psi_C = {(\t^1)}^{\sharp}({(\t^2)}^{\sharp}(C(x,y)))$
\end{enumerate}

The translation scheme $\t_\mcC$ is the composition of $\trans_\mcC$, $\redu_\mcC$ and $\lab_\mcC$,
i.e.\ $\t_\mcC = \trans_\mcC\circ(\redu_\mcC\circ \lab_\mcC)$. $\t_\mcC$ is a translation scheme for $\mcC$ over $\mcV\cup \mcN_\mcC$.
As a consequence of Lemma~\ref{lem:translations}, we have:
\begin{lem}\label{ap:lem:final-trans}
\begin{description}
There is a vocabulary $\mcN_\mcC$ consisting of unary relation symbols only, such that for
every $\mcC$-structure $\M$, $\M$ has tree-width at most $k$ iff 
there is an $\mcV\cup \mcN_\mcC$-structure  $\mfA$ 
such that $\t_\mcC^{\star}(\mfA)=\M$ and $\mfA$ is an aligned $k$-tree encoding. 
\end{description}
\end{lem}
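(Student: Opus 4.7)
The plan is to chain Lemma~\ref{lem:translations}(A)--(C) together using the fact that $\t_\mcC$ was defined as $\trans_\mcC\circ(\redu_\mcC\circ \lab_\mcC)$. By the fundamental property of translation schemes (Lemma~\ref{lem:fundamental}) combined with the definition of composition, the induced transductions satisfy
\[
\t_\mcC^\star(\mfA) \;=\; \lab_\mcC^\star\bigl(\redu_\mcC^\star(\trans_\mcC^\star(\mfA))\bigr).
\]
The vocabulary $\mcN_\mcC$ will be the $\Xi_\mcC$ from Lemma~\ref{lem:lab2}; by construction it consists only of unary relation symbols, and the augmented schemes $\trans_\mcC$ and $\redu_\mcC$ carry each $C\in\mcN_\mcC$ forward through the clause $\psi_C(x)=C(x)$.

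For the direction $(\Rightarrow)$, suppose $\M$ has tree-width at most $k$. Lemma~\ref{lem:translations}(C) produces an aligned $k$-tree $\mfA_{\mathit{ori}}$ together with an expansion $\mfA_1$ of $\mfA_{\mathit{ori}}$ to $\Xi_\mcC\cup\mcR$ satisfying $\lab_\mcC^\star(\mfA_1)=\M$. By the definition of aligned $k$-tree there is a decorated aligned $k$-tree $\mfA_0$ with $\redu^\star(\mfA_0)=\mfA_{\mathit{ori}}$ (Lemma~\ref{lem:translations}(B)), and applying Lemma~\ref{lem:translations}(A) to $\mfA_0$ yields an aligned $k$-tree encoding $\mfT:=\mfA_0|_{\mcV}$ with $\trans^\star(\mfT)=\mfA_0$. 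I will then define $\mfA$ as the expansion of $\mfT$ to $\mcV\cup\mcN_\mcC$ that copies the $\Xi_\mcC$-labels of $\mfA_1$ on each non-blank element of $\mfT$; the labels of $\D_\blank$-elements may be assigned arbitrarily, since those elements will be discarded by $\redu_\mcC^\star$. Tracing the composition then yields that $\trans_\mcC^\star(\mfA)$ is $\mfA_0$ enriched with the $\mcN_\mcC$-labels of $\mfA$, that $\redu_\mcC^\star$ of this decoration is exactly $\mfA_1$, and therefore $\t_\mcC^\star(\mfA)=\lab_\mcC^\star(\mfA_1)=\M$.

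The direction $(\Leftarrow)$ is routine: given $\mfA$ whose $\mcV$-reduct is an aligned $k$-tree encoding and which satisfies $\t_\mcC^\star(\mfA)=\M$, apply the transductions in sequence to obtain an aligned $k$-tree with $\Xi_\mcC$-annotations $\mfA_1:=\redu_\mcC^\star(\trans_\mcC^\star(\mfA))$, and invoke Lemma~\ref{lem:translations}(C) to conclude that $\M=\lab_\mcC^\star(\mfA_1)$ has tree-width at most $k$. The main obstacle in a fully formal write-up is purely bookkeeping: one must verify that $\trans_\mcC^\star$ and $\redu_\mcC^\star$ transport the $\mcN_\mcC$-annotations faithfully (immediate from the clauses $\psi_C(x)=C(x)$) and that the domain formula $\varphi(x)=\neg\D_\blank(x)$ of $\redu$ correctly eliminates the auxiliary vertices introduced by the tree encoding. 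No essentially new ideas beyond Lemma~\ref{lem:translations} are required.
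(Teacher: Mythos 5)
Your proposal is correct and follows exactly the route the paper intends: the paper states this lemma with no explicit proof beyond the remark that it is ``a consequence of Lemma~\ref{lem:translations},'' and your argument simply spells out that consequence by composing the three induced transductions $\trans_\mcC^\star$, $\redu_\mcC^\star$, $\lab_\mcC^\star$ and invoking parts (A)--(C) for surjectivity in each direction. The bookkeeping you flag (the $\psi_C(x)=C(x)$ clauses transporting the $\mcN_\mcC$-labels, and the domain formula $\neg\D_\blank(x)$ discarding auxiliary vertices) is precisely the content the paper leaves implicit, so nothing is missing.
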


\subsection*{Proof of Lemma~\ref{lem:translations-main}}\label{ap:lem:translations-main:finally}
Let $\mcC_\bounded$ and $\mcC_\unbounded$ be vocabularies
such that $\mcC_\bounded\cap \mcC_\unbounded$ only contains unary relation symbols. 
Let $\mcD_\bounded = \mcV\cup \mcN_{\mcC_\bounded}$. 
We define $\tr$ which extends $\t_{\mcC_\bounded}$ as follows: For every $C\in \mcC_\unbounded\backslash \mcC_\bounded$, let $\psi_C(x)=C(x)$ or $\psi_C(x,y)=C(x,y)$. 
I.e.\ $\tr$ copies the relations of $\mcC_\unbounded\backslash \mcC_\bounded$ from the source structure to the target structure without change (except
taking the projection to the universe of the target structure imposed by the formula $\phi(x)$). Let $\mathit{dom} = \enc$ from Lemma~\ref{lem:encoding-axiom}. 
\begin{enumerate}[(a)]
 \item $\phi$ is quantifier-free since $\t_\mcC$ is obtained from the composition
 of translation schemes in which $\phi$ is quantifier-free.
 \item $\psi_C$ is quantifier-free for every $C\in \mcC_\unbounded\backslash \mcC_\bounded$ by the definition in this subsection. 
 $\psi_C$ is quantifier-free for every $C\in \unary(\mcC_\bounded)$ since we have in fact $\psi_C(x)=C(x)$; this is true 
 because it is true for each of $\trans_\mcC$, $\redu_\mcC$ and $\lab_\mcC$. 
 \item For every $C\in \mcC_\bounded$, $\psi_C$ is defined in the translation scheme $\tr_{\mcC_\bounded}$. Since $\tr_{\mcC_\bounded}$
 is a translation scheme for $\mcC_\bounded$ over $\mcV\cup \mcN_{\mcC_\bounded} = \mcD_\bounded$, $\psi_C \in \MS(\mcD_\bounded)$. 
 \item This item follows from Lemma~\ref{ap:lem:final-trans} and using that $\mathit{dom}=\enc$ defines the class of aligned $k$-tree encodings
 on structures $\mfA$ in which $s^\mfA$ is a binary tree. 
\end{enumerate}

\subsection{Proof of  Lemma~\ref{lem:HintikkaTree}}\label{ap:lem:HT}
For a leaf $b$, $\hin_q(\mfT_b)$
depends only on the unary relations which $b$ satisfies. By Theorem~\ref{th:smoothness}, for a vertex $b$
with one child $b_0$ (two children $b_0$, $b_1$), $\hin_q(\mfT_b)$ depends only on the unary relations which $b$ satisfies
and on $\hin_q(\mfT_{b_0})$ (on $\hin_q(\mfT_{b_0})$ and $\hin_q(\mfT_{b_1})$).

Let
\[
\psihin_{\mcC,q}^{\hin}=\part\land\allleaves\land\allinternal_{1}\land\allinternal_{2}
\]
where $\part$ says that $\{C_{\epsilon}:\,\epsilon\in\H_{\mcC,q}\}$
partition the universe, and $\allleaves$, $\allinternal_{1}$, and $\allinternal_{2}$
define the $C_{\epsilon}$ for the leaves respectively the internal
vertices of $\mfT_{0}$ with one or two children.

We give $\part$, $\allleaves$, $\allinternal_1$ and $\allinternal_2$ below.
There are $C^2$ formulas $\leaf(x)$, $\internal_1(x)$, and $\internal_2(x)$, which express that $x$ is a leaf, has one child,
or has two children, respectively.
Let
\[
\part = \forall x \left(\left(\bigvee_{\epsilon\in\H_{\mcC,q}} C_\epsilon(x)\right) \land \left(\bigwedge_{\epsilon_1\not=\epsilon_2\in\H_{\mcC,q}} (\neg C_{\epsilon_1}(x) \lor
\neg C_{\epsilon_2}(x))\right)\right) \,.
\]

For $\mcU\subseteq \unary(\mcC)$,
 let $\mfO_\mcU$ be a $\mcC$-structure with universe $O$ of size $1$ satisfying that
$U^{\mfO_\mcU}=\emptyset$ iff $U\in\mcU$, and $\rt^{\mfO_\mcU} = O$.
Let
\begin{eqnarray*}
\allinternal_1 &=& \forall x\left(\internal_1(x)\to
\bigwedge_{\mcU,\epsilon_1,\epsilon_2}\left(\left(
\thisver_\mcU(x)
\land \child_{\epsilon_2}(x)
\right)\to C_{\epsilon_{1}}(x)\right)\right)\\
 \thisver_\mcU(x)&=&\bigwedge_{U\in\unary(\mcC)} \neg^{(U\in\mcU)}U(x)\\
\child_{\epsilon_2}(y)&=&\exists y\, s(x,y) \land C_{\epsilon_{2}}(y)
\end{eqnarray*}
  $\bigwedge_{\mcU,\epsilon_1,\epsilon_2}$ ranges over $\epsilon_1,\epsilon_2\in\H_{\mcC,q}$ and $\mcU\subseteq \unary(\mcC)$
such that for every structure $\mfA\in\K_\rt$ with
$\hin(\mfA)=\epsilon_2$,
$\hin(\mfO_\mcU \Op \mfA)=\epsilon_1$,
The notation $\neg^{X} Y$ stands for $Y$ if $X$ is false, and for
$\neg Y$ otherwise.

We define
\[\allleaves = \forall x\left(\leaf(x)\to
\bigwedge_{\mcU,\epsilon_1}
\thisver_\mcU(x)\to C_{\epsilon_{1}}(x)\right)
\]
 $\bigwedge_{\mcU,\epsilon_1}$ ranges over $\epsilon_1 \in \H_{\mcC,q}$ and $\mcU\subseteq \unary(\mcC)$, where
$\hin(\mfO_\mcU)=\epsilon_1$.

Finally, $\allinternal_2$ is defined as follows:
\begin{eqnarray*}
\allinternal_2 &=& \forall x (\internal_2(x)\to (\intdist(x) \land \intsame))\\
\intdist(x) &=&
\bigwedge_{\mcU,\epsilon_1,\epsilon_2,\epsilon_3}\left(\left(
\thisver_\mcU(x)
\land \child_{\epsilon_2}(x)\land \child_{\epsilon_3}(x)
\right)\to C_{\epsilon_{1}}(x)\right)\\
\intsame(x) &=&
\bigwedge_{\mcU,\epsilon_1,\epsilon_2}\left(\left(
\thisver_\mcU(x)
\land \twochildren_{\epsilon_2}(x)
\right)\to C_{\epsilon_{1}}(x)\right)\\
\twochildren_{\epsilon_2}(x) &=& \exists^{=2} y\,(s(x,y)\land C_{\epsilon_2}(y))
\end{eqnarray*}
$\bigwedge_{\mcU,\epsilon_1,\epsilon_2,\epsilon_3}$ in $\intdist(x)$ ranges over $\epsilon_1,\epsilon_2,\epsilon_3\in\H_{\mcC,q}$ and $\mcU\subseteq \unary(\mcC)$
such that $\epsilon_2\not=\epsilon_3$, and for every two structure $\mfA,\mfB\in\K_\rt$ with
$\hin(\mfA)=\epsilon_2$ and $\hin(\mfB)=\epsilon_3$,
$\hin((\mfO_\mcU \Op \mfA)\Op \mfB)=\epsilon_1$.
$\bigwedge_{\mcU,\epsilon_1,\epsilon_2}$ in $\intsame(x)$  ranges over $\epsilon_1,\epsilon_2\in\H_{\mcC,q}$ and $\mcU\subseteq \unary(\mcC)$
such that for every structure $\mfA\in\K_\rt$ with
$\hin(\mfA)=\epsilon_2$,
$\hin(\mfO_\mcU \Op \mfA)=\epsilon_1$.

By definition,
$\psihin_{\mcC,q}^{\hin}\in C^2$.
$\mfT_{0}$ is a $\mcC$-structure
in which $s$ is interpreted as a binary tree.
Let $\mfT_1$ be the expansion of $\mfT_0$ such that, for every element $b$ of the universe $T_1=T_0$ of $\mfT_1$,
$b\in C_{\hin(\mfT_b)}^{T_1}$, where $\mfT_b$ is the subtree of $\mfT_0$ rooted at $b$.
In particular, for the root $r$ of $\mfT_0$,
$r\in C_{\hin(\mfT_r)}^{T_1}=C_{\hin(\mfT_0)}^{T_1}$.
We have $\mfT_1 \models \psihin_{\mcC,q}^{\hin}$, and hence (i) holds.
Since $\mfT_1$ is the only expansion of $\mfT_0$ which satisfies  $\psihin_{\mcC,q}^{\hin}$, (ii) holds.

\subsection{Proof of Lemma~\ref{lem:star-type-equivalence}}\label{ap:lem:star-type-equivalence}

We assume $\twoTypeOf(\model_1) = \twoTypeOf(\model_2)$.
Let $u_1,v_1 \in \modelUniverse$ be two elements and let $\twoType = \twoTypeOf^{\model_1}(u_1,v_1)$.
Then there are two elements $u_2,v_2 \in \modelUniverse$ such that $\twoType = \twoTypeOf^{\model_2}(u_2,v_2)$ because of $\twoTypeOf(\model_1) = \twoTypeOf(\model_2)$.
Thus, $\model_1 \models \chi(u_1,v_1)$ implies $\model_2 \models \chi (u_2,v_2)$.
Because $u_1,v_1$ have been chosen arbitrarily, we get that $\model_1 \models \phi$ implies $\model_2 \models \phi$.
The other direction is symmetric.

%
%

\subsection{Proof of Lemma~\ref{lem:chromatic-models}}\label{ap:lem:chromatic-models}

\begin{proof}
Let $z = 2|\messageAlphabet|^2+1$.
We set $\colors(\messageAlphabet) = \{A_i \mid 1 \le i \le z\}$, where the $A_i$ are fresh unary relation symbols.

Let $\model$ be a $\signatureCTwo$-structure with universe $\modelUniverse$.
By Lemma~\ref{lem:coloring}, there is a proper $z$-coloring of $\messageGraph^\model_\messageAlphabet$.
We expand $\model$ to a $(\signatureCTwo \cup \colors(\messageAlphabet))$-structure $\modelAlt$ as follows:
for all elements $u \in \modelUniverse$ we have $A_i^\modelAlt(u)$ iff $u$ has color $i$ in the proper $z$-coloring of $\messageGraph^\model_\messageAlphabet$.
Clearly, $\modelAlt$ is a chromatic $(\signatureCTwo \cup \colors(\messageAlphabet))$-structure.
\end{proof}

\subsection{Proof of Lemma~\ref{lem:coloring}}\label{ap:lem:coloring}
We proceed by induction on the number of vertices $n = |V|$.
The case $n=1$ is trivial.
Assume that $n > 1$ and that the lemma holds for $n-1$.
We have $\sum_{v \in V} \degree(v) = 2 \cdot \sum_{v \in V} \outDegree(v) \le 2nk$.
Thus there is a vertex $v \in V$ with $\degree(v) \le 2k$.
By the induction hypothesis, the graph $G{-}v$ has a proper $2k+1$-coloring.
Because of $\degree(v) \le 2k$ the proper $2k+1$-coloring of $G{-}v$ can be extended to a proper $2k+1$-coloring of $G$.

\subsection{Proof of Lemma~\ref{lem:type-agreement}}\label{ap:lem:type-agreement}

Let $v \in \modelUniverse$ be an element with $\twoTypeOf^{\modelAlt|_\signatureMain}(u,v) = \twoType$.
We get $\modelAlt \models P_{\twoType}(u)$ because of $\modelAlt \models \definitionFormula_\unbounded$.
Because of $\oneTypeOf^{\modelAlt}(u) = \oneTypeOf^{\modelAlt'}(u)$ we have $\modelAlt' \models P_{\twoType}(u)$.
Because of $\modelAlt' \models \definitionFormula_\bounded$ we have $\twoType = \twoTypeOf^{\modelAlt'|_\signatureMain}(u,w)$ for some $w \in \modelUniverse$.
$\rank^i_u(\modelAlt,\modelAlt') = 0$ implies that $\modelAlt' \models \gaifmannRelation_i(u,v)$.
Because of $\modelAlt' \models \functionFormula$ we get $v = w$.
In the same way one can show that $\twoTypeOf^{\modelAlt'|_\signatureMain}(u,v) = \twoType$ implies $\twoTypeOf^{\modelAlt|_\signatureMain}(u,v) = \twoType$.

\subsection{Proof of Lemma~\ref{lem:no-deviation}}\label{ap:lem:no-deviation}

We show $\modelAlt|_\signatureMSO = \modelAlt'|_\signatureMSO$.
Let $u,v \in \modelUniverse$ with $\modelAlt \models \intersectionRelation(u,v)$ for some $\intersectionRelation \in \binary(\signatureMSO)$.
Because of $\modelAlt \models \gaifmannFormula$ we have $\modelAlt \models \gaifmannRelation_i(u,v)$ or $\modelAlt \models \gaifmannRelation_i(v,u)$ for some $i \in [k]$.
Because of $\rank(\modelAlt,\modelAlt') = 0$ we have $\rank^i_u(\modelAlt,\modelAlt') = 0$.
By Lemma~\ref{lem:type-agreement} we have
$\twoTypeOf^{\modelAlt|_\signatureMain}(u,v) = \twoTypeOf^{\modelAlt'|_\signatureMain}(u,v)$.
Thus, $\modelAlt' \models \intersectionRelation(u,v)$.
In the same way one can show that $\modelAlt' \models \intersectionRelation(u,v)$ implies $\modelAlt \models \intersectionRelation(u,v)$.
Thus, $\modelAlt|_\signatureMSO = \modelAlt'|_\signatureMSO$.

We show $\gaif(\modelAlt|_\signatureMSO) = \gaif(\modelAlt')$:
We have that $\gaif(\modelAlt') = \gaif(\modelAlt'|_\signatureOrientedKTree)$ because of $\modelAlt' \models \gaifmannFormula$ and $\modelAlt' \models \containmentFormula$.
We have $\gaif(\modelAlt|_\signatureMSO) = \gaif(\modelAlt|_\signatureOrientedKTree)$ because of $\modelAlt \models \gaifmannFormula$.
We have $\gaif(\modelAlt'|_\signatureOrientedKTree)$ = $\gaif(\modelAlt|_\signatureOrientedKTree)$ because of $\rank(\modelAlt,\modelAlt') = 0$.
Thus, the claim follows.

\subsection{The induced translation \texorpdfstring{$\t^\sharp$}{ }}\label{ap:induced-trans}
We spell out the inductive definition of the induced translation.

Let $\mcC_0$ and $\mcC_1$ be vocabularies.
Given a translation scheme $\t = <\phi, \psi_C: C\in \mcC_0>$ for $\mcC_0$ over $\mcC_1$
we define the induced translation $\t^\sharp$
to be a function from $\MS(\mcC_0)$-formulas to $\mcC_1$-formulas inductively as follows:
\begin{enumerate}
\item
For $C \in \unary(\mcC_0)$ or for monadic second order variables $C$, and for $\theta=C(x)$, we put
\[
\t^\sharp(\theta) = \psi_C(x)\land\phi(x)
\]
\item
For $C \in \binary(\mcC_0)$ and $\theta=C(x,y)$, we put
\[
\t^\sharp(\theta) = \psi_C(x,y)\land\phi(x)\land\phi(y)
\]
\item
For $x \approx y$, we put
\[
\t^\sharp(\theta) = x \approx \land\phi(x)\land\phi(y)
\]
\item For the Boolean connectives the translation distributes, i.e.\
\begin{itemize}
\item if $\theta=\theta_1\lor \theta_2$ then
\[
 t^\sharp(\theta)= (\t^\sharp(\theta_1)\lor \t^\sharp(\theta_2))
\]
\item if $\theta=\neg \theta_1$ then
\[
 t^\sharp(\theta)= \neg \t^\sharp(\theta_1)
\]
\end{itemize}
\item For the existential quantifiers, we relativize to $\phi$:\\
If $\theta = \exists y \, \theta_1$, we put
\[
\t^\sharp(\theta) = \exists y\,( \phi (y) \land \t^\sharp(\theta_1))
\]
If $\theta = \exists  U\, \theta_1$, we put
\[
\t^\sharp(\theta) = \exists U\, (\t^\sharp(\theta_1)\land \forall y\, U(y)\to \phi(y))
\]
\end{enumerate}
 We have somewhat simplified the presentation in~\cite[Definition 2.3]{ar:MakowskyTARSKI} to fit our setting. 

\end{document}